\newtheorem{definition}{Definition}
\newtheorem{proposition}{Proposition}
\numberwithin{equation}{section} 
\def\eprint#1{\texttt{arXiv:#1}}
\title{Airy structures for semisimple Lie algebras}
\author{Leszek Hadasz, Błażej Ruba}
\affil{\textit{Institute of Physics, Jagiellonian University}}
\affil{\textit{prof. Łojasiewicza 11, 30-348 Kraków, Poland}}
\date{\today}
\begin{document}
\frenchspacing

\maketitle

\begin{abstract}
We give a complete classification of Airy structures for finite-dimensional simple Lie algebras over $\mathbb C$, and to some extent also over $\mathbb R$, up to isomorphisms and gauge transformations. The result is that the only algebras of this type which admit any Airy structures are $\mathfrak{sl}_2$, $\mathfrak{sp}_4$ and $\mathfrak{sp}_{10}$. Among these, each admits exactly two non-equivalent Airy structures. Our~methods apply directly also to semisimple Lie algebras. In this case it~turns out that the number of non-equivalent Airy structures is countably infinite. We have derived a number of interesting properties of these Airy structures and constructed many examples. Techniques used to~derive our results may be described, broadly speaking, as~an~application of~representation theory in semiclassical analysis.
\end{abstract}

\tableofcontents

\section{Introduction}
Quantum Airy structure is a set of differential operators of the form\footnote{Repeated indices are always summed over.}
\begin{equation}
\label{Airy:definition}
L_i = \hbar \partial_i - \frac{1}{2} A_{ijk} x^j x^k - \hbar B_{ij}^k x^j \partial_k - \frac{\hbar^2}{2} C_i^{jk} \partial_j \partial_k - \hbar D_i, \quad i \in \{ 1, ..., n \},
\end{equation}
spanning a Lie algebra $\mathfrak{g}$ with structure constants $f_{ij}^k:$
\begin{equation}
\hbar^{-1}[L_i, L_j ] = f_{ij}^k L_k.
\end{equation}
Airy structures were introduced in \cite{Kontsevich} as a reformulation and generalization of~a~system of recursive equations, referred to as the Chekhov-Eynard-Orantin topological recursion \cite{CE,EO,EO2}.
Formulated originally in the language of matrix motels, the CEO topological recursion
can be rephrased more abstractly as a~procedure which assigns invariants to spectral curves, i.e.~Riemann surfaces equipped with certain additional geometric structre \cite{EO,EO2}. This turned out to be useful in the study of Hurwitz numbers \cite{Bouchard:2007hi,BSLM,Eynard:2009xe,Do:2012udk,Alexandrov:2018ncq}, computation of~Gromov-Witten invariants \cite{EO3}, in knot theory \cite{Dijkgraaf:2010ur,Borot:2012cw}, integrable systems \cite{TR_Integrable1, TR_Integrable2} and topological quantum field theories \cite{TR_TQFT}. Furthermore it is connected with the subject of quantum curves \cite{Gukov:2011qp,Mulase:2012tm,Bouchard:2016obz}.

It is thus conceivable that results concerning Airy structures (and their supersymmetric generalizations \cite{Bouchard:2019uhx},
related to supereigenvalue models and the corresponding topological recursion
\cite{Bouchard:2018anp,Ciosmak:2017ofd,Ciosmak:2016wpx,Ciosmak:2017omd,Osuga:2019uqc}) may
find applications in some of the subjects listed above.

Every quantum Airy structure admits \cite{Kontsevich, Analyticity} a unique ``free energy'' $F$, which is a series in $\hbar$ and $x^i$ satisfying differential equations
\begin{equation}
L_i \cdot e^{\hbar^{-1} F}=0
\end{equation}
and initial conditions $F(0, \hbar) = \partial_i F(0,0) = \partial_i \partial_j F(0,0)=0$. Thus the corresponding partition function $Z = e^{\hbar^{-1}F}$ may be viewed as WKB wave function\footnote{In certain cases this wave function really is the partition function of some system.} of a quantum system whose symmetry is generated by hamiltonians $L_i$. 

%

One important question about any class of mathematical objects is the classification problem, which asks for a complete list of all (up to suitably defined equivalence) objects satisfying the pertinent axioms. It is unlikely that such list of~all Airy structures could ever be obtained. However, one may still hope to~classify some special classes of Airy structures. In \cite{ABCD} study of this problem was initiated for Airy structures for which the Lie algebra $\mathfrak g$ is~finite-dimensional and simple. Somewhat surprisingly, only one example was found.~This suggests that assumption of simplicity imposes very strong con\-straints. Indeed, in this work we~provide a solution to this classification problem. It~turns out that there exist precisely six inequivalent Airy structures, two for each of $\mathfrak{sl}_2$, $\mathfrak{sp}_4$ and $\mathfrak{sp}_{10}$. We~construct these Airy structures explicitly. More detailed summary is given in the Subsection \ref{ss:summary}. Methods developed in order to obtain these results are of~independent interest, because they apply also to the case of~semisimple $\mathfrak g$. In~this more general case classification program is not finished, but~significant progress has been made in this direction.

The main ideas and methods applied to constrain and construct Airy structures can be summarized as follows. 
Since for the class of Airy structures under consideration there exists an unambiguous procedure of quantization, no~information is lost by working with the classical hamiltonians instead of~directly with quantum operators (\ref{Airy:definition}).  
As explained in \cite{Kontsevich}, any such classical Airy structure may be obtained by expressing the moment map $\ell$ of a hamiltonian action of $\mathfrak g$ on some affine space of dimension $2 \dim \mathfrak g$ in standard coordinates centered at a regular point of $\ell^{-1}(0)$. For semisimple Lie algebras all affine representations are actually linear and completely classified. 
The~part which is not known, to the best of our knowledge, is for which representations the locus $\ell^{-1}(0)$, also called the~characteristic variety, has any regular points.
To answer this question we use the fact that the set of regular points of $\ell^{-1}(0)$ is a cone with a~locally transitive action of a complex Lie group with Lie algebra $\mathfrak g$. 
This implies that for any of its points $\Omega$ one may find a unique element $J$ of the algebra $\mathfrak g$ which is tangent to the ray of $\Omega$. It is possible to describe many properties of $J$, including its spectrum. Having obtained that, we proceed to the classification. Instead of~looking directly for $\Omega$, we find all possible forms of $J$. Once some admissible $J$ is~found, element $\Omega$ is obtained by solving the eigenvalue equation $J \Omega = \Omega$.


The paper is organized as follows. In Section \ref{Preliminaries} we recall some basic facts about Airy structures. It contains no new results, but it introduces the language used afterwards. In Section \ref{sec:semisimple_general} we discuss properties of the characteristic variety and of~elements $\Omega$ and $J$.~These are the main conceptual ingredients of the classification program. Section \ref{sec:automorphisms} concerns automorphisms and real forms of Airy structures. In~Section \ref{sec:simple_class} we perform explicit calculations, which culminate in the promised list of~Airy structures for simple Lie algebras.~Examples of application of~our~formalism to~semisimple Lie algebras are presented in Section \ref{sec:ss_examples}. We~summarize by mentioning possible future directions in Section \ref{sec:conclusions}. For~convenience of the reader we collect some background material in appendices. Appenix \ref{app:cohom} introduces in~an~elementary way Lie algebra cohomology groups, which are used throughout the text. In~the~Appendix \ref{app:ss_regular} we~recall the notions of~semisimple and regular elements of~a~semisimple Lie algebra. Appendix \ref{app:invariant_poly} contains a brief discussion of invariant polynomials on semisimple Lie algebras. We~find relations between invariant polynomials of various types for the~Lie~algebra $\mathfrak{sp}_{10}$, which is used to find the~element $J$ in this case.

\section{Preliminaries}\label{Preliminaries}

 In general it is necessary to~impose additional finitness conditions on the tensors $A,B,C,D$ and $f$ appearing in (\ref{Airy:definition}). These~are automatically satisfied if $ \dim \mathfrak g$ is~finite, which we assume from now~on. We~work over the field~$\mathbb C$, but our results are relevant also for Airy structures over~$\mathbb R$. Indeed, every real Airy structure admits a~natural complexification. Furthermore, we discuss the concept of real forms of Airy structures in Section \ref{sec:automorphisms}.

The classical limit of a quantum Airy structure is the set of hamiltonians
\begin{equation}
\ell_i = y_i - \frac{1}{2} A_{ijk} x^j x^k - B_{ij}^k x^j y_k - \frac{1}{2} C_{i}^{jk} y_j y_k.
\label{eq:cl_hamiltonians}
\end{equation}
They satisfy relations
\begin{equation}
\{ \ell_i, \ell_j \} = f_{ij}^k \ell_k
\label{eq:cl_commutator}
\end{equation}
with respect to the Poisson bracket defined by
\begin{equation}
\{ f, g \} = \frac{\partial f}{\partial y_i} \frac{\partial g}{\partial x^i} - \frac{\partial f}{\partial x^i} \frac{\partial g}{\partial y_i}.
\end{equation}
Classical Airy structure may be defined as a set of hamiltonians of the form (\ref{eq:cl_hamiltonians}) subject to relations (\ref{eq:cl_commutator}). Every classical Airy structure may be quantized by putting $D_i = \frac{1}{2} B_{ij}^j + \delta_i$ with any $\delta$ satisfying $f_{ij}^k \delta_k=0$. Thus the set of quantizations of a given classical Airy structure may be identified with the vector space $H^0(\mathfrak g, \mathfrak g^*)$. Choice $\delta=0$ corresponds to Weyl quantization. This description reduces the classification of Airy structures for a given Lie algebra to the study of their classical versions.

Classical Airy structures have a transparent geometric interpretation. Consider the~common zero locus of hamiltonians $\ell_i$,
\begin{equation}
\Sigma = \{ (x,y) \in \mathbb C^{n} \times \mathbb C^n | \ell_1(x,y)=...=\ell_n(x,y)=0 \}
\end{equation}
and its Zariski open subset
\begin{equation}
\Sigma_s = \{ q \in \Sigma | \left. d \ell_1 \wedge ... \wedge d \ell_n \right|_{q} \neq 0 \}.
\end{equation}
Postulated form of $\ell_i$ implies that the origin belongs to $\Sigma_s$. Conversely, given a~set of at most quadratic hamiltonians on $\mathbb C^{2n}$ satisfying (\ref{eq:cl_commutator}), define $\Sigma$ and $\Sigma_s$ as~above. Then $\Sigma_s$ is a Lagrangian submanifold. For any $\Omega \in \Sigma_s$ one can choose a symplectic affine coordinate chart centered at~$\Omega$ in~which $\ell_i$ take the form (\ref{eq:cl_hamiltonians}). Coordinate systems with desired properties are in one-to-one correspondence with Lagrangian complements of $T_{\Omega} \Sigma_s$ in $\mathbb C^{2n}$. Hamiltonians corresponding to different complements are related by a~change of coordinates
\begin{equation}
y_i \mapsto y_i, \quad \quad x^i \mapsto x^i + s^{ij} y_j,
\label{eq:gauge_trans}
\end{equation}
where $s$ is a symmetric matrix. Maps of this form are called gauge transformations. There exists an analogous notion for quantum Airy structures \cite{Kontsevich}. Transformation law for the associated partition functions is also known \cite{ABCD,Bouchard:2019uhx}. One should not fall under the impression that choice of the Lagrangian complement is completely irrelevant: some choices lead to much simpler partition functions, and transforming the partition function to other gauges is not trivial. This description allows one to~define classical Airy structures in a~way not referring to coordinates. We~introduce also the concept of an Airy data, which can be thought of as equivalence classes of~Airy structures up to gauge transformations.





\begin{definition} \label{def:cl_Airy}
Classical Airy structure is a quadruple $(\mathfrak g, W, \Omega , V)$, where
\begin{enumerate}
\item $\mathfrak g$ is a Lie algebra of dimension $n$.
\item $W$ is an affine space of dimension $2n$ equipped with a translation invariant symplectic form $\omega$ and a $\mathfrak g$-action $\xi : \mathfrak g \to \Gamma(TW)$ on $W$ which is hamiltonian with at most quadratic moment map $\ell : W \to \mathfrak g^*$.
\item $\Omega$ is an element of $\Sigma_s = \{ q \in \ell^{-1}(0) | \left. d \ell \right|_q : T_q W \to \mathfrak g^* \text{ has rank } n \} $.
\item $V$ is a Lagrangian complement of $T_{\Omega} \Sigma_s$ in $W$.
\end{enumerate}
Triple $(\mathfrak g, W, \Omega)$ satisfying points $1-3$ of the above list is called an Airy datum. We~say that Airy datum is nontrivial if $n>0$.
\end{definition}

We will frequently use the following characterization of the set $\Sigma_s$.

\begin{proposition} \label{prop:Omega_criterion}
Let $(\mathfrak g, W)$ be as in the Definition \ref{def:cl_Airy}. Suppose that $\mathfrak g = [\mathfrak g, \mathfrak g]$. Then
\begin{equation}
\Sigma_s = \{ \Omega \in W |  \{ \left. \xi(T) \right|_{\Omega} \}_{T \in \mathfrak g} \text{ is Lagrangian}  \}.
\end{equation}
\end{proposition}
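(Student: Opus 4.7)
The plan is to translate both sides of the claimed equality into conditions on the subspace $\xi(\mathfrak g)|_\Omega \subset T_\Omega W$, using the defining property of the moment map to relate $d\ell$, $\xi$, and the symplectic form $\omega$.

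First I would record the two identities that do all the work. The moment map condition $\iota_{\xi(T)}\omega = d\langle\ell,T\rangle$ immediately gives
\begin{equation}
\langle d\ell|_\Omega(v), T\rangle = \omega\bigl(\xi(T)|_\Omega,\, v\bigr) \quad \text{for all } v \in T_\Omega W,\ T \in \mathfrak g,
\end{equation}
so that the kernel of $d\ell|_\Omega$ is the symplectic orthogonal of $\xi(\mathfrak g)|_\Omega$. Non‑degeneracy of $\omega$ then yields $\operatorname{rank} d\ell|_\Omega = \dim \xi(\mathfrak g)|_\Omega$, so the rank‑$n$ condition in the definition of $\Sigma_s$ is equivalent to $\dim \xi(\mathfrak g)|_\Omega = n$. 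The second identity is the standard consequence that $\xi$ is a Poisson map up to sign:
\begin{equation}
\omega\bigl(\xi(T),\xi(S)\bigr) = \bigl\{\langle\ell,T\rangle,\langle\ell,S\rangle\bigr\} = \langle\ell,[T,S]\rangle,
\end{equation}
evaluated pointwise.

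Given these, the inclusion $\Sigma_s \subseteq \{\Omega : \xi(\mathfrak g)|_\Omega \text{ Lagrangian}\}$ is free: if $\Omega \in \Sigma_s$, then $\ell(\Omega)=0$ makes the right side of the second identity vanish, so $\xi(\mathfrak g)|_\Omega$ is isotropic, and the rank condition makes its dimension equal to $n$, hence it is Lagrangian. This step does not need perfectness.

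For the reverse inclusion I would use perfectness. Suppose $\xi(\mathfrak g)|_\Omega$ is Lagrangian. Isotropy together with the second identity gives $\langle \ell(\Omega),[T,S]\rangle = 0$ for all $T,S\in\mathfrak g$, i.e.\ $\ell(\Omega)$ annihilates $[\mathfrak g,\mathfrak g]$; the hypothesis $\mathfrak g = [\mathfrak g,\mathfrak g]$ then forces $\ell(\Omega) = 0$. Combined with $\dim \xi(\mathfrak g)|_\Omega = n$ and the first identity, this places $\Omega$ in $\Sigma_s$. The only point that requires any attention is recognising that perfectness is used in exactly this single direction, to promote isotropy of $\xi(\mathfrak g)|_\Omega$ from a statement about pairings with commutators to the membership $\Omega \in \ell^{-1}(0)$; everything else is a direct unpacking of the moment‑map axioms.
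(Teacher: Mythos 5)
Your proposal is correct and follows essentially the same route as the paper's proof: the reverse inclusion uses the identity $\omega(\xi(T),\xi(S))|_{\Omega}=\ell(\Omega)([T,S])$ together with perfectness to conclude $\ell(\Omega)=0$, and the rank condition is matched to the dimension of $\xi(\mathfrak g)|_{\Omega}$ via the moment-map relation. The only difference is that you spell out explicitly the forward inclusion and the rank computation, which the paper delegates to its preceding discussion.
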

\begin{proof}
$\subseteq$ : Follows from the preceding discussion (for any $\mathfrak g$). \\
$\supseteq$ : If $\{ \left. \xi(T) \right|_{\Omega} \}_{T \in \mathfrak g}$ is Lagrangian, $\left. d \ell \right|_{\Omega}$ has rank $n$. Furthermore, we have
\begin{equation}
\ell(\Omega)([T,S]) = \left. \omega( \xi(T) ,  \xi(S) ) \right|_{\Omega}=0.
\end{equation}
Thus $\ell(\Omega)=\left. \ell(\Omega) \right|_{[\mathfrak g, \mathfrak g]}=0$.
\end{proof}

\begin{definition} \label{def:Airy_iso}
Homomorphism of Airy structures $(\mathfrak g, W, \Omega, V) \to (\mathfrak g', W',\Omega', V')$ is~a~pair $(\phi, f)$, where $\phi : \mathfrak g \to \mathfrak g'$ is a homomorphism of Lie algebras and $f$ is~an~affine Poisson map $W \to W'$, subject to the following conditions:
\begin{enumerate}
\item $\ell  = \phi^t \circ \ell' \circ f$, where $\ell'$ is the moment map of $W'$ and $\phi^t$ is~the transpose of $\phi$.
\item $f(\Omega)=\Omega'$.
\item $\left. df \right|_{\Omega}(V) \subseteq V'$.
\end{enumerate}
If $(\phi, f)$ satisfies only conditions $1$ and $2$, we say that it is a homomorphism of~Airy data. In other words, the following diagram is required to be commutative:
\adjustbox{scale=1.5,center}{%
\begin{tikzcd}[contains/.style = {draw=none,"\in" description,sloped}]
\Omega \arrow[contains]{r} \arrow[mapsto]{d} & W \arrow{r}{\ell} \arrow{d}{f} & \mathfrak g^* \\
\Omega' \arrow[contains]{r} & W' \arrow{r}{\ell'} & \mathfrak g'^* \arrow{u}{\phi^t}
\end{tikzcd}
}
\end{definition}

\begin{proposition} \label{prop:Airy_morph_basic_prop}
If $(\phi,f) : (\mathfrak g, W, \Omega) \to (\mathfrak g', W',\Omega')$ is a homomorphism of Airy data, then $\phi$ and $f$ are surjective. Moreover we have
\begin{equation}
\forall q \in W \ \forall T \in \mathfrak g \ \left. df \right|_q (\xi(T)) = \xi'(\phi(T)),
\label{eq:Airy_morphism_intertwines}
\end{equation}
where $\xi'$ is the $\mathfrak g'$-action on $W'$. In particular $\phi$ is uniquely determined by $f$.
\end{proposition}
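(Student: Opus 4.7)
The plan is to establish (\ref{eq:Airy_morphism_intertwines}) first, then to deduce the surjectivity and uniqueness claims from it together with the rank conditions on $\ell$ and $\ell'$. Condition $1$ of Definition \ref{def:Airy_iso}, paired with $T \in \mathfrak{g}$, reads $\ell^T = f^* \ell'^{\phi(T)}$, where $\ell^T(w) := \ell(w)(T)$. Since $f$ is Poisson, the hamiltonian vector field of any pullback $f^* H'$ is $f$-related to that of $H'$; applied to $H' = \ell'^{\phi(T)}$ this gives exactly $df|_q\,\xi(T)|_q = \xi'(\phi(T))|_{f(q)}$.

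For surjectivity of $f$ I invoke the general fact that a Poisson map between symplectic manifolds is a submersion. Were $df|_p$ not surjective, one could pick $\alpha \in T^*_{f(p)}W' \setminus \{0\}$ annihilating its image and $g' \in C^\infty(W')$ with $dg'|_{f(p)} = \alpha$. Then $d(f^* g')|_p = 0$, so the hamiltonian vector field $X_{f^* g'}$ vanishes at $p$ and $\{f^* g', f^* h'\}_W|_p = 0$ for every $h'$, contradicting the Poisson property as soon as $h'$ is chosen with $\{g', h'\}_{W'}|_{f(p)} \neq 0$ (possible by non-degeneracy of the symplectic form on $W'$). Since $f$ is affine, being a submersion is the same as being surjective. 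For surjectivity of $\phi$ I differentiate condition $1$ at $\Omega$ to get $d\ell|_\Omega = \phi^t \circ d\ell'|_{\Omega'} \circ df|_\Omega$; regularity $\Omega \in \Sigma_s$ makes the left-hand side surject onto $\mathfrak{g}^*$, so $\phi^t : \mathfrak{g}'^* \to \mathfrak{g}^*$ must be surjective, which is equivalent to $\phi$ being injective. Combined with $f$ surjective (giving $2n \ge 2n'$) and $\phi$ injective (giving $n \le n'$), this forces $n = n'$, and then $\phi$ is in fact a linear isomorphism.

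For uniqueness, evaluating (\ref{eq:Airy_morphism_intertwines}) at $q = \Omega$ yields $\xi'(\phi(T))|_{\Omega'} = df|_\Omega\,\xi(T)|_\Omega$. The linear map $S \mapsto \xi'(S)|_{\Omega'}$ from $\mathfrak{g}'$ to $T_{\Omega'}W'$ is injective, because its kernel is the annihilator of $\mathrm{Im}\,d\ell'|_{\Omega'} = \mathfrak{g}'^*$, which is zero by regularity of $\Omega'$. Hence $\phi(T)$ is uniquely recovered from $f$ for every $T$. The only nontrivial input in the whole argument is the submersion lemma for Poisson maps between symplectic manifolds; once that is granted, the remaining steps amount to dimension counting and an unpacking of the moment map identity.
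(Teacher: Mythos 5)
Your proof is correct and follows essentially the same route as the paper's: the intertwining identity comes from unwinding $\ell^T = f^*\ell'^{\phi(T)}$ via the Poisson property, surjectivity of $f$ from the fact that Poisson maps between symplectic manifolds are submersions, and surjectivity of $\phi$ from the chain rule applied to the moment map identity at $\Omega$ together with the rank condition. The only cosmetic differences are that you supply a proof of the submersion lemma (which the paper cites as standard) and that your uniqueness step uses pointwise injectivity of $S \mapsto \xi'(S)|_{\Omega'}$ at the regular point $\Omega'$ where the paper invokes faithfulness of $\xi'$; both are valid.
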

\begin{proof}
$f$ is an affine Poisson map, so it is surjective. In particular $\dim(\mathfrak g) \geq  \dim(\mathfrak g')$. Evaluating the differential of $\ell$ at $\Omega$ we get
\begin{equation}
\left. d \ell \right|_{\Omega} = \phi^t \circ \left. d \ell' \right|_{\Omega'} \circ \left. df \right|_{\Omega}.
\label{eq:Airy_morph_intertwines}
\end{equation}
Since the rank of $\left. d \ell \right|_{\Omega}$ is equal to $\dim (\mathfrak g)$, the rank of $\phi^t$ is at least $\dim(\mathfrak g)$. Now~choose $T \in \mathfrak g$, $q \in W$ and a holomorphic function $g$ on $W'$. Equation $\ell = \phi^t \circ \ell' \circ f$ implies that we have $\xi(T)(f^* g') = f^* \left( \xi'(\phi(T))(g) \right)$, or in other words
\begin{equation}
\left. dg \right|_{f(q)} \left( \xi'(\phi(T))-\left. df \right|_{q} \xi(T) \right)=0.
\end{equation}
Since $g$ was arbitrary, formula (\ref{eq:Airy_morph_intertwines}) follows. The last statement is a consequence of the formula (\ref{eq:Airy_morph_intertwines}) and the fact that the $\mathfrak g'$-action $\xi'$ is faithful.
\end{proof}

In view of the Proposition \ref{prop:Airy_morph_basic_prop}, one could abuse notation and refer to $f$ itself as a~morphism $(\mathfrak g, W , \Omega) \to (\mathfrak g', W', \Omega')$. We choose not to do so, because it is unclear how to rephrase point $1$ in the Definition \ref{def:Airy_iso} without referring to $\phi$.



Let $(\mathfrak g, W, \Omega)$ be an Airy datum and let $G$ be a simply-connected Lie group with Lie algebra $\mathfrak g$. The $\mathfrak g$-action on $W$ exponentiates to an affine action of $G$, which preserves $\ell^{-1}(0)$ and $\Sigma_s$. By the Jacobian criterion, $\Sigma_s$ is a~nonsingular subvariety of~$W$ of dimension $n$. In particular it is a complex manifold with finitely many connected components. The $G$-orbits in $\Sigma_s$ are open in $\Sigma_s$, so~they coincide with the connected components. For any $q \in \Sigma_s$, let~$\mathrm{Stab}(q) = \{ g \in G | g \cdot q = q \}$ and $\mathrm{Orb}(q) = \{ g \cdot q \in \Sigma_s | g \in G \}$. Mapping $G \ni g \mapsto g \cdot q \in \mathrm{Orb}(q)$ is a universal cover, with fiber $\pi_1(\mathrm{Orb}(q)) \cong \mathrm{Stab}(q)$. This means that to get a~rather complete picture of the topology of $\Sigma_s$, it is sufficient to find the connected components and compute the corresponding stabilizers. This task is relevant for the classification program for several reasons. If $(\mathfrak g, W, \Omega')$ is another Airy datum with $\Omega' \in \mathrm{Orb}(\Omega)$, there exists $g \in G$ such that $g \cdot \Omega=\Omega'$. Then $(\mathrm{Ad}_g,g)$ is an~isomorphism $(\mathfrak g, W, \Omega) \to (\mathfrak g, W, \Omega')$. Secondly, group $\mathrm{Stab}(\Omega)$ is an invariant of Airy data. It~allows to distinguish non-isomorphic Airy data with isomorphic $\mathfrak g$ and $W$. Finally, Airy structure and its~partition function may be regarded as a~quantization of~$\mathrm{Orb}(\Omega)$. Thus it is desirable to know what these spaces look like.

We finish this section with an elementary discussion of products of Airy data. We~remark that Airy datum constructed in the Definition \ref{def:Airy_product} is indeed a product (with obvious projection maps) in the category of Airy data. There is also an~analogous notion for classical and quantum Airy structures, but we shall not use it.

\begin{definition} \label{def:Airy_product}
Given two Airy data $(\mathfrak g_i, W_i, \Omega_i)$, $i=1,2$ we define the product
\begin{equation}
(\mathfrak g_1, W_1, \Omega_1) \times (\mathfrak g_2, W_2, \Omega_2) = (\mathfrak g_1 \times \mathfrak g_2, W_1 \times W_2, (\Omega_1,\Omega_2)),
\end{equation}
with the moment map on $W_1 \times W_2$ given by $\ell (q_1,q_2) (T_1, T_2) = \ell_1(q_1)(T_1) + \ell_2(q_2)(T_2)$.
\end{definition}

\begin{definition}
Airy datum is said to be indecomposable if it is nontrivial and not isomorphic to a product of two nontrivial Airy data.
\end{definition}

\begin{proposition}
Every Airy datum is isomorphic to a product of finitely many indecomposable Airy data.
\end{proposition}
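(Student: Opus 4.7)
The plan is to prove the statement by strong induction on $n = \dim \mathfrak g$. The base case $n = 0$ is handled by interpreting the trivial Airy datum $(\{0\}, \mathrm{pt}, \mathrm{pt})$ as the empty product of indecomposable factors, which is consistent with the product construction in Definition \ref{def:Airy_product} viewed as a categorical product. If one prefers not to use the empty product, one can simply take the base case $n = 1$: there the Lie algebra is $1$-dimensional and $W$ is $2$-dimensional, so $(\mathfrak g, W, \Omega)$ cannot split as a product of two nontrivial Airy data (since each factor would require $\dim \mathfrak g_i \geq 1$).

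For the inductive step, suppose $n > 0$ and that every Airy datum whose Lie algebra has dimension less than $n$ decomposes as a finite product of indecomposables. Given $(\mathfrak g, W, \Omega)$, if it is already indecomposable there is nothing to prove. Otherwise, by the definition of indecomposability, there exist nontrivial Airy data $(\mathfrak g_1, W_1, \Omega_1)$ and $(\mathfrak g_2, W_2, \Omega_2)$ such that
\begin{equation*}
(\mathfrak g, W, \Omega) \cong (\mathfrak g_1, W_1, \Omega_1) \times (\mathfrak g_2, W_2, \Omega_2).
\end{equation*}
Since both factors are nontrivial, $\dim \mathfrak g_i \geq 1$, and in particular $\dim \mathfrak g_i < n$ for $i = 1, 2$. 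The inductive hypothesis supplies finite decompositions of each factor into indecomposable Airy data; concatenating these and invoking associativity of the product yields the desired decomposition of $(\mathfrak g, W, \Omega)$.

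The only auxiliary fact used is that the product of Airy data from Definition \ref{def:Airy_product} is associative up to canonical isomorphism, which is transparent since it reduces to the associativity of products of Lie algebras, of affine symplectic spaces, and of the sum of pulled-back moment maps. There is no real obstacle in this argument; the single substantive observation is that, by the requirement of nontriviality of both factors, any decomposition strictly lowers $\dim \mathfrak g$, so the induction terminates in finitely many steps.
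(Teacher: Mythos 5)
Your proof is correct and is exactly the argument the paper intends with its one-line proof ``by induction on dimension'': any nontrivial decomposition strictly lowers $\dim \mathfrak g$, so the process terminates. The details you supply (base case, associativity of the product) are routine and consistent with the paper's framework.
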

\begin{proof}
By induction on dimension.
\end{proof}


%



\section{Semisimple Lie algebras - general facts} \label{sec:semisimple_general}

\begin{definition}
Airy datum $(\mathfrak g, W, \Omega)$ is said to be homogeneous if $W$ is a linear representation of $\mathfrak g$ with purely quadratic moment map.
\end{definition}

\begin{proposition} \label{prop:homog_inv}
Let $(\mathfrak g, W,  \Omega)$ be a homogeneous Airy datum. Suppose that $\phi $ is~a~continuous function $W \to \mathbb C$ invariant under the $G$-action. Then $\phi(\Omega) = \phi(0)$.
\end{proposition}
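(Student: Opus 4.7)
The plan is to exploit the fact that, in a homogeneous Airy datum, both $\ell^{-1}(0)$ and $\Sigma_s$ are cones under the scaling $q \mapsto tq$ on the linear space $W$. Once that cone structure is in place, $G$-invariance plus continuity of $\phi$ will relate $\phi(\Omega)$ to $\phi(0)$ by taking a limit along a ray.

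First I would write $\ell(q)(T) = Q(T)(q,q)$ for a family of symmetric bilinear forms $Q(T)$ on $W$, which is possible because the moment map is purely quadratic. Then $\ell(t\Omega) = t^2 \ell(\Omega) = 0$ for every $t \in \mathbb{C}$, so the whole line $\{t\Omega : t \in \mathbb{C}\}$ is contained in $\ell^{-1}(0)$ and the origin is a limit point of this line. A direct differentiation gives $d\ell|_{t\Omega} = t \cdot d\ell|_{\Omega}$, so for $t \ne 0$ the rank is preserved at $n$, which means $t \Omega \in \Sigma_s$ for all $t \in \mathbb{C}^*$.

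Next I would invoke the statement from the preceding discussion that $G$-orbits in $\Sigma_s$ are open, hence coincide with its connected components. The punctured line $\{t\Omega : t \in \mathbb{C}^*\}$ is connected, lies inside $\Sigma_s$, and contains $\Omega$, so it is contained in the connected component $\mathrm{Orb}(\Omega)$. Because $\phi$ is $G$-invariant, it is constant on this orbit, and in particular $\phi(t\Omega) = \phi(\Omega)$ for every $t \in \mathbb{C}^*$.

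Finally, continuity of $\phi$ lets me pass to the limit $t \to 0$, yielding $\phi(0) = \lim_{t \to 0} \phi(t\Omega) = \phi(\Omega)$. There is no genuinely hard step here; the argument is just a combination of the cone structure of $\Sigma_s$, the identification of orbits with connected components, and continuity. The only place requiring mild care is the bookkeeping of the affine/linear distinction for $W$, but the homogeneity hypothesis explicitly fixes a linear structure with a distinguished origin and a scaling action, which is all that the argument needs.
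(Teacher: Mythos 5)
Your argument is correct and is essentially the paper's own proof: the paper likewise observes that $\lambda\Omega\in\Sigma_s$ for all $\lambda\in\mathbb C^{\times}$, that the path component of $\Omega$ in $\Sigma_s$ coincides with its orbit (so $\phi(\lambda\Omega)=\phi(\Omega)$), and then lets $\lambda\to 0$ using continuity. The only difference is that you spell out the verification that $\lambda\Omega\in\Sigma_s$ (quadratic scaling of $\ell$ and of $d\ell$), which the paper asserts without proof.
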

\begin{proof}
For any $\lambda \in \mathbb C^{\times}$ we have $\lambda \Omega \in \Sigma_s$. In particular $\lambda \Omega$ belongs to the path component of $\Sigma_s$ containing $\Omega$. Since this set coincides with the orbit of $\Omega$, we have $\phi(\Omega)= \phi(\lambda \Omega)$, so $\phi (\Omega) = \lim \limits_{\lambda \to 0} \phi(\lambda \Omega) = \phi(0)$.
\end{proof}

Recall that the null cone $\mathcal N(W)$ of a representation $W$ of a group $G$ is defined as~the common zero locus of all $G$-invariant polynomials on $W$ homogeneous of~positive degree. It~is a basic object of interest in the classical invariant theory. Proposition \ref{prop:homog_inv} implies that for any homogeneous Airy datum $(\mathfrak g, W, \Omega)$ we have an~inclusion $\Sigma_s \subseteq \mathcal N(W)$. There exist classes of representations for which the~structure of the null cone is well understood. This makes Proposition \ref{prop:homog_inv} useful in~constraining homogeneous Airy data. We will now show that assumption of~homogeneity leads to no loss of generality in the case of semisimple Lie algebras.

\begin{proposition} \label{prop:Airy_homog}
Every Airy datum $(\mathfrak g, W, \Omega)$ with semisimple $\mathfrak g$ is isomorphic to~a~homogeneous Airy datum.
\end{proposition}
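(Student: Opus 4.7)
The plan is to translate the origin of $W$ to a $\mathfrak g$-fixed point; this simultaneously linearises the action and kills the inhomogeneous parts of the moment map, yielding the desired homogeneous Airy datum.

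After any choice of base point, the action takes the affine form $\xi(T)|_w = \rho(T)w + \tau(T)$ with $\rho:\mathfrak g \to \mathfrak{sp}(W,\omega)$ a linear symplectic representation and $\tau: \mathfrak g \to W$ a $1$-cocycle. Because $\mathfrak g$ is semisimple and $W$ is finite-dimensional, Whitehead's first lemma gives $H^1(\mathfrak g, W)=0$, so $\tau$ must be a coboundary: $\tau(T) = -\rho(T)\Omega_0$ for some $\Omega_0 \in W$. Then $\xi(T)|_{\Omega_0}=0$ for every $T \in \mathfrak g$, so $\Omega_0$ is a fixed point, and in coordinates centred at $\Omega_0$ the action becomes linear.

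I then want to verify that in these new coordinates the moment map $\tilde\ell(w) = \ell(w+\Omega_0)$ is purely quadratic. The linear part of $\tilde\ell$ at the origin equals $d\ell|_{\Omega_0}$; nondegeneracy of $\omega$ combined with $\xi(T)|_{\Omega_0}=0$ forces $d\ell(T)|_{\Omega_0}=0$, eliminating this term. For the constant term, evaluating $\{\ell(T),\ell(S)\} = f_{TS}^{\,k}\,\ell(T_k)$ at $\Omega_0$ yields $\omega(\xi(T),\xi(S))|_{\Omega_0}=0$ on the left-hand side, so $\ell(\Omega_0)$ annihilates $[\mathfrak g, \mathfrak g]=\mathfrak g$ and must therefore vanish.

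The pair $(\mathrm{id}_{\mathfrak g}, t)$, with $t(w) = w - \Omega_0$, will then be an isomorphism of Airy data from $(\mathfrak g, W, \Omega)$ to the homogeneous Airy datum $(\mathfrak g, W, \Omega-\Omega_0)$ in the sense of Definition~\ref{def:Airy_iso}; regularity of the new base point is automatic because translation is a symplectic bijection intertwining the two moment maps. The one nontrivial input is Whitehead's first lemma, needed to secure the fixed point; all subsequent cancellations are forced by the Hamiltonian relation together with the perfectness $\mathfrak g = [\mathfrak g, \mathfrak g]$ of semisimple $\mathfrak g$.
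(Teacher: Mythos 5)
Your argument is correct and is essentially the paper's first proof: the $1$-cocycle $\tau$ (equivalently, the linear part of the hamiltonians, transported via $\omega$) is killed by Whitehead's first lemma to produce a $\mathfrak g$-fixed point, and the residual constant term of the moment map is eliminated using $\mathfrak g=[\mathfrak g,\mathfrak g]$. The only difference is cosmetic — you phrase the cocycle at the level of the affine vector fields $\xi(T)$ while the paper works with the hamiltonians $\ell_i=y_i+Q_i$ directly.
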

\begin{proof}
This fact was established in \cite{ABCD}. For completness we give two other proofs.

Choose a basis in $\mathfrak g$ and a symplectic affine coordinate system in $W$ centered at~$\Omega$ such that the moment map is represented by polynomials of the form (\ref{eq:cl_hamiltonians}). Decompose $\ell_i = y_i + Q_i$. Then $Q_i$ are homogeneous of degree two and satisfy $\{ Q_i, Q_j \} = f_{ij}^k Q_k$, so they furnish a linear representation of $\mathfrak g$ on the linear span of~$x^i$ and $y_i$. Furthermore relations (\ref{eq:cl_commutator}) imply the cocycle condition
\begin{equation}
\{ Q_i, y_j \} - \{ Q_j, y_i \} = f_{ij}^k y_k.
\end{equation}
By the Whitehead's lemma (discussed in the appendix \ref{app:cohom}), there exist coefficients $a_j$, $b^j$ such that
\begin{equation}
y_i = \{ Q_i, a_j x^j + b^j y_j \}.
\end{equation}
Now consider the affine automorphism of $W$ given by
\begin{equation}
x^i \mapsto x^i + b^i, \quad \quad y_i \mapsto y_i - a_i.
\end{equation}
Generators $\ell_i$ are mapped to $Q_i + \epsilon_i$, where $\epsilon_i$ are some constants. Relations (\ref{eq:cl_commutator}) then imply that $f_{ij}^k\epsilon_k =0$, so $\epsilon=0$. We have found a new affine coordinate chart in~which $\ell_i$ are purely quadratic, so the generated $G$-action is linear.

One may also avoid the use of Lie algebra cohomology\footnote{In fact this argument may be used to give a simple proof of Whitehead's lemma. Admittedly, it relies strongly on structure theory of semisimple Lie groups.}. Instead we choose a~maximal compact subgroup $K \subseteq G$. Using averaging techniques we may find a~fixed point of the action of $K$ on $W$. This fixed point is then also a fixed point for the action of whole $G$, by holomorphicity of the $G$-action. Expressing the moment map in coordinates centered at the fixed point we get vanishing linear term. Then commutation relations imply that the constant term also vanishes.
\end{proof}

From now on, we restrict attention to homogeneous Airy data $(\mathfrak g, W,  \Omega)$ whenever $\mathfrak g$ is semisimple. Our next step is to briefly review facts about symplectic representations of semisimple Lie algebras essential for further discussion.

Fix a semisimple Lie algebra $\mathfrak g$. Choose a Cartan subalgebra $\mathfrak h \subseteq \mathfrak g$ and a set of positive roots $\Delta_+$. We let $\Lambda \subseteq \mathfrak h^*$ be the lattice of integral weights, $C \subseteq \Lambda \otimes_{\mathbb Z} \mathbb R \subseteq \mathfrak h^*$ the (closed) fundamental Weyl chamber and $\Lambda_+ = \Lambda \cap C$ the set of dominant integral weights. We shall also consider the dual lattice $\Lambda^* = \{ H \in \mathfrak h | \forall \mu \in \Lambda \ \mu(H) \in \mathbb Z \} \subseteq \mathfrak h$. In other words, $\Lambda^*$ consists of these elements of $\mathfrak h$ which have integral eigenvalues in all finite-dimensional representation of $\mathfrak g$. For any $\lambda \in \Lambda_+$, denote the highest weight module with highest weight $\lambda$ by $V_{\lambda}$. Each $\lambda \in \Lambda_+$ has one of the following mutually exclusive properties:
\begin{itemize}
\item $V_{\lambda}$ is of real type, i.e. $H^0(\mathfrak g, \mathrm{Sym}^2 V_{\lambda}) \neq 0$,
\item $V_{\lambda}$ is of quaternionic type, i.e. $H^0(\mathfrak g, \Lambda^2 V_{\lambda}) \neq 0$,
\item $V_{\lambda}$ is of complex type, i.e. $V_{\lambda}^* \cong V_{\lambda^*}$ for some (unique) $\lambda^* \in \Lambda \setminus \{ \lambda \}$.
\end{itemize}
Any finite-dimensional representation $W$ of $\mathfrak g$ decomposes as $W = \bigoplus_{\lambda \in \Lambda} V_{\lambda}^{\oplus m_{\lambda}}$, with $m_{\lambda} \in \mathbb N$ vanishing for all but finitely many $\lambda$. It follows from the Schur's lemma that $W$ admits an invariant symplectic form if and only if $m_{\lambda}$ is even for $V_{\lambda}$ of real type and $m_{\lambda} = m_{\lambda^*}$ for $V_{\lambda}$ of complex type. In this situation the symplectic structure on $W$ is unique up to a $\mathfrak g$-module isomorphism. The $\mathfrak g$-action is automatically hamiltonian, with the moment map uniquely determined as $ \ell(q) ( T ) = \frac{1}{2} \omega(T q, q)$ for $q \in W$ and $T \in \mathfrak g$. Vector space $W$ decomposes as a direct sum of its weight spaces, $W = \oplus_{\mu \in \Lambda} W_{\mu}$ with 
\[
W_{\mu} = \{ w \in W | \forall H \in \mathfrak h \ H w = \mu(H) w \}
\] 
Subspace $W_{\mu}$ is~orthogonal to $W_{\nu}$ unless $\mu + \nu =0$. Element $\mu \in \Lambda$ is said to be a~weight of $W$ if~$W_{\mu} \neq 0$. Dimension of $W_{\mu}$ is called the multiplicity of $\mu$ in $W$.

\begin{proposition}
If $\mathfrak g$ is a semisimple Lie algebra, then there are finitely many isomorphism classes of Airy data of the form $(\mathfrak g, W,  \Omega)$.
\end{proposition}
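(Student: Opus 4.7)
The plan is to proceed in three steps: first reduce to the homogeneous case, then bound the number of possible representations $W$, and finally bound the number of $G$-orbits in $\Sigma_s$ for a fixed $W$.

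First I would invoke Proposition \ref{prop:Airy_homog} to restrict attention to homogeneous Airy data $(\mathfrak g, W, \Omega)$, i.e.\ to the case where $W$ is a linear symplectic representation of $\mathfrak g$ and the moment map is purely quadratic. Under this reduction, an isomorphism class of an Airy datum with fixed $\mathfrak g$ is determined by the isomorphism class of $W$ (as a symplectic $\mathfrak g$-module) and the orbit of $\Omega$ under the group of automorphisms of $(\mathfrak g, W)$; in particular, bounding the number of $G$-orbits on $\Sigma_s$ bounds the number of isomorphism classes with fixed $W$.

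Next I would show there are only finitely many admissible $W$. By definition $\dim W = 2 \dim \mathfrak g = 2n$. Since $\mathfrak g$ is semisimple, its finite-dimensional representations decompose into irreducible highest weight modules $V_\lambda$, and the Weyl dimension formula implies that the set $\{ \lambda \in \Lambda_+ \mid \dim V_\lambda \leq 2n \}$ is finite (the dimension grows without bound as $\lambda$ moves away from the origin in the dominant chamber). Hence there are only finitely many isomorphism classes of $2n$-dimensional representations of $\mathfrak g$. By the discussion preceding the proposition, the invariant symplectic form on a symplectic representation is unique up to $\mathfrak g$-module isomorphism, so there are only finitely many possibilities for $W$ as a symplectic $\mathfrak g$-module.

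Finally, for each such $W$, I would count the $G$-orbits in $\Sigma_s$. The zero locus $\Sigma = \ell^{-1}(0)$ is a closed algebraic subvariety of $W$, hence has finitely many irreducible components. By the Jacobian criterion, $\Sigma_s$ is smooth of pure dimension $n$ and Zariski open in $\Sigma$; in particular, each connected component of $\Sigma_s$ is contained in a single irreducible component of $\Sigma$, so $\Sigma_s$ has only finitely many connected components. As recalled in the preliminaries, the $G$-orbits in $\Sigma_s$ are open and therefore coincide with the connected components, so there are only finitely many orbits. Combining this with the previous step yields finiteness of the set of isomorphism classes. I do not foresee a serious obstacle here; the only mildly subtle point is the passage from irreducible components of $\Sigma$ to connected components of $\Sigma_s$, which is standard once smoothness of $\Sigma_s$ is invoked.
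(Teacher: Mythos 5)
Your proposal is correct and follows essentially the same route as the paper's proof, which likewise combines finiteness of the set of isomorphism classes of $2\dim\mathfrak g$-dimensional representations, uniqueness of the invariant symplectic form (and hence of the moment map), and the fact that $\Sigma_s$ has finitely many connected components, each a single $G$-orbit. You merely spell out the details that the paper leaves implicit (the reduction to homogeneous data via Proposition \ref{prop:Airy_homog}, the Weyl dimension formula, and the algebro-geometric reason for the finiteness of the components of $\Sigma_s$).
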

\begin{proof}
The number of isomorphism classes of $W$ is finite, symplectic form $\omega$ is unique up to isomorphism and the moment map $\ell$ is uniquely determined by $W$ and $\omega$. Once $\mathfrak g$, $W$, $\omega$ and $\ell$ are fixed, space $\Sigma_s$ has finitely many connected components.
\end{proof}



We shall say that a symplectic $\mathfrak g$-module $W$ is admissible if there exists an Airy datum of the form $(\mathfrak g, W, \Omega)$. This is true if and only if the corresponding set $\Sigma_s$ is~nonempty. It turns out that many symplectic $\mathfrak g$-modules of dimension $2 \dim \mathfrak g$ are not admissible. To rule them out, we will need to better understand properties of~the~element $\Omega$. The~first steps in this direction are the following statements:

\begin{proposition} \label{prop:triv_adj_admissibility}
Let $(\mathfrak g, W,  \Omega)$ be an Airy datum with $\mathfrak g$ semisimple. Then
\begin{enumerate}
\item $H^0(\mathfrak g,W)=0$,
\item W is not isomorphic to $\mathfrak g \oplus \mathfrak g$.
\end{enumerate}
\end{proposition}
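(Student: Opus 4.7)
The plan is to derive both statements from the Lagrangian characterization of $\Sigma_s$ in Proposition \ref{prop:Omega_criterion}. By Proposition \ref{prop:Airy_homog} I may assume the datum is homogeneous, so that $W$ is a linear representation with $\xi(T)|_q = Tq$ and $\ell(q)(T) = \tfrac{1}{2}\omega(Tq, q)$, and the subspace
\begin{equation*}
L := \{ T\Omega : T \in \mathfrak g \} \subseteq W
\end{equation*}
must be Lagrangian.

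For part 1, I will decompose $W = W_0 \oplus W^{\mathfrak g}$ by Weyl's complete reducibility theorem, with $W_0$ the sum of all non-trivial isotypic components. Any $\mathfrak g$-invariant linear functional on $W$ vanishes on $W_0$, so Proposition \ref{prop:homog_inv} applied to such functionals forces the $W^{\mathfrak g}$-component of $\Omega$ to vanish; hence $\Omega \in W_0$ and therefore $L \subseteq W_0$. Because inequivalent isotypic components of a symplectic $\mathfrak g$-module are symplectically orthogonal, one has $W^{\mathfrak g} \subseteq W_0^\perp \subseteq L^\perp = L \subseteq W_0$, forcing $W^{\mathfrak g} = 0$, i.e.\ $H^0(\mathfrak g, W) = 0$.

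For part 2, I will suppose $W \cong \mathfrak g \oplus \mathfrak g$ and argue by contradiction. Up to $\mathfrak g$-module isomorphism the invariant symplectic form on $\mathfrak g \oplus \mathfrak g$ is unique and may be taken to be $\omega((X_1, Y_1), (X_2, Y_2)) = \kappa(X_1, Y_2) - \kappa(Y_1, X_2)$, with $\kappa$ the Killing form. A short calculation using ad-invariance of $\kappa$ identifies the moment map, via $\mathfrak g^* \cong \mathfrak g$, with the Lie bracket: $\ell(X, Y) = [X, Y]$. For $\Omega = (X, Y) \in \Sigma_s$ the stabilizer $Z(X) \cap Z(Y)$ must be trivial; but once $[X, Y] = 0$, both $X$ and $Y$ themselves lie in this stabilizer, so $X = Y = 0$, whence $L = 0$, contradicting $\dim L = \dim \mathfrak g > 0$.

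The only subtlety I anticipate is justifying the uniqueness of the invariant symplectic form in part 2: this relies on the adjoint representation being of real type, so that invariant skew forms on $\mathfrak g \oplus \mathfrak g$ are parametrized by skew forms on the two-dimensional multiplicity space, all of which are $\mathrm{GL}_2$-equivalent. Once this is clear the remaining computations are mechanical, and part 1 presents no real obstacle beyond choosing the correct ambient space in which to impose the Lagrangian condition.
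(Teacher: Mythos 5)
Your proof is correct, and the two parts deserve separate comments. For part 1 you are doing essentially what the paper does: the paper observes that $H^0(\mathfrak g, W)$ is a symplectic subspace and that $T_\Omega\Sigma_s$ lies in its symplectic orthogonal, which is incompatible with $T_\Omega\Sigma_s$ being Lagrangian unless $H^0(\mathfrak g, W)=0$; your isotypic decomposition $W = W_0\oplus W^{\mathfrak g}$ is a rephrasing of this, and the appeal to Proposition \ref{prop:homog_inv} is actually redundant there, since $T\Omega\in W_0$ for every $T\in\mathfrak g$ no matter where $\Omega$ sits. For part 2 your route is genuinely different. The paper applies Proposition \ref{prop:homog_inv} to the invariant functions $(T,S)\mapsto \mathrm{tr}_{\mathfrak g}\,\mathrm{ad}_{p(T,S)}^k$ for Lie polynomials $p$, concludes that the subalgebra generated by $\Omega_1,\Omega_2$ consists of ad-nilpotent elements, invokes Engel's theorem to produce a nonzero central element of that subalgebra, and notes that this element annihilates $\Omega$. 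You instead normalize the symplectic form explicitly (legitimate, since the paper has already recorded that the invariant symplectic structure is unique up to a $\mathfrak g$-module isomorphism), identify the moment map with the Lie bracket via the Killing form, and read off the contradiction directly from $[X,Y]=0$ together with triviality of $\mathfrak z(X)\cap\mathfrak z(Y)$. This is shorter and avoids both the invariant-theory step and Engel's theorem; the price is the explicit normal form for $\omega$, and your justification of its uniqueness should be phrased for semisimple rather than simple $\mathfrak g$: the adjoint module is then a direct sum of pairwise inequivalent irreducibles of real type, one per simple factor, so there is one two-dimensional multiplicity space per factor and the nonzero scalars are absorbed factor by factor. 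The paper's heavier argument has the compensating virtue of never using the explicit form of $\omega$ or of the moment map.
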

\begin{proof}
$1.$ $H^0(\mathfrak g, W)$ is a symplectic subspace of $W$. Thus if $H^0(\mathfrak g, W) \neq 0$, then $T_{\Omega} \Sigma_s \subseteq H^0(\mathfrak g, W)^{\perp} \subsetneq W$, so $T_{\Omega} \Sigma_s$ can't be Lagrangian. \\
$2.$ Suppose that the contrary is true. We write $\Omega = (\Omega_1, \Omega_2) \in \mathfrak g \oplus \mathfrak g$. Let $p : \mathfrak g \times \mathfrak g \to \mathfrak g$ be a Lie polynomial\footnote{This means that $p(T,S)$ is a linear combination of $T$, $S$, $[T,S]$, $[S,[T,S]]$ etc.} and $k>0$ a natural number and consider the function
\begin{equation}
\phi : \mathfrak g \times \mathfrak g \ni (T,S) \mapsto \mathrm{tr}_{\mathfrak g} \mathrm{ad}_{p(T,S)}^k \in \mathbb C.
\end{equation}
$\phi$ is continuous, $\mathfrak g$-invariant and $\phi(0)=0$. Therefore $\phi(\Omega)=0$ by Proposition~\ref{prop:homog_inv}. Since $p$ and $k$ were arbitrary, we conclude that for any element $T$ of the Lie subalgebra $\mathfrak n \subseteq \mathfrak g$ generated by $\Omega_1, \Omega_2$, traces of all powers of $\mathrm{ad}_T$ vanish. Thus $\mathrm{ad}_T$ is a nilpotent endomorphism of $\mathfrak g$, and hence also of the invariant subspace $\mathfrak n \subseteq \mathfrak g$. Since $T \in \mathfrak n$ was arbitrary, $\mathfrak n$ is a nilpotent Lie algebra. In particular its center $Z(\mathfrak n)$ is nontrivial. Let $T \in Z(\mathfrak n) \setminus \{ 0 \}$. Then $T \Omega =0$, a contradiction.
\end{proof}

\begin{proposition} \label{prop:J_exists}
Let $(\mathfrak g, W,  \Omega)$ be a homogeneous Airy datum. There exists a unique $J \in \mathfrak g$ such that $J \Omega=\Omega$. If $\mathfrak g$ is semisimple, $J$ is a semisimple element of $\mathfrak g$.
\end{proposition}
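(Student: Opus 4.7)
The plan is to prove existence and uniqueness of $J$ first, and then deduce semisimplicity from the abstract Jordan decomposition in $\mathfrak g$ combined with that uniqueness.

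For existence and uniqueness I would first exploit homogeneity to make $\Sigma_s$ invariant under the scaling $q \mapsto \lambda q$: indeed $\ell(\lambda \Omega) = \lambda^2 \ell(\Omega) = 0$, and regularity of $d\ell|_\Omega$ is preserved by a linear automorphism of $W$. In particular the curve $t \mapsto e^t \Omega$ lies in $\Sigma_s$ and its velocity at $t=0$ identifies $\Omega$ with a vector in $T_\Omega \Sigma_s$ (via $T_\Omega W \cong W$). On the other hand, for each $T \in \mathfrak g$ the orbit curve $t \mapsto e^{tT}\Omega$ stays in the $G$-invariant set $\Sigma_s$, so the infinitesimal action map $\sigma : \mathfrak g \to T_\Omega W$, $\sigma(T) = T\Omega$, actually lands in $T_\Omega \Sigma_s$. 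Openness of $G$-orbits in $\Sigma_s$, recorded in the preceding discussion, forces $\mathrm{Stab}(\Omega)$ to be discrete, whence $\sigma$ is injective; a dimension count ($\dim \mathfrak g = n = \dim T_\Omega \Sigma_s$) promotes $\sigma$ to a linear isomorphism onto $T_\Omega \Sigma_s$. Pulling $\Omega$ back through $\sigma$ yields the required $J$, uniquely.

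For the semisimplicity assertion I would apply the abstract Jordan decomposition $J = J_s + J_n$ in the semisimple Lie algebra $\mathfrak g$ (Appendix \ref{app:ss_regular}) and aim to show that $J_s \Omega = \Omega$, so that the already established uniqueness forces $J = J_s$. Both $J_s$ and $J_n$ lie in $\mathfrak g$ and act on $W$ as a semisimple and a nilpotent operator respectively, and they commute; from $J\Omega = \Omega$ one obtains $e^{tJ_s} e^{tJ_n}\Omega = e^{tJ}\Omega = e^t \Omega$ for all $t \in \mathbb C$. Decomposing $\Omega = \sum_\alpha \Omega_\alpha$ into $J_s$-eigenspaces (each $J_n$-invariant because $[J_s,J_n]=0$) and comparing the two sides eigenspace by eigenspace gives
\begin{equation*}
e^{tJ_n}\Omega_\alpha = e^{t(1-\alpha)}\Omega_\alpha
\end{equation*}
for every eigenvalue $\alpha$. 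The left hand side is polynomial in $t$ since $J_n$ is nilpotent, while the right hand side is not polynomial unless $\alpha = 1$ or $\Omega_\alpha = 0$. Hence $\Omega_\alpha = 0$ for $\alpha \neq 1$ and $J_n \Omega = 0$, which yields $J_s \Omega = \Omega$ and completes the argument.

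The main obstacles are of a bookkeeping rather than a conceptual nature: one has to be careful to invoke openness of $G$-orbits in $\Sigma_s$ (which supplies injectivity of $\sigma$) and the fact that the Jordan decomposition sits inside $\mathfrak g$ itself rather than merely in $\mathrm{End}(W)$, which is precisely where semisimplicity of $\mathfrak g$ enters. Beyond these inputs the reasoning reduces to linear algebra and the elementary comparison of a polynomial family with an exponential family in $t$.
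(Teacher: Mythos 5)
Your proposal is correct and follows essentially the same route as the paper: existence and uniqueness come from the cone property of $\Sigma_s$ together with the identification $T_\Omega\Sigma_s = \{T\Omega \mid T\in\mathfrak g\}$ and the triviality of the annihilator of $\Omega$, and semisimplicity comes from the Jordan--Chevalley decomposition $J = J_{ss}+J_n$ inside $\mathfrak g$ plus uniqueness of $J$. The only difference is that you spell out $J_{ss}\Omega=\Omega$ and $J_n\Omega=0$ via the polynomial-versus-exponential comparison, where the paper leaves this (equivalently obtained by noting $\Omega\in\ker(J-1)$ lies in the generalized $1$-eigenspace, on which $J_{ss}$ acts as the identity) as an unelaborated step.
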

\begin{proof}
Since $\Sigma_s$ is a cone, $\Omega \in T_{\Omega} \Sigma_s = \{ T \Omega | T \in \mathfrak g  \}$, where we identified $T_{\Omega} W$ with $W$ itself. This proves the existence of $J$. If $J' \in \mathfrak g$ satisfies $J' \Omega$, then $J'=J$ (since the annihilator of $\Omega$ in $\mathfrak g$ is trivial). Now assume that $\mathfrak g$ is semisimple and let $J=J_{ss}+J_n$ be the Jordan-Chevalley decomposition of $J$. Then $J_{ss} \Omega = \Omega$ and~$J_n \Omega =0$, so $J_{ss} = J$.
\end{proof}

Recall \cite{Serre} that every semisimple element of a semisimple Lie algebra $\mathfrak g$ belongs to~some (not necessarily unique) Cartan subalgebra of $\mathfrak g$. Moreover action of~the~group of inner automorphisms of $\mathfrak g$ on the set of Cartan subalgebras is~transitive. Therefore we may fix a Cartan subalgebra $\mathfrak h \subseteq \mathfrak g$. Every Airy datum $(\mathfrak g, W,  \Omega)$ is~isomorphic to~one such that $J \in \mathfrak h$. From now on, we restrict attention to Airy data of this form. The next step is to further constrain the element $J$.

\begin{proposition} \label{prop:J_dimension}
Let $(\mathfrak g, W,  \Omega)$ be an Airy datum with $\mathfrak g$ semisimple. Consider the hyperplane $\mathcal H = \{ \mu \in \mathfrak h^* |  \mu(J)=1 \} \subseteq \mathfrak h^*$
and its subset $\Xi = \{ \mu \in \mathcal H | W_{\mu} \neq 0 \}$. Then
\begin{enumerate}
\item $0$ is not an element of $\mathcal H$.
\item $\Xi$ contains a basis of $\mathfrak h^*$.
\item Each of the triple $(J, \mathcal H, \Xi)$ uniquely determines the other two.
\item $J$ is rational, in the sense that $J \in \Lambda^* \otimes_{\mathbb Z} \mathbb Q$.
\item There exists an isomorphic Airy datum such that $\alpha(J) \geq 0$ for every $\alpha \in \Delta_+$.
\item For any root $\alpha$ there exists $\mu \in \Xi$ such that $\mu + \alpha$ is a weight of $W$, $W_{\mu + \alpha} \neq 0$.
\end{enumerate}
\end{proposition}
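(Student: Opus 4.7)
The whole argument revolves around the weight-space decomposition $\Omega = \sum_{\mu \in \Lambda} \Omega_\mu$ with $\Omega_\mu \in W_\mu$, combined with injectivity of the linear map $\iota \colon \mathfrak g \to W$, $T \mapsto T\Omega$. That $\iota$ is injective follows from Proposition \ref{prop:Omega_criterion}: its image is $T_\Omega \Sigma_s$, a Lagrangian subspace of $W$ of full dimension $\dim \mathfrak g$. Setting $\Xi' := \{ \mu \in \Lambda : \Omega_\mu \ne 0 \}$, the identity $J\Omega = \Omega$ forces $\mu(J) = 1$ for every $\mu \in \Xi'$, so $\Xi' \subseteq \Xi$.

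Item 1 is immediate: $0(J) = 0 \ne 1$. For item 2 I would restrict $\iota$ to $\mathfrak h$: if $H \in \mathfrak h$ is annihilated by every $\mu \in \Xi'$, then $\iota(H) = \sum_\mu \mu(H)\Omega_\mu = 0$ and hence $H = 0$, so $\Xi'$ (and therefore $\Xi$) spans $\mathfrak h^*$ and, being finite, contains a basis. Item 3 is then a triangulation: $J$ determines $\mathcal H$ and $\Xi$ directly; the affine hyperplane $\mathcal H$, not containing the origin, has a unique normalized covector $J$; and because $\Xi$ spans $\mathfrak h^*$, the equations $\mu(J) = 1$ for $\mu \in \Xi$ have a unique solution in $\mathfrak h$. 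For item 4, pick $\mu_1, \dots, \mu_r \in \Xi \subseteq \Lambda$ forming a basis of $\mathfrak h^*$ and a $\mathbb Z$-basis $H_1, \dots, H_r$ of $\Lambda^*$. By the defining pairing of $\Lambda$ and $\Lambda^*$, the matrix $M_{ij} := \mu_i(H_j)$ has integer entries and is invertible, and writing $J = \sum_j c_j H_j$, the system $\mu_i(J) = 1$ becomes $Mc = (1, \dots, 1)^t$, so $c \in \mathbb Q^r$ and $J \in \Lambda^* \otimes_{\mathbb Z} \mathbb Q$.

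For item 5, the Weyl group $W(\mathfrak g, \mathfrak h)$ is realized by inner automorphisms through $N_G(H) \subset G$. Picking $g \in N_G(H)$ whose class $w$ sends $J$ into the closed fundamental chamber, the pair $(\mathrm{Ad}_g, g)$ is an isomorphism $(\mathfrak g, W, \Omega) \to (\mathfrak g, W, g\Omega)$, and the new distinguished element is $J' = \mathrm{Ad}_g J = w \cdot J$, which satisfies $\alpha(J') \ge 0$ for every $\alpha \in \Delta_+$. Item 6 is a one-line contradiction: if for some root $\alpha$ no $\mu \in \Xi$ had $\mu + \alpha$ as a weight of $W$, then for every $\mu \in \Xi' \subseteq \Xi$ one would have $E_\alpha \Omega_\mu \in W_{\mu + \alpha} = 0$, so $E_\alpha \Omega = 0$ and hence $E_\alpha = 0$ by injectivity of $\iota$, contradicting the fact that $\alpha$ is a root.

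The only potential stumbling block is the rationality statement in item 4, where one must be careful to set up compatible $\mathbb Z$-bases of $\Lambda$ and $\Lambda^*$ and exploit that their pairing takes integer values; once this is pinned down, everything else is driven uniformly by the two tools introduced in the first paragraph, namely the weight decomposition of $\Omega$ and the injectivity of $T \mapsto T\Omega$.
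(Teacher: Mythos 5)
Your proposal is correct and follows essentially the same route as the paper: weight decomposition of $\Omega$, injectivity of $T \mapsto T\Omega$ (image Lagrangian, hence of dimension $\dim \mathfrak g$), rationality via the integral pairing of $\Lambda$ with $\Lambda^*$, and Weyl-group transitivity realized by inner automorphisms for item 5. The extra detail you supply for item 4 (the integer matrix $M$) just makes explicit what the paper leaves implicit.
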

\begin{proof}
$1.$ Obvious. \\
$2.$ Suppose otherwise. Then there exists a nonzero $H \in \mathfrak h$ such that $\mu(H)=0$ for~each~$\mu \in \Xi$, so $H \Omega =0$. Contradiction. \\
$3.$ By construction, $J$ determines $\mathcal H$ and $\Xi$. $J$ is the unique element $T \in \mathfrak h$ such that $\mu(T)=1$ for every $\mu \in \mathcal H$. Since every basis of $\mathfrak h$ is contained in a unique hyperplane, one may reconstruct $\mathcal H$ from $\Xi$ as the unique hyperplane containing $\Xi$. \\
$4.$ Let $\Xi = \{ \mu_1, ..., \mu_m \}$. $J$ is uniquely determined by the affine system of equations $\mu_i(J)=1$. Since $\mu_i$ belong to $\Lambda \subseteq \Lambda \otimes_{\mathbb Z} \mathbb Q = (\Lambda^* \otimes_{\mathbb Z} \mathbb Q)^*$, this system has a solution in $\Lambda^* \otimes_{\mathbb Z} \mathbb Q \subseteq \mathfrak h$. Since solution of this system considered in $\mathfrak h$ is unique, $J \in \Lambda^* \otimes_{\mathbb Z} \mathbb Q$. \\
$5.$ By the previous point, $J$ must belong to the dual cone of some Weyl chamber. Since the Weyl group acts transitively on the set of Weyl chambers, we may assume that $J$ lies in the dual cone of the fundamental Weyl chamber. \\
$6.$ Assume otherwise. Then $ \mathfrak g_{\alpha}$ annihilates $\Omega$. Contradiction.
\end{proof}

It is natural to ask if point $4$ of the above Proposition can be strengthened, i.e.~if~$J$ belongs to the lattice $\Lambda^*$. One of the examples constructed in the Subsection \ref{sec:sl2_clas} shows that this is not necessarily true even if $\mathfrak g$ is simple. This leads to the~concept of the denominator of $J$, which is defined as the smallest positive integer $\mathrm{denom}(J)$ such that $\mathrm{denom}(J) \cdot J \in \Lambda^*$. Similarly for the point $5$, one can ask if condition $\alpha(J) \geq 0$ can be replaced by a strict inequality. This happens to be true for all simple Lie algebras, but there exist Airy structures for semisimple Lie algebras for which $J$ is~orthogonal to some root of $\mathfrak g$, i.e. such that $J$ is not a regular element of $\mathfrak g$. For~the benefit of the reader we recall the definition and properties of~regular elements of a semisimple Lie algebra in the Appendix \ref{app:ss_regular}.



For fixed $\mathfrak g$ and $W$ the number of weights of $W$ is finite, so points $2$ and $3$ of~Proposition \ref{prop:J_dimension} determine $J$ up to a finite ambiguity. This ambiguity is reduced by~imposing the additional condition $\alpha(J) \geq 0$ for every root $\alpha$. Many of the remaining candidates for $J$ may be excluded by the following fact.

\begin{proposition} \label{prop:J_spectrum}
Let $(\mathfrak g, W, \Omega)$ be an Airy datum with $\mathfrak g$ semisimple and let $  \lambda_1, ..., \lambda_n$ be the eigenvalues of $\mathrm{ad}_J$. Then
\begin{enumerate}
\item Each $\lambda_i$ is a rational number.
\item Multiplicity of any $\lambda$ among $\lambda_1, ..., \lambda_n$ is equal to the multiplicity of $- \lambda$. \\
 In particular $\sum_{i=1}^n \lambda_i =0$.
\item Spectrum of $J$ acting in $W$ takes the form
\begin{equation}
\mathrm{spec}_W(J) = \{ 1+ \lambda_1, ..., 1 + \lambda_n, -1 - \lambda_1, ..., -1 - \lambda_n \}.
\label{eq:J_spectrum_formula}
\end{equation}
\end{enumerate}
\end{proposition}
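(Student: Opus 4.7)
The plan is to treat the three claims in order, using that $J \in \Lambda^* \otimes_{\mathbb Z} \mathbb Q$ (Proposition \ref{prop:J_dimension}), that $T_\Omega \Sigma_s$ is Lagrangian (Proposition \ref{prop:Omega_criterion}), that $J$ is semisimple in $\mathfrak g$ (Proposition \ref{prop:J_exists}), and that the $\mathfrak g$-action on $W$ is by infinitesimally symplectic transformations.

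Points 1 and 2 are essentially root-system bookkeeping. Decomposing $\mathfrak g = \mathfrak h \oplus \bigoplus_{\alpha \in \Delta} \mathfrak g_\alpha$, the operator $\mathrm{ad}_J$ acts by zero on $\mathfrak h$ and as multiplication by $\alpha(J)$ on each one-dimensional $\mathfrak g_\alpha$. Since every $\alpha \in \Delta$ lies in $\Lambda$ and $J \in \Lambda^* \otimes_{\mathbb Z} \mathbb Q$, each $\alpha(J)$ is rational, giving point 1. For point 2, the symmetry is immediate from $\Delta = -\Delta$ with $\dim \mathfrak g_\alpha = \dim \mathfrak g_{-\alpha}$, together with the symmetric multi-set of zeros coming from $\mathfrak h$; the vanishing of $\sum_i \lambda_i$ then follows by pairing each eigenvalue with its negative.

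Point 3 is the conceptual core. I would first identify $T_\Omega \Sigma_s$ with $\mathfrak g$ via the linear isomorphism $T \mapsto T\Omega$, which is bijective because the annihilator of $\Omega$ in $\mathfrak g$ is trivial (as noted in the proof of Proposition \ref{prop:J_exists}) and $\dim T_\Omega \Sigma_s = \dim \mathfrak g$. The elementary calculation $J(T\Omega) = [J,T]\Omega + T(J\Omega) = [J,T]\Omega + T\Omega$ shows that $J$ preserves $T_\Omega \Sigma_s$ and, under the above identification, acts as $\mathrm{ad}_J + \mathrm{Id}$. Its spectrum on $T_\Omega \Sigma_s$ is therefore $\{1+\lambda_1, \ldots, 1+\lambda_n\}$.

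To obtain the remaining half of the spectrum, use that $T_\Omega \Sigma_s$ is Lagrangian, so $\omega$ induces an isomorphism $W/T_\Omega \Sigma_s \cong (T_\Omega \Sigma_s)^*$ via $[w] \mapsto \omega(\cdot, w)$. Because $J$ is infinitesimally symplectic, $\omega(Jv, w) + \omega(v, Jw) = 0$, which translates into the statement that the induced $J$-action on the quotient corresponds under this isomorphism to $-(J|_{T_\Omega \Sigma_s})^t$; its spectrum is therefore $\{-1-\lambda_1, \ldots, -1-\lambda_n\}$. Since $J$ is semisimple in $\mathfrak g$, it acts semisimply on the finite-dimensional representation $W$, so $W$ splits as a direct sum of $J$-eigenspaces and the full spectrum on $W$ is the multi-set union of the two lists, yielding \eqref{eq:J_spectrum_formula}. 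The only step requiring a moment's thought is the spectral behaviour on the quotient, where one must combine the Lagrangian condition with the infinitesimal symplecticity of $J$; everything else falls out of the machinery already in place.
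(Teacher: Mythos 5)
Your proof is correct. Points 1 and 2 coincide in substance with the paper's: the paper obtains rationality as a special case of point 4 of Proposition \ref{prop:J_dimension} and the symmetry of $\mathrm{spec}(\mathrm{ad}_J)$ from skew-adjointness of $\mathrm{ad}_J$ with respect to an invariant form, while your root-space bookkeeping ($\alpha(J)$ for $\alpha\in\Delta=-\Delta$, plus zeros on $\mathfrak h$) reaches the same conclusions just as cleanly. For point 3 the two arguments share the first half --- both use $J(T\Omega)=[J,T]\Omega+T\Omega$ to see that $J$ preserves the Lagrangian subspace $T_\Omega\Sigma_s$ and acts there as $\mathrm{ad}_J+\mathrm{Id}$ --- but diverge on the second half. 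The paper inductively constructs an explicit $J$-invariant Lagrangian complement spanned by eigenvectors $f_i$ with $Jf_i=-(1+\lambda_i)f_i$, peeling off the symplectic planes $\mathrm{ker}(J-1-\lambda_i)+\mathrm{ker}(J+1+\lambda_i)$ one at a time. You instead pass to the quotient $W/T_\Omega\Sigma_s\cong(T_\Omega\Sigma_s)^*$ and read off the remaining eigenvalues from $-(J|_{T_\Omega\Sigma_s})^t$; this is shorter and avoids the induction, at the cost of not producing the $J$-invariant Lagrangian complement itself, which is a useful byproduct of the paper's construction (it exhibits a gauge $V$ adapted to $J$). Both routes are complete; note also that your appeal to semisimplicity of $J$ on $W$ is not strictly needed for the multiset identity \eqref{eq:J_spectrum_formula}, since the characteristic polynomial of $J$ on $W$ factors as the product of those on the invariant subspace and on the quotient in any case.
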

\begin{proof}
$1.$ Special case of $4.$ in Proposition \ref{prop:J_dimension}. \\
$2.$ Follows from the fact that $J$ is an infinitesimal symplectomorphism. \\
$3.$ Let $\{ T_i \}_{i=1}^n$ be a basis of $\mathfrak g$ with $[J,T_i] = \lambda_i T_i$. Put $e_i = T_i \Omega$. Vectors $e_i$ span a~Lagrangian subspace $T_{\Omega} \Sigma_s \subseteq W$ and satisfy $J e_i = (1 + \lambda_i) e_i$. To complete the proof, it is sufficient to show that there exists a Lagrangian complement $V$ of $T_{\Omega} \Sigma_s$ spanned by vectors $\{ f_i \}_{i=1}^n$ with $J f_i = - (1 + \lambda_i) f_i$. We proceed inductively. First notice that $\mathrm{ker}(J-1 - \lambda_1)+ \mathrm{ker}(J+1+\lambda_1)$ is a symplectic subspace of $W$, so~we can~find $f_1$ with $\omega(e_i,f_1) = \delta_{i1}$ and $J f_1 = - (1+ \lambda_1) f_1$. Now suppose that we~have found $\{ f_1,...,f_k \}$ for some $1 \leq k < n$. Applying the same argument to the orthogonal complement
of~the~symplectic subspace spanned by $\{ e_i, f_i \}_{i=1}^k$ we find $f_{k+1}$.
\end{proof}

It is of interest to classify indecomposable Airy data for semisimple Lie~algebras. This doesn't reduce to classification of Airy data for simple Lie~algebras. Indeed, explicit examples of indecomposable Airy data for semisimple Lie algebras which are not simple are presented in Section \ref{sec:ss_examples}. Here we derive a simple criterion for~indecomposability and prove uniqueness of indecomposable factors.

\begin{definition}
Let $(\mathfrak g, W,  \Omega)$ be an Airy datum with $\mathfrak g$ semisimple. We define its associated graph by taking the simple factors of $\mathfrak g$ as vertices, with an edge between two simple factors $\mathfrak g'$ and $\mathfrak g''$ if and only if $W$ contains an irreducible submodule on~which both $\mathfrak g'$ and $\mathfrak g''$ act nontrivially.
\end{definition}

\begin{proposition}
Airy datum $(\mathfrak g, W,  \Omega)$ with $\mathfrak g$ semisimple is indecomposable if and only if its associated graph is connected.
\end{proposition}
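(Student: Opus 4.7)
The plan is to prove both implications separately; the forward direction is essentially bookkeeping, while the converse requires one genuine argument based on Schur's lemma.

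Suppose first that $(\mathfrak g, W, \Omega)$ decomposes as a product of nontrivial Airy data $(\mathfrak g_1, W_1, \Omega_1) \times (\mathfrak g_2, W_2, \Omega_2)$. Then the simple factors of $\mathfrak g$ are the disjoint union of those of $\mathfrak g_1$ and those of $\mathfrak g_2$. Since $\mathfrak g_1$ acts trivially on $W_2$ and vice versa, every irreducible $\mathfrak g$-submodule of $W = W_1 \oplus W_2$ lies entirely in $W_1$ or entirely in $W_2$, so no edge of the associated graph joins a simple factor of $\mathfrak g_1$ to one of $\mathfrak g_2$, and the graph is disconnected.

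For the converse, suppose the associated graph is disconnected. Partition its vertices into two non-empty sets and group the corresponding simple factors to write $\mathfrak g = \mathfrak g' \oplus \mathfrak g''$. By disconnectedness, on each irreducible $\mathfrak g$-submodule of $W$ at most one of $\mathfrak g'$, $\mathfrak g''$ acts nontrivially, and by Proposition \ref{prop:triv_adj_admissibility} at least one does. Grouping irreducible constituents according to which side acts trivially yields a $\mathfrak g$-module splitting $W = W' \oplus W''$ with $\mathfrak g''$ acting trivially on $W'$ and $\mathfrak g'$ acting trivially on $W''$. The crucial step is to upgrade this to a symplectic decomposition. Consider the $\mathfrak g$-equivariant map $W' \to (W'')^*$ defined by $w' \mapsto \omega(w', \cdot)|_{W''}$. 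Its source $W'$ has no trivial $\mathfrak g'$-summand by construction, while its target is trivial as a $\mathfrak g'$-module, so Schur's lemma forces the map to vanish. Hence $\omega(W', W'') = 0$, both $W'$ and $W''$ are symplectic subspaces, and the moment map decomposes as $\ell(q' + q'')(T' + T'') = \ell'(q')(T') + \ell''(q'')(T'')$.

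Finally, writing $\Omega = \Omega' + \Omega''$, the Lagrangian $T_\Omega \Sigma_s = \{T\Omega : T \in \mathfrak g\}$ splits as $\{T'\Omega' : T' \in \mathfrak g'\} \oplus \{T''\Omega'' : T'' \in \mathfrak g''\}$ with summands in $W'$ and $W''$ respectively. Each summand is isotropic of dimension at most $\min(\dim \mathfrak g^\bullet, \tfrac{1}{2}\dim W^\bullet)$, and the identity $\dim W = 2\dim\mathfrak g$ forces each of these inequalities to be an equality. Hence each summand is Lagrangian in its own symplectic subspace, and Proposition \ref{prop:Omega_criterion} produces nontrivial Airy data $(\mathfrak g', W', \Omega')$ and $(\mathfrak g'', W'', \Omega'')$ whose product is $(\mathfrak g, W, \Omega)$. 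The only genuine obstacle is the symplectic orthogonality $\omega(W', W'') = 0$ via Schur's lemma; everything else is dimension counting.
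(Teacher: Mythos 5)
Your proof is correct and follows the same route as the paper: decompose $\mathfrak g$ and $W$ according to the connected components of the graph and show the datum factorizes as a product. The paper's own argument for the nontrivial direction is only a couple of sentences, and your Schur's-lemma verification of $\omega(W',W'')=0$ together with the dimension count making both summands Lagrangian supplies exactly the details it leaves implicit.
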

\begin{proof}
Clearly $(\mathfrak g, W, \Omega)$ is indecomposable if its associated graph $G$ is connected. Now suppose that $G$ is not connected. Then we may decompose $\mathfrak g = \mathfrak g_1 \times \mathfrak g_2$ (with~both factors nonzero), $W= W_1 \oplus W_2 $. In this situation $\Sigma_s$ is the product of~the~corresponding sets for $(\mathfrak g_1, W_1)$ and $(\mathfrak g_2, W_2)$, so also $\Omega$ factorizes.
\end{proof}

We remark that formation of the associated graph is a contravariant functor from the category of Airy structures for semisimple Lie algebras to the category of graphs. 

\begin{proposition}
Let $\{ (\mathfrak g_i, W_i, \Omega_i) \}_{i=1}^n$ and $\{ (\mathfrak g_i',W_i',\Omega_i')\}_{i=1}^m$ be indecomposable Airy data with each $\mathfrak g_i$ and $\mathfrak g_i'$ semisimple. Suppose that
\begin{equation}
(\phi ,f) : \prod_{i=1}^n (\mathfrak g_i, W_i, \Omega_i) \to \prod_{i=1}^m (\mathfrak g_i', W_i', \Omega_i')
\end{equation}
is an isomorphism. Then $m = n$ and (possibly after a~permutation) there exist isomorphisms $(\phi_i, f_i) : (\mathfrak g_{i},W_{i}, \Omega_{i}) \to (\mathfrak g'_i, W'_i, \Omega'_i)$ such that $\phi = \prod_{i=1}^n \phi_i$, $f = \prod_{i=1}^n f_i$.
\end{proposition}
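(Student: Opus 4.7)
The plan is to run a Krull-Schmidt-style argument, exploiting the graph characterization of indecomposability together with the uniqueness of the simple-ideal decomposition of a semisimple Lie algebra. First I would reduce to the homogeneous case via Proposition \ref{prop:Airy_homog}: we may replace each factor by an isomorphic homogeneous Airy datum, and the product of homogeneous Airy data is again homogeneous. In this setting $W$ and $W'$ are linear symplectic $\mathfrak g$- and $\mathfrak g'$-modules, and the product decompositions $W = \bigoplus_i W_i$, $W' = \bigoplus_j W'_j$ are honest module decompositions. A short computation shows that any morphism of homogeneous Airy data must be linear: writing $f(q) = A q + c$ and expanding $\ell = \phi^t \circ \ell' \circ f$ with both moment maps purely quadratic, the term linear in $q$ forces $c \in H^0(\mathfrak g', W')$, and the latter vanishes by Proposition \ref{prop:triv_adj_admissibility}.

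Next I would use the Lie algebra structure to match the indecomposable factors. The isomorphism $\phi$ permutes the unique simple ideals of $\mathfrak g$ and $\mathfrak g'$. Since $f$ is a linear, $\phi$-equivariant symplectomorphism, it sends irreducible $\mathfrak g$-submodules of $W$ to irreducible $\mathfrak g'$-submodules of $W'$ while transporting the set of nontrivially-acting simple factors covariantly via $\phi$. Hence $(\phi, f)$ induces an isomorphism between the associated graphs of the two products. By the preceding proposition each indecomposable factor contributes exactly one connected component of its graph, so the induced graph isomorphism pairs the components up: we obtain $m = n$ and a permutation $\sigma \in S_n$ with $\phi(\mathfrak g_i) = \mathfrak g'_{\sigma(i)}$. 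Put $\phi_i := \phi|_{\mathfrak g_i}$.

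To extract $f_i$ I would characterize $W_i$ intrinsically as $\{ w \in W : \mathfrak g_k \cdot w = 0 \text{ for all } k \neq i \}$; this uses $H^0(\mathfrak g_k, W_k) = 0$ for each $k$, which follows factorwise from Proposition \ref{prop:triv_adj_admissibility}. The analogous description for $W'_j$ combined with $\phi$-equivariance of $f$ gives $f(W_i) = W'_{\sigma(i)}$, so $f_i := f|_{W_i}$ is well-defined. Since $\Omega = \sum_i \Omega_i$ and $f$ is linear, $\Omega' = \sum_i f_i(\Omega_i)$; comparing components in $W' = \bigoplus_j W'_j$ yields $f_i(\Omega_i) = \Omega'_{\sigma(i)}$. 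The remaining conditions of Definition \ref{def:Airy_iso} are inherited from $(\phi, f)$, and the factorizations $\phi = \prod_i \phi_i$, $f = \prod_i f_i$ are immediate. The main obstacle is establishing linearity of $f$; once that is in hand, the rest is a matter of transporting the product structures along $\phi$ and invoking the faithfulness of each simple factor's action.
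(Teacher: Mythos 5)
Your argument is correct and follows essentially the same route as the paper's proof: identify the simple ideals via $\phi$, use functoriality of the associated graph to pair up the indecomposable factors (giving $m=n$ and the permutation), and then split $f$ accordingly. You merely make explicit two points the paper leaves implicit --- linearity of $f$ (via reduction to homogeneous data and $H^0(\mathfrak g', W')=0$) and the intrinsic characterization $W_i = \{ w \in W : \mathfrak g_k w = 0 \text{ for } k \neq i \}$ used to define $f_i$ --- which is a filling-in of details rather than a different approach.
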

\begin{proof}
We identify factors of $\prod_{i=1}^n \mathfrak g_i$ with their images in $\prod_{i=1}^m \mathfrak g_i'$ through $\phi$. Using the fact that simple factors of a semisimple Lie algebra are uniquely determined and functoriality of the associated graph construction we see that (after a permutation) we have $m=n$ and $\mathfrak g_i' = \mathfrak g_i$. Then clearly $f = \prod_{i=1}^n f_i$ for some module isomorphisms $f_i : W_i \to W_i'$. By construction, $f_i(\Omega_i) = \Omega'_i$.
\end{proof}

We close this section with a remark that in all examples of Airy data $(\mathfrak g, W, \Omega)$ constructed in this paper $\Omega$ is a cyclic vector for $W$. We have not managed to decide if this is always true for $\mathfrak g$ semisimple. Below we prove a weaker statement.

\begin{proposition} \label{prop:Omega_almost_cyclic}
Let $(\mathfrak g, W, \Omega)$ be a nontrivial Airy datum with $\mathfrak g$ semisimple. Then the~submodule of $W$ generated by $\Omega$ has dimension strictly greater than $\dim \mathfrak g$.
\end{proposition}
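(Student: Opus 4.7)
The plan is to argue by contradiction. Suppose the submodule $L := U(\mathfrak g)\Omega$ has dimension at most $n = \dim \mathfrak g$. Because $T_\Omega \Sigma_s$ is Lagrangian of dimension $n$, the Lie-algebra stabilizer of $\Omega$ in $\mathfrak g$ is trivial, so $\mathfrak g\Omega$ already has dimension $n$. Combined with $\Omega = J\Omega \in \mathfrak g\Omega$ from Proposition~\ref{prop:J_exists}, this forces $L = \mathfrak g\Omega$; in particular $\mathfrak g\Omega$ happens to be $\mathfrak g$-invariant, so $L$ is an $n$-dimensional Lagrangian $\mathfrak g$-submodule of $W$.

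By complete reducibility there is a submodule complement $L'$ with $W = L \oplus L'$. The Lagrangian condition makes $\omega$ vanish on $L \times L$, so non-degeneracy of $\omega$ forces the induced $\mathfrak g$-equivariant pairing $L \otimes L' \to \mathbb C$ to be non-degenerate; Schur's lemma applied to isotypic components then yields $L' \cong L^*$. Hence $W \cong L \oplus L^*$ as a $\mathfrak g$-module, and part 1 of Proposition~\ref{prop:triv_adj_admissibility} forbids trivial subrepresentations in either summand.

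I would then try to show that $L \cong \mathfrak g^{\mathrm{ad}}$, from which $L^* \cong \mathfrak g^* \cong \mathfrak g$ via the Killing form would give $W \cong \mathfrak g \oplus \mathfrak g$, contradicting part 2 of Proposition~\ref{prop:triv_adj_admissibility}. The main obstacle is precisely this identification, since the natural linear isomorphism $\mathfrak g \to L$, $T \mapsto T\Omega$, is only a $1$-cocycle rather than a module map. In concrete low-rank examples one enumerates $n$-dimensional $\mathfrak g$-modules without trivial summand and finds that the adjoint is the only candidate---which then itself fails, since adjoint centralisers have dimension $\ge \mathrm{rank}(\mathfrak g)$ and hence the very hypothesis of trivial stabilizer is incompatible with $\dim L = n$. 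For a uniform proof I would exploit the centre $Z(U(\mathfrak g))$: for each $P \in Z(U(\mathfrak g))$ the triviality of $\mathrm{stab}_{\mathfrak g}(\Omega)$ produces a unique $R_P \in \mathfrak g$ with $P\Omega = R_P\Omega$, and centrality gives $[R_P, R_Q]\Omega = 0$, whence $[R_P, R_Q] = 0$; the elements $R_P$ thus span an abelian subalgebra of $\mathfrak g$. Combined with the spectral information about $\Omega$ supplied by Proposition~\ref{prop:J_spectrum}, this should be enough to force $L \cong \mathfrak g^{\mathrm{ad}}$, or alternatively to derive the desired contradiction directly.
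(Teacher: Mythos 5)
Your reduction is sound as far as it goes: if the submodule $W'$ generated by $\Omega$ had dimension at most $n=\dim\mathfrak g$, then (trivial stabilizer plus $\Omega=J\Omega\in\mathfrak g\Omega$) indeed forces $W'=\mathfrak g\Omega=T_\Omega\Sigma_s$, an $n$-dimensional Lagrangian submodule, and the identification $W\cong W'\oplus (W')^*$ via $\omega$ is correct. But from that point on the argument is not a proof. You say you ``would then try to show that $W'\cong\mathfrak g^{\mathrm{ad}}$'' and that the cocycle/central-character considerations ``should be enough'' --- and you yourself identify the obstacle: $T\mapsto T\Omega$ is only a cocycle, not a module map, so nothing you have written pins down the isomorphism type of $W'$. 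There are semisimple $\mathfrak g$ with non-adjoint modules of dimension $\dim\mathfrak g$ and no trivial summand, so even the enumeration strategy is not uniform, and the $Z(U(\mathfrak g))$ construction, while correct in producing a commuting family $\{R_P\}$, is not shown to lead anywhere. The conclusion is therefore not established.

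The missing idea is a one-line trace computation, which is exactly how the paper finishes. Since $\mathfrak g=[\mathfrak g,\mathfrak g]$, every element of $\mathfrak g$ acts tracelessly on any finite-dimensional module, so $\mathrm{tr}_{W'}(J)=0$. On the other hand, choosing a basis $\{T_i\}$ of $\mathfrak g$ with $[J,T_i]=\lambda_iT_i$ (as in the proof of Proposition \ref{prop:J_spectrum}), the vectors $T_i\Omega$ span $W'=\mathfrak g\Omega$ and satisfy $J(T_i\Omega)=(1+\lambda_i)T_i\Omega$, so
\begin{equation*}
\mathrm{tr}_{W'}(J)=\sum_{i=1}^{n}(1+\lambda_i)=n+\mathrm{tr}_{\mathfrak g}(\mathrm{ad}_J)=n>0,
\end{equation*}
a contradiction. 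Note that this does not even require the Lagrangian complement or the decomposition $W\cong W'\oplus(W')^*$; only the eigenvalue shift by $+1$ on $\mathfrak g\Omega$ and tracelessness are needed. I recommend replacing everything after your first paragraph with this argument.
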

\begin{proof}
Let $W' \subseteq W$ be the submodule generated by $\Omega$. Since $W'$ contains the Lagrangian subspace $T_{\Omega} \Sigma_s$, we have $\dim W' \geq \dim \mathfrak g$. Suppose that this inequality is saturated. Let $\{ \lambda_i \}_{i=1}^{\dim \mathfrak g}$ be the eigenvalues of $\mathrm{ad}_J$. Then the eigenvalues of $J$ acting in $W'$ are $\{ 1 + \lambda_i \}_{i=1}^{\dim \mathfrak g}$, which leads to an absurd chain of equalities
\begin{equation}
0 = \mathrm{tr}_{W'}(J) = \dim (\mathfrak g) + \mathrm{tr}_{\mathfrak g} (\mathrm{ad}_J)= \dim (\mathfrak g).
\end{equation}
\end{proof}

\section{Automorphisms of Airy data} \label{sec:automorphisms}

\begin{proposition} \label{prop:stab_finitness}
Suppose that $(\mathfrak g, W, \Omega)$ is an Airy datum with $\mathfrak g$ semisimple. Then $\mathrm{Stab}(\Omega)$ is a finite group. Moreover $\mathrm{Ad}_g(J)=J$ for any $g \in \mathrm{Stab}(\Omega)$.
\end{proposition}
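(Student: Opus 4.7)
The plan is to first reduce to a homogeneous Airy datum via Proposition \ref{prop:Airy_homog}. The $G$-action on $W$ then becomes linear, and since every simply-connected complex semisimple Lie group is linear algebraic, the resulting representation $\rho : G \to \mathrm{GL}(W)$ is a morphism of algebraic groups. Consequently $\mathrm{Stab}(\Omega) = \rho^{-1}(\{g' \in \mathrm{GL}(W) \mid g' \Omega = \Omega\})$ is a Zariski closed algebraic subgroup of $G$.

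The invariance $\mathrm{Ad}_g(J) = J$ is the easy half and I would dispose of it first: for $g \in \mathrm{Stab}(\Omega)$ one computes
$$\mathrm{Ad}_g(J) \cdot \Omega = g J g^{-1} \Omega = g J \Omega = g \Omega = \Omega,$$
so the uniqueness clause of Proposition \ref{prop:J_exists} forces $\mathrm{Ad}_g(J) = J$.

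For finiteness, the key step is to determine $\mathrm{Lie}(\mathrm{Stab}(\Omega))$. An element $T \in \mathfrak g$ lies in this Lie algebra if and only if $\exp(tT) \Omega = \Omega$ for all $t \in \mathbb C$, equivalently $T\Omega = 0$. By Proposition \ref{prop:Omega_criterion} the subspace $\{T\Omega \mid T \in \mathfrak g\}$ is Lagrangian in $W$, hence of dimension $n = \dim \mathfrak g$, so the linear map $T \mapsto T\Omega$ is injective. Therefore $\mathrm{Lie}(\mathrm{Stab}(\Omega)) = 0$, and $\mathrm{Stab}(\Omega)$ is a zero-dimensional closed algebraic subgroup of $G$; any such subgroup is a finite set of points.

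The main subtlety, and the only place where care is needed, is the passage from ``discrete'' to ``finite'': a priori a discrete subgroup of a complex Lie group can be infinite, so the algebraicity of the action is really doing work here. If one wishes to avoid invoking it explicitly, an equivalent alternative is to observe that the orbit map $G \to \mathrm{Orb}(\Omega)$ is a surjective algebraic morphism between smooth irreducible varieties of the same dimension $n$, whose fibers are the cosets of $\mathrm{Stab}(\Omega)$; all such fibers must then be finite.
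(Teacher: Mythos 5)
Your proof is correct and follows essentially the same route as the paper: algebraicity of the (linearized) $G$-action makes $\mathrm{Stab}(\Omega)$ Zariski closed, its Lie algebra vanishes because no nonzero $T\in\mathfrak g$ annihilates $\Omega$, hence it is a zero-dimensional closed subgroup and therefore finite; and $\mathrm{Ad}_g(J)=J$ follows from uniqueness of $J$ exactly as in the paper. The only difference is that you spell out the homogeneity reduction and the discrete-versus-finite subtlety, which the paper leaves implicit.
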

\begin{proof}
$G$ is a linear algebraic group acting algebraically on $W$. Therefore $\mathrm{Stab}(\Omega)$ is Zariski closed. Furthermore we have $\dim(\mathrm{Stab}(\Omega))=0$, for otherwise there would exist an element of $\mathfrak g \setminus \{ 0 \}$ annihilating $\Omega$. Thus $\mathrm{Stab}(\Omega)$ is finite. Now~pick some $g \in \mathrm{Stab}(\Omega)$. We have $\Omega = g J g^{-1} \Omega$, so $\mathrm{Ad}_g(J) = J$ by uniqueness~of~$J$.
\end{proof}

We remark that Proposition \ref{prop:stab_finitness} is false if the assumption of semisimplicity of~$\mathfrak g$ is~dropped. In general $G$ does not come equipped with a~canonical structure of~an~algebraic variety. Even if such structure exists, it may happen that the~$G$-action on~$W$ is not algebraic. This is the case in some of the examples of~Airy data discussed in~\cite{Analyticity}, in which $\mathrm{Stab}(\Omega)$ was found to be infinite cyclic.

\begin{definition}
Let $(\phi,f)$ be an automorphism of an Airy datum $(\mathfrak g, W, \Omega)$. We~shall say that $(\phi,f)$ is inner (resp. almost inner) if $f=g$ (resp. $\phi = \mathrm{Ad}_g$) for some $g \in G$. Group of inner (resp. almost inner) automorphisms of $(\mathfrak g, W, \Omega)$ will be denoted by~$\mathrm{Inn}(\mathfrak g , W, \Omega)$ (resp. $\mathrm{AInn}(\mathfrak g, W, \Omega)$).
\end{definition}


\begin{proposition}
Let $(\phi,f)$ be an automorphism of Airy datum $(\mathfrak g, W, \Omega)$.
\begin{enumerate}
\item If $f=g$ for some $g \in G$, then $\phi = \mathrm{Ad}_g$. In particular every inner automorphism is almost inner.
\item We have $\mathrm{Inn}(\mathfrak g, W,\Omega) \cong \frac{\mathrm{Stab}(\Omega)}{\mathrm{Stab}(W)}$, where $\mathrm{Stab}(W) = \{ g \in G | \forall q \in W \ g \cdot q = q \}$. In particular if $\mathfrak g$ is semisimple, then $\mathrm{Inn}(\mathfrak g, W , \Omega)$ is a finite group.
\item $\mathrm{Inn}(\mathfrak g, W, \Omega)$ is a normal subgroup of $\mathrm{AInn}(\mathfrak g, W, \Omega)$.
\item $\mathrm{AInn}(\mathfrak g, W, \Omega)$ is a normal subgroup of $\mathrm{Aut}(\mathfrak g, W, \Omega)$.
\item Suppose that $(\mathfrak g, W, \Omega)$ is homogeneous. Then $\phi(J)=J$.
\item Suppose that $\mathfrak g$ is semisimple and $J$ is regular. Then $\mathrm{Stab}(\Omega)$ is contained in~the~subgroup $e^{\mathfrak h} \subseteq G$ generated by $\mathfrak h$. In particular $\mathrm{Stab}(\Omega)$ is abelian.
\end{enumerate}
\end{proposition}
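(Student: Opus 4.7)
The plan is to dispatch the six claims in turn, using the intertwining identity (\ref{eq:Airy_morphism_intertwines}) from Proposition \ref{prop:Airy_morph_basic_prop} and faithfulness of the action $\xi$ as the principal tools.

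For point 1, suppose $f=g$ for some $g \in G$. The equivariance $g_\ast \xi(T) = \xi(\mathrm{Ad}_g T)$, obtained by differentiating the $G$-action, combined with (\ref{eq:Airy_morphism_intertwines}) gives $\xi(\phi(T)) = \xi(\mathrm{Ad}_g T)$ pointwise on $W$, so $\phi = \mathrm{Ad}_g$ by faithfulness. For point 2, I would define $\Psi : \mathrm{Stab}(\Omega) \to \mathrm{Inn}(\mathfrak g, W, \Omega)$ by $\Psi(g) = (\mathrm{Ad}_g, g)$; by point 1 and the definition of $\mathrm{Inn}$ this is a well-defined surjective homomorphism. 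Its kernel consists of those $g$ for which the automorphism $(\mathrm{Ad}_g, g)$ is trivial, which forces $g \in \mathrm{Stab}(W)$; conversely, any $g \in \mathrm{Stab}(W)$ already gives $\mathrm{Ad}_g = \mathrm{id}$ by plugging $f = g = \mathrm{id}$ into (\ref{eq:Airy_morphism_intertwines}). Finiteness in the semisimple case then follows from Proposition \ref{prop:stab_finitness}.

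Claims 3 and 4 are formal conjugation arguments. Given $(\phi_0, f_0) \in \mathrm{AInn}$ with $\phi_0 = \mathrm{Ad}_h$ and $(\mathrm{Ad}_g, g) \in \mathrm{Inn}$, the conjugate has Lie-algebra part $\mathrm{Ad}_h \mathrm{Ad}_g \mathrm{Ad}_h^{-1} = \mathrm{Ad}_{hgh^{-1}}$, while (\ref{eq:Airy_morphism_intertwines}) shows that the affine part $f_0 \circ g \circ f_0^{-1}$ realizes the $G$-action of $hgh^{-1}$ on $W$, so the conjugate is inner. For point 4, lift an arbitrary $\phi \in \mathrm{Aut}(\mathfrak g)$ to a group automorphism $\Phi$ using simply-connectedness of $G$, and observe that $\phi \mathrm{Ad}_h \phi^{-1} = \mathrm{Ad}_{\Phi(h)}$; the affine part is handled analogously.

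For point 5, specializing (\ref{eq:Airy_morphism_intertwines}) to $T=J$ and $q=\Omega$ gives $df|_{\Omega}(\Omega) = \phi(J)\Omega$, so by the uniqueness clause of Proposition \ref{prop:J_exists} it suffices to establish $df|_{\Omega}(\Omega) = \Omega$. Writing $f(q) = Aq + b$ and expanding $\ell = \phi^t \circ \ell \circ f$ in powers of $q$ with purely quadratic $\ell$, the vanishing of the part linear in $q$ forces $\omega(TAq, b) = 0$ for all $T \in \mathfrak g$, $q \in W$; invertibility of $A$ together with the identity $(\mathfrak g \cdot W)^{\perp} = H^0(\mathfrak g, W)$ (a consequence of $\mathfrak g$-invariance of $\omega$ and complete reducibility) then yield $b \in H^0(\mathfrak g, W) = 0$ by Proposition \ref{prop:triv_adj_admissibility}. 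Hence $f$ is linear, and $f(\Omega) = \Omega$ gives $A\Omega = \Omega$, i.e., $df|_{\Omega}(\Omega) = \Omega$. Finally, for point 6, Proposition \ref{prop:stab_finitness} already tells us $\mathrm{Stab}(\Omega) \subseteq Z_G(J)$; regularity of $J$ means its centralizer in $\mathfrak g$ equals $\mathfrak h$, and by Steinberg's theorem on connectedness of centralizers of semisimple elements in simply-connected semisimple $G$, $Z_G(J)$ coincides with the connected subgroup $e^{\mathfrak h}$, which is abelian. The main technical subtlety is the linearity argument in point 5, where the possible constant term of the affine map $f$ must be ruled out; everything else is essentially bookkeeping via (\ref{eq:Airy_morphism_intertwines}).
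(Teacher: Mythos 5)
Your proof is correct, and on points 1--4 it is essentially the paper's argument: point 1 via the intertwining relation (\ref{eq:Airy_morphism_intertwines}) plus faithfulness of $\xi$, point 2 via the epimorphism $g \mapsto (\mathrm{Ad}_g, g)$ with kernel $\mathrm{Stab}(W)$, point 3 by observing that $h^{-1}f_0$ commutes with the $G$-action (your "integrated" form of (\ref{eq:Airy_morphism_intertwines})), and point 4 from normality of inner automorphisms in $\mathrm{Aut}(\mathfrak g)$. The two places where you genuinely diverge are points 5 and 6. For point 5 the paper simply asserts $\phi(J)\Omega = \Omega$ and appeals to uniqueness of $J$; you correctly notice that this requires $\left. df \right|_{\Omega}(\Omega) = \Omega$, i.e.\ that the affine map $f$ has vanishing translation part $b$, and you supply the argument ($b \in (\mathfrak g \cdot W)^{\perp} = H^0(\mathfrak g, W) = 0$). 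This is a real improvement in rigor over the published proof. The one caveat is that $H^0(\mathfrak g, W)=0$ is Proposition \ref{prop:triv_adj_admissibility}, proved under the hypothesis that $\mathfrak g$ is semisimple, while point 5 is stated merely for homogeneous data; within the paper's usage (homogeneity is only ever invoked for semisimple $\mathfrak g$) this is harmless, and note that evaluating the intertwining relation at $q=0$ gives $\phi(T)b=0$ for all $T$ even more directly. For point 6 the paper argues elementarily: $\mathrm{Ad}_g(J)=J$ forces $g$ to normalize $\mathfrak h = Z_{\mathfrak g}(J)$, and a nontrivial Weyl-group element cannot fix the regular (dominant) element $J$, so $g \in e^{\mathfrak h}$. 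You instead quote connectedness of centralizers of semisimple Lie-algebra elements in a simply connected semisimple group, which yields $Z_G(J)=e^{\mathfrak h}$ in one stroke. Both are valid; the paper's route avoids the external theorem, yours is shorter if one is willing to cite it.
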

\begin{proof}
$1.$ Pick $T \in \mathfrak g$. Formula (\ref{eq:Airy_morphism_intertwines}) gives $\xi(\phi (\mathrm{Ad}_g^{-1}(T))) = \xi(T)$, so by faithfulness of $\xi$ we have $ \phi(\mathrm{Ad}_g^{-1}(T)) =T$. \\
$2.$ Faithfulness of $\xi$ implies that $\mathrm{Stab}(W)$ is a discrete normal (and hence central) subgroup of $G$. Thus $\mathrm{Stab}(\Omega) \ni g \mapsto (\mathrm{Ad}_g, g) \in \mathrm{Inn}(\mathfrak g, W, \Omega)$ is an epimorphism with kernel $\mathrm{Stab}(W)$. \\
$3.$ Suppose that $(\phi,f) \in \mathrm{AInn}(\mathfrak g, W, \Omega)$. Pick $g \in G$ such that $\phi = \mathrm{Ad}_g$. Then $F=g^{-1}f $ commutes with the $G$-action on $W$, so for any $h \in \mathrm{Stab}(\Omega)$ we have
\begin{equation}
(\phi,f) \circ (\mathrm{Ad}_h,h) \circ (\phi ,f )^{-1} = (\mathrm{Ad}_{ghg^{-1}}, ghg^{-1}).
\end{equation}
$4.$ Group of inner automorphism of $\mathfrak g$ is a normal subgroup of $\mathrm{Aut}(\mathfrak g)$. \\
$5.$ We have $\phi(J) \Omega = \Omega$, so $\phi(J)=J$ by uniqueness of $J$. \\
$6.$ Pick some $g \in \mathrm{Stab}(\Omega)$. Then $\mathrm{Ad}_g(J) = J$, so $\mathrm{Ad}_g(\mathfrak h) = \mathfrak h$ by regularity of $J$. If~$g \notin e^{\mathfrak h}$, there exists a root $\alpha$ such that $\alpha(\mathrm{Ad}_g(J)) < 0$. Contradiction.
\end{proof}

Recall that real structure on a complex vector space $V$ is an antilinear involution $\sigma: V \to V$. The set $V^{\sigma} = \{ v \in V | \sigma(v)=v \}$ of fixed points of $\sigma$ is a real subspace of~$V$ with $V^{\sigma} \otimes_{\mathbb R} \mathbb C = V$. Conversely, given a real subspace $V' \subseteq V$ with $V' \otimes_{\mathbb R} \mathbb C$ there exists a unique real structure $\sigma$ on $V$ such that $V'=V^{\sigma}$. Now let $\mathfrak g$ be a~complex Lie algebra. Antilinear involution $\sigma$ on $\mathfrak g$ is said to be a real structure of $\mathfrak g$ if it is a~homomorphism of real Lie algebras. In this situation $\mathfrak g^{\sigma}$ is a real Lie algebra. If the Killing form on $\mathfrak g^{\sigma}$ is negative-definite, we say that $\sigma$ is a compact real form. In this situation $\mathfrak g$ is semisimple and $\mathfrak g^{\sigma}$ is the Lie algebra of a simply-connected compact Lie group $G^{\sigma}$. Let $W$ be a representation of $\mathfrak g$. Real structure $K$ on $W$ is said to be compatible with $\sigma$ if $K(Tq)=\sigma(T) K(q)$ for $T \in \mathfrak g$, $q \in W$, or equivalently if $W^K$ is~a~representation of $\mathfrak g^{\sigma}$ and $W = W^K \otimes_{\mathbb R} \mathbb C$ as a $\mathfrak g^{\sigma}$-module. In this situation we shall abuse the notation by denoting the involution $K$ simply by $\sigma$. We remark that real structures on affine representations of $\mathfrak g$ may also be defined, but by Proposition \ref{prop:Airy_homog} we shall not need them here.

\begin{definition}
Let $A=(\mathfrak g , W, \Omega)$ be a homogeneous Airy datum. A real structure on~$A$ is a real structure $\sigma$ on $\mathfrak g$ together with a compatible real structure $\sigma$ on $W$ such that $\sigma(\Omega)=\Omega$.
\end{definition}

\begin{proposition}
Let $\sigma$ be a real structure on a nontrivial homogeneous Airy datum $(\mathfrak g, W , \Omega)$. Then $\sigma$ is not compact.
\end{proposition}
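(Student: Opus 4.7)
The plan is to assume $\sigma$ is compact and derive a contradiction by studying the spectrum of the operator $J \in \mathfrak g$ provided by Proposition \ref{prop:J_exists} acting on $W$.

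First I would show that $J$ lies in the real form $\mathfrak g^{\sigma}$. Applying $\sigma$ to the equation $J \Omega = \Omega$ and using both $\sigma(\Omega) = \Omega$ and the compatibility relation $\sigma(T q) = \sigma(T) \sigma(q)$, one obtains $\sigma(J) \Omega = \Omega$. Then uniqueness of $J$, part of Proposition \ref{prop:J_exists}, forces $\sigma(J) = J$, i.e.\ $J \in \mathfrak g^{\sigma}$.

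The next step exploits compactness. If $\sigma$ is compact, then $\mathfrak g$ is semisimple and $\mathfrak g^{\sigma}$ integrates to a simply connected compact Lie group $G^{\sigma}$. The real representation $W^{\sigma}$ admits a $G^{\sigma}$-invariant positive-definite inner product, obtained by averaging any inner product over $G^{\sigma}$. With respect to this product, every element of $\mathfrak g^{\sigma}$, in particular $J$, acts on $W^{\sigma}$ as a skew-symmetric $\mathbb R$-linear operator. Complexifying, the eigenvalues of $J$ acting on $W = W^{\sigma} \otimes_{\mathbb R} \mathbb C$ are all purely imaginary.

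To close the argument it remains to observe that $1$ is an eigenvalue of $J$ on $W$, which contradicts its spectrum being purely imaginary. This requires $\Omega \neq 0$: but in the homogeneous case $\ell$ is purely quadratic, so $d\ell|_0 = 0$, hence $0 \in \Sigma_s$ would force $n = 0$ contradicting non-triviality. Thus $\Omega \neq 0$ and the equation $J\Omega = \Omega$ exhibits $1 \in \mathrm{spec}_W(J)$, the desired contradiction. The only genuinely delicate point is this last verification that $\Omega \neq 0$; the rest is a direct application of the standard fact that compact-group representations are unitarizable. (One could alternatively invoke part $1$ of Proposition \ref{prop:J_spectrum}, rationality of $\mathrm{ad}_J$-eigenvalues, together with the fact that they must be imaginary, to conclude $\mathrm{ad}_J = 0$ and hence $J = 0$ by semisimplicity; but the argument via the eigenvalue $1$ on $W$ is more direct.)
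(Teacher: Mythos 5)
Your proof is correct and follows the paper's own argument in its two essential steps: deducing $\sigma(J)=J$ from the uniqueness of $J$, and then averaging over the compact group $G^{\sigma}$ to produce an invariant inner product (resp.\ norm) on $W$. The only divergence is in how the contradiction is extracted. The paper observes that, because $J\in\mathfrak g^{\sigma}$ generates the scaling flow on the ray through $\Omega$, the $G^{\sigma}$-orbit of $\Omega$ accumulates at $0$, so any $G^{\sigma}$-invariant norm must satisfy $\|\Omega\|=\lim_{t\to-\infty}\|e^{t}\Omega\|=0$, whence $\Omega=0$ and $d\ell|_{\Omega}=0$. You instead note that $J$ is skew-symmetric for the invariant product, so its spectrum on $W$ is purely imaginary, contradicting the eigenvalue $1$ exhibited by $J\Omega=\Omega$ once $\Omega\neq 0$ is known. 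These closings are two faces of the same fact (skew-symmetry of $J$ is precisely what makes $\|e^{tJ}\Omega\|$ constant), and your verification that $\Omega\neq 0$ --- via $d\ell|_{0}=0$ for a purely quadratic moment map and nontriviality of the datum --- plays exactly the role of the paper's final sentence.
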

\begin{proof}
We have $\sigma(J) \Omega = \Omega$, so $\sigma(J) = J$ by uniqueness of $J$. Since $J$ belongs to~$\mathfrak g^{\sigma}$, we have $\phi( \Omega) = \phi(0)$ for every continuous, $G^{\sigma}$-invariant function $\phi : W \to \mathbb C$. By~averaging techniques we may construct a $G^{\sigma}$ invariant norm $\| \cdot \|$ on $W$. It follows that $\| \Omega \| =0$, so $\Omega =0$. Then $\left. d \ell \right|_{\Omega} =0$, a contradiction.
\end{proof}

As illustrated by examples in Sections \ref{sec:simple_class} and \ref{sec:ss_examples}, noncompact real forms do exist, at least for some Airy data.

\section{Simple Lie algebras - classification} \label{sec:simple_class}

\begin{proposition} \label{prop:dim_bounds}
We list isomorphism classes of symplectic representations of~simple Lie algebras whose admissibility is not ruled out by the Proposition \ref{prop:triv_adj_admissibility}. Whenever $\mathfrak g$ is a~classical Lie algebra, we denote the tautological representation by~$F$. In the~case of symplectic algebras, we let $\Lambda^k_0 F$, $k \in \mathbb N$ be the subspace of these elements of $\Lambda^k F$ whose any contraction with the symplectic form of $F$ vanishes.
\begin{itemize}
\item $\mathfrak{sl}_2$ : $F^{\oplus 3}$, $F \oplus \mathrm{Sym}^3 F $, $\mathrm{Sym}^5 F$.
\item $\mathfrak{sl}_6$ : $\Lambda^2 F \oplus \Lambda^4 F \oplus (\Lambda^3 F)^{\oplus 2}$.
\item $\mathfrak{sp}_4$ : $F^{\oplus 5}$, $\widetilde F^{\oplus 4}$, $\mathrm{Sym}^3 F$, $F \otimes \widetilde F$. Here $\widetilde F = \Lambda_0^2 F$ is the tautological representation of $\mathfrak{so}_5$, which is isomorphic to $\mathfrak{sp}_4$.
\item $\mathfrak{sp}_6$ : $F^{\oplus 7}$, $(\Lambda_0^2 F)^{\oplus 2} \oplus \Lambda_0^3 F$, $(\Lambda_0^3 F)^{\oplus 3}$.
\item $\mathfrak{sp}_8$ : $F^{\oplus 9}$, $F^{\oplus 3} \oplus \Lambda_0^3 F$.
\item $\mathfrak{sp}_{10}$ : $F^{\oplus 11}$, $\Lambda_0^3 F$.
\item $\mathfrak{sp}_{2k}$, $k \geq 6$ : $F^{\oplus 2k+1}$.
\item $\mathfrak{so}_{2k+1}$, $k \geq 3$ : $F^{\oplus 2k}$.
\item $\mathfrak g_2$ : $F^{\oplus 4}$, where $F$ is the unique irreducible representation of dimension $7$.
\item $\mathfrak f_4$ : $F^{\oplus 4}$, where $F$ is the unique irreducible representation of dimension $26$.
\end{itemize}
\end{proposition}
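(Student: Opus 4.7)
The claim is a complete enumeration, so the proof is essentially a finite computation organized by Lie algebra type. Fix a simple complex Lie algebra $\mathfrak{g}$ of dimension $n$. A symplectic $\mathfrak{g}$-module $W$ of dimension $2n$ with $H^0(\mathfrak{g},W)=0$ and $W \not\cong \mathfrak{g}\oplus\mathfrak{g}$ is specified by a family of multiplicities $(m_\lambda)_{\lambda\in\Lambda_+\setminus\{0\}}$ satisfying the Diophantine equation $\sum_\lambda m_\lambda \dim V_\lambda = 2n$, together with the symplectic admissibility constraints recalled in Section \ref{sec:semisimple_general}: $m_\lambda$ even when $V_\lambda$ is of real type, and $m_\lambda = m_{\lambda^*}$ when $V_\lambda$ is of complex type. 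The plan is to solve this enumeration algebra by algebra.

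The preparatory step is to tabulate, for each simple $\mathfrak{g}$, all nontrivial irreducibles of dimension at most $2n$ together with their Frobenius--Schur types. For classical algebras this table is obtained from the Weyl dimension formula applied to (traceless) symmetric, exterior and tensor powers of the defining representation $F$. The reality types follow from the action of the longest Weyl element: defining representations of $\mathfrak{sp}$-type are quaternionic, of $\mathfrak{so}$-type real, of $\mathfrak{sl}_n$ ($n\geq 3$) complex, while self-dual irreducibles of $\mathfrak{sp}$ and $\mathfrak{so}$ must be examined individually to distinguish real from quaternionic. For exceptional $\mathfrak{g}$ the input data is read off from the standard tables of fundamental representations.

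With these tables in hand, the enumeration splits into three groups. First, for the exceptional algebras $\mathfrak{e}_6,\mathfrak{e}_7,\mathfrak{e}_8$ the smallest nontrivial irreducibles (of dimensions $27,56,248$) are large enough that no combination other than the excluded adjoint-squared fits, so the list is empty; for $\mathfrak{g}_2$ and $\mathfrak{f}_4$ only the $7$- and $26$-dimensional fundamental representations are small enough to fit under $2n$, and the Diophantine constraint immediately forces $F^{\oplus 4}$. Second, for the classical families at sufficiently large rank the non-defining irreducibles grow strictly faster than $n$, so past a threshold only copies of $F$ combine to the right total; this threshold is $k\geq 6$ for $\mathfrak{sp}_{2k}$ (leaving $F^{\oplus(2k+1)}$) and $k\geq 3$ for $\mathfrak{so}_{2k+1}$ (leaving $F^{\oplus 2k}$), and the same reasoning shows that no admissible datum exists for $\mathfrak{sl}_n$ with $n\in\{3,4,5\}$ or for $\mathfrak{so}_{2k}$ with $k\geq 4$, matching the absence of these algebras from the list. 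Third, the remaining small-rank classical cases $\mathfrak{sl}_2,\mathfrak{sl}_6,\mathfrak{sp}_4,\mathfrak{sp}_6,\mathfrak{sp}_8,\mathfrak{sp}_{10}$ must be enumerated individually, checking each candidate decomposition against the parity/pairing constraints and discarding $\mathfrak{g}\oplus\mathfrak{g}$.

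The main obstacle is not conceptual but combinatorial: ensuring exhaustiveness in the low-rank classical cases, where several medium-sized irreducibles (for instance $\Lambda_0^2 F$ and $\Lambda_0^3 F$ for the symplectic series) coexist and can combine in multiple ways, and correctly assigning the reality type of self-dual symplectic and orthogonal irreducibles -- an error at this level would wrongly admit or discard a candidate. The arithmetic itself is elementary dimension counting, but the verification is tedious and must be checked case by case.
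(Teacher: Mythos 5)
Your proposal is correct and follows essentially the same route as the paper: reduce to a finite enumeration by listing all low-dimensional irreducibles with their Frobenius--Schur types (the paper cites Bourbaki and Fulton--Harris for these tables rather than rederiving them), impose the symplectic existence constraints (even multiplicity for real type, dual pairing for complex type) together with the exclusions $H^0(\mathfrak g,W)=0$ and $W\not\cong\mathfrak g\oplus\mathfrak g$, and solve the resulting dimension equation case by case. The only minor difference is that the paper observes that non-quaternionic irreducibles need only be listed up to dimension $\dim\mathfrak g$ (since they necessarily occur with multiplicity or in dual pairs), a small efficiency over your tabulation up to $2\dim\mathfrak g$.
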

\begin{proof}
First note \cite[p.~217-218]{BourbakiIII} that the only simple Lie algebras $\mathfrak g$ which admit an irreducible symplectic representation of dimension at most $2 \dim (\mathfrak g)$ are $\mathfrak{sl}_6, \mathfrak{so}_{11}$, $\mathfrak{so}_{12}$, $\mathfrak{so}_{13}$, $\mathfrak e_7$ and the symplectic Lie algebras. Furthermore for $n \geq 6$ the only irreducible symplectic representation of $\mathfrak{sp}_{2n}$ of desired dimension is the tautological representation. As for irreducible representations which are not symplectic, it~is~sufficient to consider those of dimension at most $\dim(\mathfrak g)$. Complete list of such representations is given in \cite[p.~414,~531-532]{FultonHarris}. Having established which representations may appear in the decomposition of $W$, one has to find all ways to add them together to get a representation of dimension $2 \dim (\mathfrak g)$. The end result of~this calculation is the table above.
\end{proof}

Our next goal is to determine which representations among those listed in~the~Proposition \ref{prop:dim_bounds} are admissible. The following fact rules out all but finitely many candidates.

\begin{proposition} \label{prop:exclude_fundamental}
Let $\mathfrak g= \mathfrak{sp}_{2k}$, $k \geq 1$ and $W = F^{\oplus 2k+1}$ or $\mathfrak g = \mathfrak{so}_{2k+1}$, $k \geq 1$ and $W = F^{\oplus 2k}$. Then $W$ is not admissible.
\end{proposition}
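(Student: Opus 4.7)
The plan is to pin down $J$ completely from the weight data of $W$, then reach a contradiction by computing the spectrum of $J$ on $W$ in two different ways. Throughout I take the standard basis $\epsilon_1, \dots, \epsilon_k$ of $\mathfrak h^*$ in which the weights of $F$ take their usual form.

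First, I would identify $\Xi$. Since $W$ is isotypic, its weights coincide with those of $F$: namely $\{\pm\epsilon_1, \dots, \pm\epsilon_k\}$ in the $\mathfrak{sp}_{2k}$ case and $\{\pm\epsilon_1, \dots, \pm\epsilon_k, 0\}$ in the $\mathfrak{so}_{2k+1}$ case. By Proposition \ref{prop:J_dimension}(2), $\Xi$ must contain a basis of $\mathfrak h^*$. The weight $0$ is excluded by Proposition \ref{prop:J_dimension}(1), and for each $i$ at most one of $\pm\epsilon_i$ can lie in $\mathcal H$. Hence $\epsilon_i(J) = \pm 1$ for every $i$, and a basis forces the nonzero sign in every case.

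Next, by Proposition \ref{prop:J_dimension}(5) I may pass to an isomorphic Airy datum in which $\alpha(J) \geq 0$ for all positive roots $\alpha$. In $\mathfrak{sp}_{2k}$ the root $2\epsilon_i$ is positive and forces $\epsilon_i(J) \geq 0$; in $\mathfrak{so}_{2k+1}$ the short positive root $\epsilon_i$ does the same. Combined with the previous step, $\epsilon_i(J) = 1$ for all $i$, so $J$ is uniquely determined.

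Finally, I would produce the contradiction via Proposition \ref{prop:J_spectrum}(3). Since every weight $\mu$ of $F$ satisfies $\mu(J) \in \{-1, 0, 1\}$, the spectrum of $J$ acting on $W = F^{\oplus m}$ is contained in $\{-1, 0, 1\}$. On the other hand, Proposition \ref{prop:J_spectrum}(3) describes this same spectrum as
\[
\{1 + \lambda_i\}_{i=1}^{n} \cup \{-1 - \lambda_i\}_{i=1}^{n},
\]
where the $\lambda_i$ are the eigenvalues of $\mathrm{ad}_J$. In $\mathfrak{sp}_{2k}$, the root $2\epsilon_1$ contributes $\lambda = 2$, so $\pm 3$ must appear; in $\mathfrak{so}_{2k+1}$, the root $\epsilon_1$ contributes $\lambda = 1$, so $\pm 2$ must appear. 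Both lie outside $\{-1, 0, 1\}$, contradicting the direct computation.

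I do not expect any serious obstacle beyond careful bookkeeping: the preliminary propositions do all the heavy lifting, and the argument is essentially forced once $J$ is shown to be the sum of fundamental coweights. The only mild subtlety is that the smallest cases ($\mathfrak{sp}_2 \cong \mathfrak{sl}_2$, $\mathfrak{so}_3 \cong \mathfrak{sl}_2$) must still be checked, but for these the same spectrum clash applies uniformly.
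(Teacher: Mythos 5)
Your proof is correct, but it takes a genuinely different route from the paper. The paper's argument never touches $J$: it observes that each pairwise pairing $\omega(\Omega_i,\Omega_j)$ of the components of $\Omega=(\Omega_1,\dots,\Omega_m)\in F^{\oplus m}$ is a $G$-invariant function vanishing at the origin, hence vanishes at $\Omega$ by Proposition \ref{prop:homog_inv}; the components therefore span an isotropic subspace of $F$, which sits inside a Lagrangian (resp.\ maximal isotropic) subspace $L$, and any nonzero $T\in\mathfrak g$ annihilating $L$ kills $\Omega$ --- contradicting the rank condition. That argument is shorter and more elementary, needing only the ``null cone'' observation and no normalization of the datum. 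Your argument instead runs the full $J$-sieve: Proposition \ref{prop:J_dimension}(1),(2),(5) pins down $\epsilon_i(J)=1$ for all $i$ (your bookkeeping here is right --- a basis inside $\{\pm\epsilon_i\}$ must use each index once, and dominance fixes the signs; note the Weyl reflection implementing point (5) preserves the conclusion $\epsilon_i(J)=\pm1$, so there is no circularity), and then the long root $2\epsilon_1$ (resp.\ the short root $\epsilon_1$) forces $3$ (resp.\ $2$) into $\mathrm{spec}_W(J)$ via Proposition \ref{prop:J_spectrum}(3), clashing with the fact that all weights of $F$ evaluate to $\{-1,0,1\}$ on $J$. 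What your version buys is uniformity: it shows the same spectral machinery that handles the sporadic candidates later in Section \ref{sec:simple_class} already disposes of these two infinite families, and it determines $J$ completely along the way; the cost is that it leans on the entire apparatus of Propositions \ref{prop:J_exists}--\ref{prop:J_spectrum}, where the paper's proof needs only Proposition \ref{prop:homog_inv} plus an elementary fact about annihilators of Lagrangian subspaces.
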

\begin{proof}
We present the proof for $\mathfrak g = \mathfrak{sp}_{2k}$. The second case is handled analogously. \\
Suppose that $(\mathfrak g, W, \Omega)$ is an Airy datum. Write $\Omega = (\Omega_1,...,\Omega_{2k+1})$, with $\Omega_i \in F$. Proposition \ref{prop:homog_inv} implies that elements $\Omega_i$ are pairwise orthogonal with respect to the symplectic form of $F$. Therefore they are contained in some Lagrangian subspace $L \subseteq F$. It~is easy to check that there exists a nonzero element $T \in \mathfrak g$ annihilating $L$. Thus~$T \Omega =0$, which is absurd.
\end{proof}

Most of the remaining representations are ruled out by the following construction. If~$\mathfrak g$ is simple, its invariant bilinear form is unique up to scale. Thus for any representation $W$ there is a real\footnote{One can show that this number is always positive and rational. More precisely, if $h^{\vee}(\mathfrak g)$ is the dual Coxeter number of $\mathfrak g$ (which is natural), then $h^{\vee}(\mathfrak g) \mathrm{ind}(W) \in \mathbb N$. We shall not use this result.} number $\mathrm{ind}(W)$ (called the index of $W$) such that $\mathrm{tr}_W(TS) = \mathrm{ind}(W) \mathrm{tr}_{\mathfrak g}(\mathrm{ad}_T \mathrm{ad}_S)$ for any $T, S \in \mathfrak g$.

\begin{proposition} \label{prop:trace_formula}
Let $(\mathfrak g, W,  \Omega)$ be an Airy datum with $\mathfrak g$ simple. Then
\begin{equation}
(\mathrm{ind}(W)-2) \mathrm{tr}_{\mathfrak g} ( \mathrm{ad}_J^2) = 2 \dim(\mathfrak g).
\label{eq:trace_formula}
\end{equation}
In particular we have an estimate
\begin{equation}
\mathrm{ind}(W) > 2.
\end{equation}
\end{proposition}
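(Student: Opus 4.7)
The plan is to compute $\mathrm{tr}_W(J^2)$ in two different ways and equate the results. On one hand, by the very definition of the index, $\mathrm{tr}_W(J^2) = \mathrm{ind}(W)\,\mathrm{tr}_{\mathfrak g}(\mathrm{ad}_J^2)$. On the other hand, Proposition \ref{prop:J_spectrum} tells us the full spectrum of $J$ acting on $W$: if $\lambda_1,\ldots,\lambda_n$ are the eigenvalues of $\mathrm{ad}_J$ (with $n = \dim \mathfrak g$), then the eigenvalues of $J$ on $W$ are $\{\pm(1+\lambda_i)\}_{i=1}^n$, so
\begin{equation*}
\mathrm{tr}_W(J^2) = \sum_{i=1}^n \bigl[(1+\lambda_i)^2 + (1+\lambda_i)^2\bigr] = 2n + 4\sum_i \lambda_i + 2\sum_i \lambda_i^2.
\end{equation*}
Using $\sum_i \lambda_i = 0$ (also from Proposition \ref{prop:J_spectrum}) and $\sum_i \lambda_i^2 = \mathrm{tr}_{\mathfrak g}(\mathrm{ad}_J^2)$, this simplifies to $\mathrm{tr}_W(J^2) = 2\dim \mathfrak g + 2\,\mathrm{tr}_{\mathfrak g}(\mathrm{ad}_J^2)$. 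Equating the two expressions immediately yields \eqref{eq:trace_formula}.

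For the strict inequality $\mathrm{ind}(W) > 2$, I would argue that $\mathrm{tr}_{\mathfrak g}(\mathrm{ad}_J^2)$ is a \emph{strictly positive} rational number. Positivity (in the weak sense) is clear: the $\lambda_i$ are rational by Proposition \ref{prop:J_dimension}.4, hence real, so $\sum_i \lambda_i^2 \geq 0$. Equality would force every $\lambda_i$ to vanish, i.e.\ $\mathrm{ad}_J = 0$, which means $J$ lies in the center of $\mathfrak g$. Since $\mathfrak g$ is simple, $Z(\mathfrak g) = 0$, so $J = 0$; but then $J\Omega = 0 \neq \Omega$, contradicting the defining property of $J$. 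Hence $\mathrm{tr}_{\mathfrak g}(\mathrm{ad}_J^2) > 0$, and since $\dim \mathfrak g > 0$, the formula \eqref{eq:trace_formula} forces $\mathrm{ind}(W) - 2 > 0$.

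There is no real obstacle here: the entire argument is a short accounting exercise, with all the substantial input (the spectrum of $J$, rationality, and $J \neq 0$) already packaged in the preceding propositions. The only subtle point worth stating carefully is why $J \neq 0$ for a simple $\mathfrak g$, which is what allows the inequality to be strict rather than weak.
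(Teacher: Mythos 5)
Your proof is correct and follows essentially the same route as the paper: both compute $\mathrm{tr}_W(J^2)$ once via the index and once via the spectrum from Proposition \ref{prop:J_spectrum}, use $\sum_i\lambda_i=0$, and then deduce strict positivity of $\mathrm{tr}_{\mathfrak g}(\mathrm{ad}_J^2)$ from rationality of the $\lambda_i$ together with $J\neq 0$. Your only extra care is spelling out why the eigenvalues cannot all vanish (note that passing from ``all $\lambda_i=0$'' to ``$\mathrm{ad}_J=0$'' uses the semisimplicity of $J$ from Proposition \ref{prop:J_exists}), a step the paper states without elaboration.
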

\begin{proof}
Let $\lambda_1, ..., \lambda_n$ be the eigenvalues of $\mathrm{ad}_J$. By Proposition \ref{prop:J_spectrum} we have
\begin{equation}
\mathrm{ind}(W) \mathrm{tr}_{\mathfrak g} ( \mathrm{ad}_J^2) = \mathrm{tr}_W (J^2) = 2 \sum_{i=1}^n (1+\lambda_i)^2 = 2 \mathrm{tr}_{\mathfrak g} ( \mathrm{ad}_J^2) + 2n,
\end{equation}
where we used $\sum_{i=1}^n \lambda_i =0$. Rearrangement of this equation yields (\ref{eq:trace_formula}). \\
Since the eigenvalues of $\mathrm{ad}_J$ are rational and not all equal to zero, $\mathrm{tr}_{\mathfrak g} ( \mathrm{ad}_J^2) >0$. Similarly, $\dim(\mathfrak g) >0$. Therefore equation (\ref{eq:trace_formula}) enforces that $\mathrm{ind}(W)-2 >0$.
\end{proof}

Computation of indices of representations listed in Proposition \ref{prop:dim_bounds} excludes all simple Lie algebras except of $\mathfrak{sl}_2$, $\mathfrak{sp}_4$ and $\mathfrak{sp}_{10}$. Each of these algebras admits two non-isomorphic Airy data, as we will demonstrate by explicit calculations.


\subsection{Lie algebra $\mathfrak{sl}_2$} \label{sec:sl2_clas}

Due to the isomorphism $\mathfrak{sl}_2 \cong \mathfrak{sp}_2$, admissibility of the representation $F^{\oplus 3}$ is~excluded by the Proposition \ref{prop:exclude_fundamental}. We will show that $F \oplus \mathrm{Sym}^3 F$ and $\mathrm{Sym}^5 F$ are~admissible, and that there exist two isomorphism classes of Airy data for $\mathfrak{sl}_2$.

Let $H, X, Y$ be the standard basis \cite{FultonHarris} of $\mathfrak{sl}_2$. These elements satisfy
\begin{equation}
[H,X] =  2X, \quad \quad [H,Y] = -2 Y, \quad \quad [X,Y]=H.
\end{equation}
We work with the canonical basis of $F$, $e_1 =\begin{pmatrix} 1 \\0 \end{pmatrix}$, $e_2 = \begin{pmatrix} 0 \\ 1 \end{pmatrix}$. Symplectic form on~$F$ is defined by $\omega(e_1,e_2)=1$, with remaining matrix elements fixed by bilinearity and skew-symmetry. Define
\begin{equation}
e_{i_1...i_k} = \mathrm{Sym}^k \left( e_{i_1} \otimes ... \otimes e_{i_k} \right) \in \mathrm{Sym}^k F.
\end{equation}
Set $\{ e_{i_1... i_k} \}_{1 \leq i_1 \leq ... \leq i_k \leq 2}$ is a basis of $\mathrm{Sym}^k F$. This module is symplectic if $k$ is odd, with the symplectic form determined by the equation
\begin{equation}
\omega \left( e_{i_1 ... i_k}, e_{j_1... j_k} \right) = \sum_{\sigma \in S_k} \prod_{l=1}^k \omega \left( e_{i_l},e_{j_{\sigma(l)}} \right) .
\end{equation}

Consider first the representation $W = F \oplus \mathrm{Sym}^3 F$. Spectrum of $H$ in $W$ is~$\{ \pm 3, \pm 1, \pm 1 \}$, so the only candidate for $J$ is $H$. Projection of $\Omega$ onto each of the summand of $W$ must be nonzero (for otherwise the linear span of $H \Omega$, $X \Omega$ and $Y \Omega$ could not be Lagrangian), so~we have $\Omega = ( s e_1, t e_{112})$ with some $s , t \in \mathbb C^{\times}$. Acting with a diagonal element of $\mathrm{SL}_2$ we may put $s=1$. Simple calculation shows that then assumptions of the Proposition \ref{prop:Omega_criterion} are satisfied if and only if $4 t^2 = 1$. We define
\begin{equation}
\Omega_{\pm} = \left( e_1, \pm \frac{1}{2} e_{112} \right).
\end{equation}
By construction, $\Omega_{\pm} \in \Sigma_s$. It's easy to check that $\mathrm{Stab}(\Omega_{\pm})=0$ and that there exists no element $g \in \mathrm{SL}_2$ such that $g \cdot \Omega_+ = \Omega_-$. Therefore we conclude that
\begin{equation}
\Sigma_s \cong \mathrm{SL}_2 \sqcup \mathrm{SL}_2.
\end{equation}
Even though $\Sigma_s$ is disconnected, Airy data corresponding to distinct connected components are still isomorphic. Indeed, the two connected components of $\Sigma_s$ are interchanged by the $\mathfrak g$-module automorphism $W \ni (u,v) \mapsto (u,-v) \in W$.

Case $W = \mathrm{Sym}^5 F$ is handled similarly, with the result that one can take $J = \frac{H}{3}$, $\Omega = e_{11112}$. Space $\Sigma_s$ is connected, but in this case the stabilizer of $\Omega$ is nontrivial:
\begin{equation}
\mathrm{Stab}(\Omega) = \left \{ 1 , \exp \left( \pm \frac{2\pi i}{3} H \right) \right \} \subseteq \mathrm{SL}_2.
\end{equation}
In contrast to the previous example, $\mathrm{Stab}(\Omega)$ is not a normal subgroup of $\mathrm{SL}_2$. Thus $\mathrm{Orb}(\Omega)$ is not a Lie group. Nevertheless, $W$ is admissible and we have $\Sigma_s \cong \frac{\mathrm{SL}_2}{\mathbb Z_3}$. We~remark that this Airy datum was constructed for the first time in \cite{ABCD}.

We remark that $\mathrm{Aut}(\mathfrak g, W, \Omega) = \mathrm{Inn}(\mathfrak g, W, \Omega)$ for Airy data constructed in this section. This happens to be true for all Airy data for simple Lie algebras. For~semisimple Lie algebras both $\frac{\mathrm{Aut}(\mathfrak g, W, \Omega)}{\mathrm{AInn}(\mathfrak g, W, \Omega)}$ and $\frac{\mathrm{AInn}(\mathfrak g, W, \Omega)}{\mathrm{Inn}(\mathfrak g, W, \Omega)}$ may be nontrivial, as~demonstrated by examples in Section \ref{sec:ss_examples}.

Airy data admit a real structure $\sigma$ with $\mathfrak g^{\sigma} = \mathfrak{sl}_2(\mathbb R)$. It is defined by $\sigma(Z)=Z$ for $Z \in \{ H,X,Y \}$, $\sigma(e_i)=e_i$ for $i \in \{1,2 \}$ and extended to other representations by~demanding that $\sigma$ is a homomorphism of the tensor algebra.

\subsection{Lie algebra $\mathfrak{sp}_4$} \label{sec:sp4_Airy}

We will now consider the Lie algebra $\mathfrak g = \mathfrak{sp}_4$. Representations $F^{\oplus 5}$ and $\widetilde F^{\oplus 4}$ are ruled out by Proposition \ref{prop:exclude_fundamental}. We will show that $\mathrm{Sym}^3 F$ is also not admissible, while $F \otimes \widetilde F$ admits two non-isomorphic Airy data.



\begin{figure}[h]
\centering
\includegraphics[width=13cm]{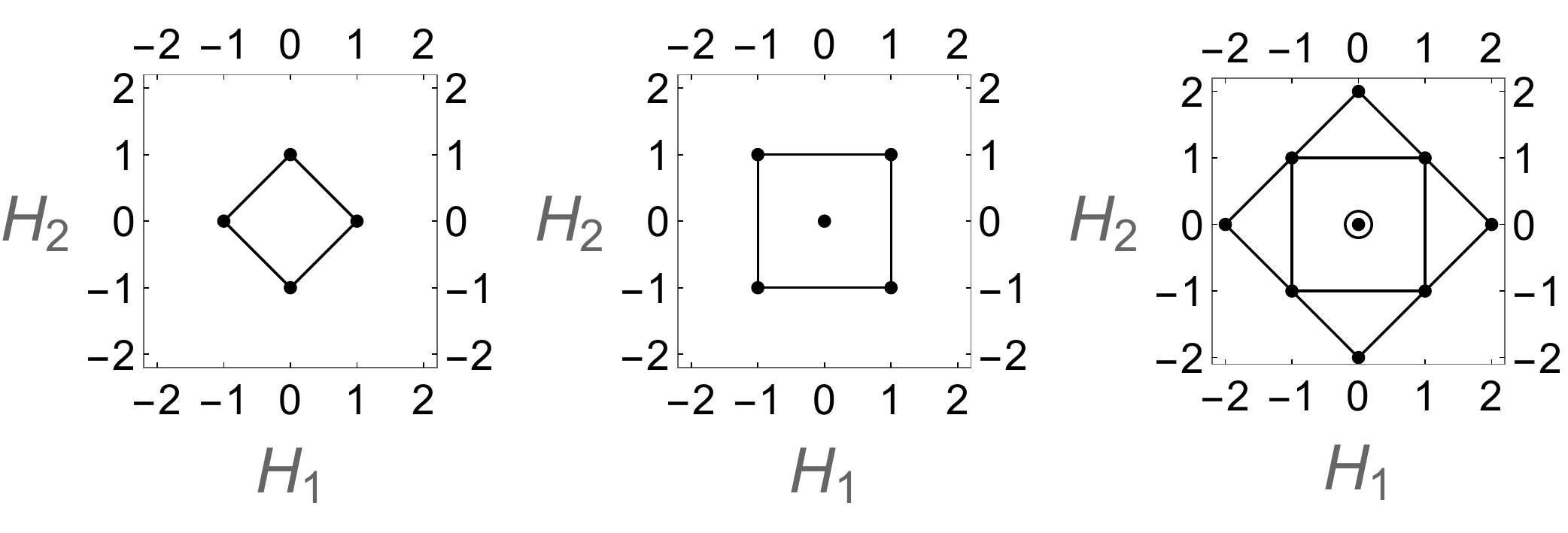}
\caption{Weight diagrams for representations $F, \widetilde F$ and the adjoint of $\mathfrak{sp}_4$. Weights are represented by dots, with surrounding circles indicating multiplicities. We draw parallelograms invariant under the Weyl group action to help the reader to see the~symmetry of the diagrams.}
\label{fig:F_Ft_Adj_Weights}
\end{figure}

We choose the standard \cite{FultonHarris} Cartan subalgebra, set of positive roots and basis
\begin{equation}
\{ H_1, H_2, U_1, U_2, V_1, V_2, X_{12}, X_{21}, Y_{12}, Z_{12} \}
\end{equation}
in $\mathfrak g$. Tautological representation is spanned by $e_1, e_2, e_3, e_4$, with symplectic form whose only (up to skew-symmetry) nonzero matrix elements are
\begin{equation}
\omega(e_1,e_3) = \omega(e_2, e_4) = 1.
\end{equation}

Representation $\widetilde F$ is a codimension one direct summand in $\Lambda^2 F$. Thus we put $e_{ij} = e_i \wedge e_j$ for $1 \leq i < j \leq 4$. Scalar product on $\Lambda^2 F$ is defined by
\begin{equation}
(e_{ij}, e_{kl}) = 2 \omega (e_i, e_k) \omega(e_j, e_l) - 2 \omega(e_i, e_l) \omega(e_j, e_k).
\end{equation}
We define also $\eta = e_{13} - e_{24}$. Set $\{ e_{12}, e_{23}, e_{34}, e_{14}, \eta \}$ is a basis of $\widetilde F$. Finally, the~symplectic form on $F \otimes \widetilde F$ is defined first on decomposable tensors
\begin{equation}
\omega(x_1, \otimes y_1, x_2 \otimes y_2) = \omega(x_1,x_2) (y_1, y_2) \quad \quad \mathrm{for} \ x_1, x_2 \in F, y_1, y_2 \in \widetilde F
\end{equation}
and extended to the whole space by bilinearity. Weight diagrams for the most basic representations of $\mathfrak g$ are presented in Figure \ref{fig:F_Ft_Adj_Weights}. We shall also consider slightly more complicated representations $\mathrm{Sym}^3 F$ and $F \otimes \widetilde F$. It will be important that the latter is reducible. More precisely, contraction with the symplectic form yields a~nonzero $\mathfrak g$-module epimorphism $\mathrm{tr} : F \otimes \widetilde F \to F$. Kernel of this map is~an~irreducible representation, which we denote by $F^{\perp}$.

\begin{figure}[h]
\centering
\includegraphics[width=12cm]{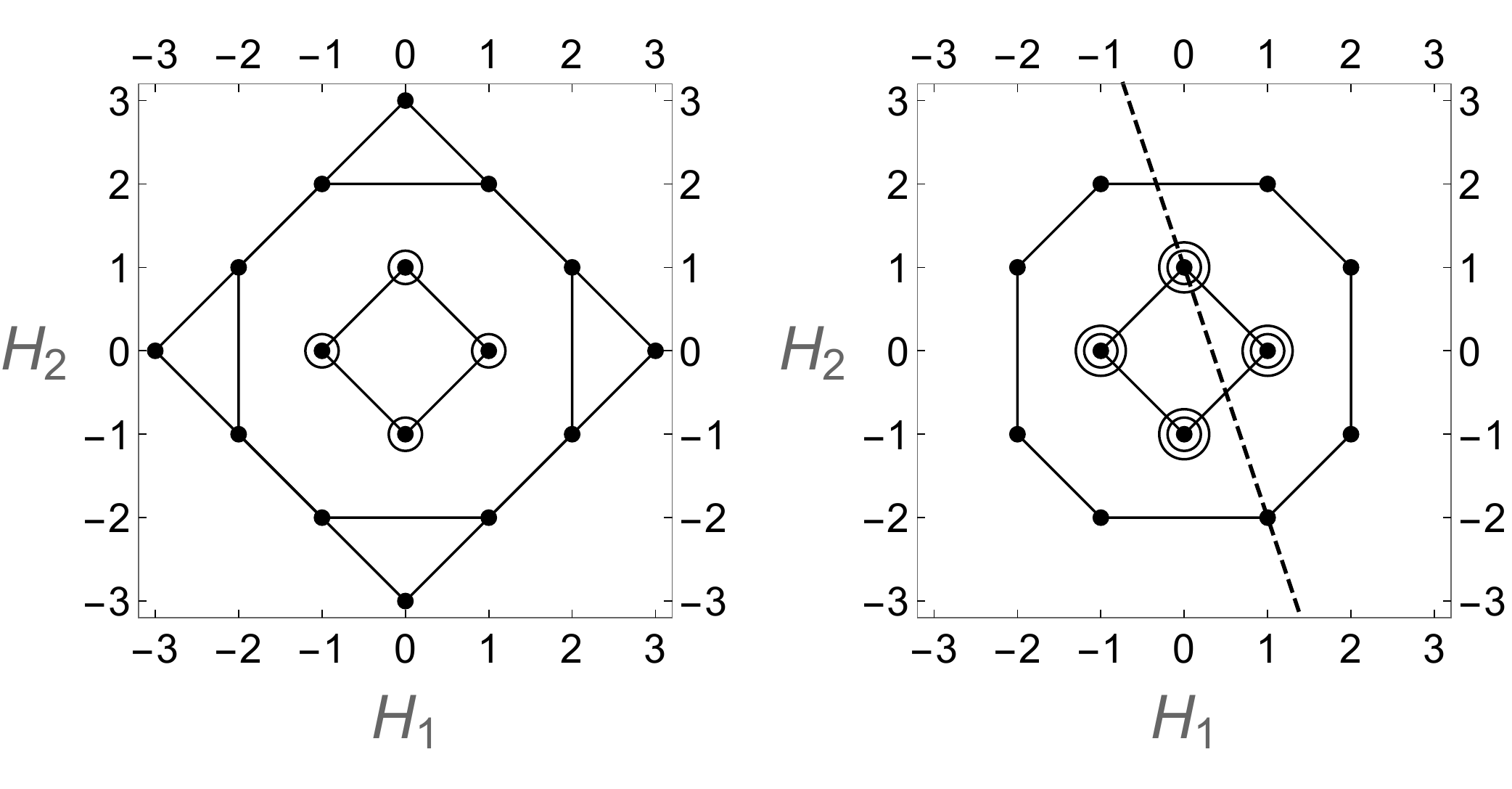}
\caption{Weight diagrams for representations $\mathrm{Sym}^3 F$ and $F \otimes \widetilde F$ of $\mathfrak{sp}_4$. Dashed line represents the set $\mathcal H$ for Airy structures found in this section.}
\label{fig:Sym3F_W_Weights}
\end{figure}

Examination of the weight diagrams of the adjoint representation and of $\mathrm{Sym}^3 F$ (see Figure \ref{fig:Sym3F_W_Weights}) shows that the only possible forms of $J$ not excluded by the Proposition \ref{prop:J_dimension} are $H_1, 3H_1 + H_2$ and $\frac{3H_1 + H_2}{3}$. Proposition \ref{prop:trace_formula} yields $\mathrm{tr}_{\mathfrak g} (\mathrm{ad}_J^2) = \frac{40}{3}$, which is not true for any of the candidates. Thus $\mathrm{Sym}^3 F$ is not admissible. In the case of $F \otimes \widetilde F$, the only candidates for $J$ are $3H_1 + H_2$ and $H_1 + H_2$. Proposition \ref{prop:trace_formula} gives $\mathrm{tr}_{\mathfrak g}(\mathrm{ad}_J^2)=120$. This is satisfied for $3H_1 + H_2$. Spectral test is also passed:
\begin{subequations}
\begin{gather}
\mathrm{spec}_{\mathfrak g} ( \mathrm{ad}_J ) = \{ 0,0, \pm 2, \pm 2, \pm 4, \pm 6 \}, \\
 \mathrm{spec}_W(J) = \{ \pm 1, \pm 1, \pm 1 \pm 1, \pm 3, \pm 3, \pm 3, \pm 5, \pm 5, \pm 7 \}.
 \end{gather}
\end{subequations}

We put $W= F \otimes \widetilde F$, $J=3H_1 + H_2$ and look for $\Omega \in \Sigma_s \subseteq W$ satisfying $J \Omega = \Omega$.  General solution of this eigenvalue equation takes the form
\begin{equation}
\Omega = s e_1 \otimes e_{23} + t e_3 \otimes e_{12} + u e_2 \otimes \eta + v e_4 \otimes e_{14}
\end{equation}
with some $s,t, u ,v \in \mathbb C$. We must have $v \neq 0$, for otherwise $H_2 \Omega =0$. Furthermore we have $\mathrm{tr}(\Omega) =(u - s -t) e_2$. Thus if we had $u - s -t=0$, submodule of $W$ generated by $\Omega$ would be a proper symplectic subspace, and hence couldn't contain a Lagrangian subspace. We conclude that $u- s -t \neq 0$. By passing to another vector related by the action of diagonal matrices in $\mathrm{Sp}_4$, we may put $u = 1 + s +t$ and $v=1$. The next step is to compute elements of $W$ obtained by acting on $\Omega$ with elements of $\mathfrak g$. We list them in the order of decreasing eigenvalue of $J$ (consecutive eigenvalues are $7, 5, 3, 3, 1, 1, -1 ,-1 , -3 , -5$):
\begingroup
\allowdisplaybreaks
\begin{subequations} \label{eq:sp4_lagrangian}
\begin{align*}
U_1 \Omega &= (t-s) e_1 \otimes e_{12}, \\
Y_{12} \Omega &= (2 + 2s + 3t) e_2 \otimes e_{12} + e_1 \otimes e_{14}, \\
U_2 \Omega &= e_2 \otimes e_{14} + e_4 \otimes e_{12}, \\
X_{12} \Omega &= (1+2s +t) e_1 \otimes \eta - 2 (1+s +t) e_2 \otimes e_{14} - t e_4 \otimes e_{12}, \\
H_1 \Omega &= e_4 \otimes e_{14}, \\
H_2 \Omega &= s e_1 \otimes e_{23} + t e_3 \otimes e_{12} + (1+s +t) e_2 \otimes \eta -2 e_4 \otimes e_{14}, \tag{\ref{eq:sp4_lagrangian}} \\
V_2 \Omega &= -s e_1 \otimes e_{34} + t e_3 \otimes e_{14} + (1+s+t) e_4 \otimes \eta, \\
X_{21} \Omega &= (2+3s + 2 t) e_2 \otimes e_{23} - e_3 \otimes e_{14} - e_4 \otimes \eta, \\
Z_{12} \Omega &= s e_4 \otimes e_{23} + (1+s+2t) e_3 \otimes \eta - 2(1+s+t) e_2 \otimes e_{34}, \\
V_1 \Omega &= (s-t) e_3 \otimes e_{23} + e_4 \otimes e_{34}.
\end{align*}
\end{subequations}%
\endgroup
The only nontrivial scalar products between vectors listed above are\footnote{This calculation is greatly facilitated by the fact that eigenvectors of $J$ are orthogonal unless their eigenvalues add up to zero.}:
\begin{subequations} \label{eq:sp4_scalar_products}
\begin{gather*}
\omega(Y \Omega, V_1 \Omega) = \omega(U_2 \Omega, Z_{12} \Omega) = \omega(H_2 \Omega, X_{21} \Omega)  = \omega( X_{21} \Omega,H_1 \Omega) = 4 + 6s + 4t,  \\
\omega(X_{12} \Omega, Z_{12} \Omega) = \omega(H_2 \Omega, V_2 \Omega) = 4 + 8s + 8t + 4 s^2 + 12 st + 4t^2. \tag{\ref{eq:sp4_scalar_products}}
\end{gather*}
\end{subequations}
All these scalar products vanish if and only if $(s,t)$ is chosen as $\left( - \frac{4}{5}, \frac{1}{5} \right)$ or $\left(0, -1 \right)$. Vectors (\ref{eq:sp4_lagrangian}) are linearly independent in both cases. This means that we have found two Airy data, with $\Omega$ of one of the following forms:
\begin{subequations}
\begin{align}
\Omega_1 &= - \frac{4}{5} e_1 \otimes e_{23} + \frac{1}{5} e_3 \otimes e_{12} + \frac{2}{5} e_2 \otimes \eta + e_4 \otimes e_{14}, \\
\Omega_2 &= - e_3 \otimes e_{12} + e_4 \otimes e_{14}.
\end{align}
\end{subequations}
Now let $p$ be the projection onto $F^{\perp} \subseteq W$. We have
\begin{subequations}
\begin{align}
p(\Omega_1) &= - \frac{7}{15} e_1 \otimes e_{23} + e_2 \otimes \left( \frac{2}{5} e_{13} - \frac{1}{15} e_{24} \right) + \frac{8}{15} e_3 \otimes e_{12} + e_4 \otimes e_{14}, \\
p( \Omega_2) &= \frac{1}{3} e_1 \otimes e_{23} + \frac{1}{3} e_2 \otimes e_{24} - \frac{2}{3} e_3 \otimes e_{12} + e_4 \otimes e_{14}.
\end{align}
\label{eq:sp4_projected_omega}
\end{subequations}

We claim that Airy data $(\mathfrak g, W,  \Omega_1)$ and $(\mathfrak g, W, \Omega_2)$ are not isomorphic. Indeed, suppose that $(\phi, f) : (\mathfrak g, W, \Omega_1) \to (\mathfrak g, W,  \Omega_2)$ is an isomorphism. Every automorphism of $\mathfrak g$ is inner, so $\phi = \mathrm{Ad}_D$ for some $D \in \mathrm{Sp}_4$. Clearly $\mathrm{Ad}_D(J) =J$. Since $J$ is a diagonal matrix with distinct eigenvalues, this implies that $D$ is~diagonal. Origin of $W$ is the unique point where $\ell$ vanishes to second order, so~$\ell  = \phi^t \circ \ell \circ f$ implies that $f(0) =0$, i.e. $f$ is a linear map. On the other hand we have $\ell = \mathrm{Ad}_{D}^t \circ \ell \circ D$, so $T= D^{-1} \circ f $ is a $\mathfrak g$-module automorphism. Using Schur's lemma and (\ref{eq:sp4_projected_omega}) we see that no map of the form $f = D \circ T  $ with $T \in \mathrm{End}_{\mathfrak g}(W)$ carries $\Omega_1$ to $\Omega_2$, which completes the proof.

It follows from the previous paragraph that the orbits of $\Omega_1$ and $\Omega_2$ are distinct. To complete the computation of $\Sigma_s$, notice that any element of $\mathrm{Stab}(\Omega_i)$, $i=1,2$ commutes with $J$, so it has to be diagonal. Given this information, it is easy to check that stabilizers of $\Omega_1$ and $\Omega_2$ are trivial. Therefore we have
\begin{equation}
\Sigma_s \cong \mathrm{Sp}_4 \sqcup \mathrm{Sp}_4.
\end{equation}

We note that the element $J$ belongs to an $\mathfrak{sl}_2$ triple embedded in $\mathfrak g$, which is unique up to automorphism of the form $\mathrm{Ad}_D$ with diagonal $D \in \mathrm{Sp}_4$. For example we can put $J^+ = U_2 + X_{12}$ and $J^-= 4 V_2 + 3 X_{21}$. Then $[J,J^{\pm}]= \pm 2 J^{\pm}$, $[J^+, J^-]= J$. Regarding $\mathfrak g$ and $W$ as $\mathfrak{sl}_2$-modules, they decompose as
\begin{equation}
\mathfrak g \cong \mathfrak{sl}_2 \oplus \mathrm{Sym}^6 \mathbb C^2, \quad \quad W \cong \bigoplus_{j=0}^3 \mathrm{Sym}^{2j+1} \mathbb C^2.
\end{equation}

We remark that Airy data $\{ (\mathfrak g, W, \Omega_i) \}_{i=1}^2$ admit no nontrivial automorphisms. However they do admit a real structure $\sigma$ with $\mathfrak g^{\sigma} = \mathfrak{sp}_4(\mathbb R)$. Its construction is~analogous to that in the Subsection \ref{sec:sl2_clas}.

\subsection{Lie algebra $\mathfrak{sp}_{10}$}

The last simple Lie algebra to consider is $\mathfrak{sp}_{10}$. The only candidate for $W$ is $\Lambda_0^3 F$. We will use trace techniques to completely determine $J$. Once this is~done, we will find $\Omega$. Due to the large number of variables, calculations are difficult to carry out manually. We have performed them using symbolic algebra software.

Bases in $\mathfrak g$ and $F$ are chosen as for $\mathfrak{sp}_4$. Module $\Lambda_3 F$ has basis $\{ e_{ijk} \}_{1 \leq i < j < k \leq 5}$, where $e_{ijk} = e_i \wedge e_j \wedge e_k$. Symplectic form on $\Lambda^3 F$ is given by
\begin{equation}
\omega(e_{i_1 i_2 i_3}, e_{j_1 j_2 j_3}) = \sum_{\sigma \in S_3} \omega(e_{i_1}, e_{j_{\sigma(1)}}) \omega(e_{i_2}, e_{j_{\sigma(2)}}) \omega(e_{i_3}, e_{j_{\sigma(3)}}).
\end{equation}

Plugging the spectrum (\ref{eq:J_spectrum_formula}) into (\ref{eq:sp10_W_to_adj}) we obtain a system of five equations for five indeterminates $\{ \mathrm{tr} (\mathrm{ad}_J^{2k}) \}_{k=1}^5$. Its solution takes the form
\begin{subequations} \label{eq:sp10_trJ}
\begin{gather*}
\mathrm{tr} \left( \mathrm{ad}_J^2 \right) = 440, \quad \quad
\mathrm{tr} \left( \mathrm{ad}_J^4 \right) = \frac{24992}{3}, \quad \quad
\mathrm{tr} \left( \mathrm{ad}_J^6 \right) = \frac{16846720}{81}, \\
\mathrm{tr} \left( \mathrm{ad}_J^8 \right) = \frac{4329729536}{729}, \quad \quad
\mathrm{tr} \left( \mathrm{ad}_J^{10} \right) = \frac{133476300800}{729}. \tag{\ref{eq:sp10_trJ}}
\end{gather*}
\end{subequations}
Plugging this result into (\ref{eq:sp10_E_to_adj}) gives
\begin{subequations} \label{eq:sp10_EJ}
\begin{gather*}
E_2(J) = \frac{55}{3}, \quad \quad
E_4(J)  = \frac{2926}{27}, \quad \quad
E_6(J)  = \frac{172810}{729},  \\
E_8(J) = \frac{117469}{729}, \quad \quad
E_{10}(J) = \frac{1225}{729}. \tag{\ref{eq:sp10_EJ}}
\end{gather*}
\end{subequations}
Now expand $J= \sum_{i=1}^5 J^i H_i$ and consider the polynomial
\begin{equation}
\chi(t) = \prod_{i=1}^5 \left( t - (J^i)^2 \right) = t^5 + \sum_{k=1}^5 E_{2k}(J) (-t)^{5-k}.
\end{equation}
Its simple to check that roots of $\chi$ take the form $\{ \left( \frac{9-2j}{3} \right)^2 \}_{j=0}^4$. It follows that $J$ is~determined uniquely up to the action of the Weyl group to be
\begin{equation}
J = \frac{1}{3} \left( 9 H_1 + 7 H_2 + 5 H_3 + 3 H_4 + H_5 \right).
\label{eq:sp10_J}
\end{equation}
One can check that this element satisfies the spectral test (Proposition \ref{prop:J_spectrum}). We read off that $\Xi = \{ \mu_i \}_{i=1}^7$, where
\begin{subequations} \label{eq:sp10_mu_defns}
\begin{gather*}
\mu_1= (-1,1,1,0,0), \quad \quad
\mu_2 = (0,0,0,1,0), \quad \quad
\mu_3 = (1,-1,0,0,1),  \\
\mu_4 = (0,1,-1,0,1), \quad \quad
\mu_5 = (0,0,1,-1,1),  \\
\mu_6 = (1,0,-1,0,-1), \quad \quad
\mu_7 = (0,1,0,-1,-1), \tag{\ref{eq:sp10_mu_defns}}
\end{gather*}
\end{subequations}
in which $(c_1,c_2,c_3,c_4,c_5) = \sum_{i=1}^5 c_i L_i \in \mathfrak h^*$, with $L_i(H_j) = \delta_{ij}$.

It can be shown that vanishing of the component of $\Omega$ along $W_{\mu_i}$ with $i \in \{ 1, 2, 3\}$ implies that $\Omega$ is annihilated by some root vector of $\mathfrak g$. Furthermore the set $\Xi \setminus \{ \mu_4 \}$ doesn't span $\mathfrak h^*$, so vanishing of the component of $\Omega$ along $W_{\mu_4}$ would imply that $\Omega$ is annihilated by some diagonal element of $\mathfrak g$. Finally we observe that $\mu_5$ belongs to the linear span of $\{ \mu_i \}_{i=1}^4$, so $\Omega$ has to have nonzero component along $\mu_6$ or $\mu_7$. It~follows that up to action of diagonal elements of~$\mathrm{Sp}_{10}$ we must have
\begin{subequations} \label{eq:sp10_Omega}
\begin{gather*}
\Omega = e_{2,3,6} + e_{1,5,7} + e_{2,5,8} + a e_{3,5,9} + b e_{1,8,10} + c e_{2,9,10} \\
\alpha e_{4,1,6} + \beta e_{4,2,7} + \gamma e_{4,3,8} - (\alpha + \beta + \gamma) e_{4,5,10}. \tag{\ref{eq:sp10_Omega}}
\end{gather*}
\end{subequations}
where $(\alpha, \beta , \gamma) \neq (0,0,0)$ and $(b,c) \neq (0,0)$.

In principle there exists a huge amount of polynomials equations $\Omega$ has to satisfy in order for the subspace $\{ T \Omega \}_{T \in \mathfrak g} \subset W$ to be isotropic, but it turns out that only four of them are linearly independent. They take the form
\begin{subequations}
\begin{gather}
c =ab, \\
b=c (\alpha + 2 \beta + \gamma), \\
a (\alpha + \beta + 2 \gamma) = -1, \\
\alpha^2 + \beta^2 + \gamma^2 =- \alpha \beta - \beta \gamma - \gamma \alpha.
\end{gather}
\end{subequations}
The first equation implies that $b \neq 0$, for otherwise we would have $(b,c) = (0,0)$. Eliminating $c$ using the first equation we get that the second equation is inconsistent for $a=0$. Therefore also $c \neq 0$, by the first equation. Using the action of the diagonal elements of $\mathrm{Sp}_{10}$ again we may put $a=b=1$. Then $c=1$ also, by the first equation. The second and the third equation may be used to express $\alpha$ and $\beta$ as affine functions of $\gamma$:
\begin{equation}
\alpha = -3(\gamma+1), \quad \quad \beta = \gamma+2.
\end{equation}
Plugging this into the fourth equation we get a quadratic
\begin{equation}
6 \gamma^2 + 12 \gamma+7=0.
\end{equation}
Its solutions are complex conjugate and take the form
\begin{equation}
\gamma_{\pm} = \frac{1}{6} \left( - 6 \pm i \sqrt{6} \right).
\end{equation}
We denote vectors $\Omega$ corresponding to the two solutions by $\Omega_{\pm}$. By calculating ranks of certain matrices one may check that the subspaces $\{ T \Omega_{\pm} \}_{T \in \mathfrak g}$ have dimensions~$55$, and hence are Lagrangian. This means that $(\mathfrak g, W, \Omega_{\pm})$ is an Airy datum.

We ask if $(\mathfrak g, W, \Omega_{\pm})$ are isomorphic. Since the element $J$ is regular and $W$ is~irreducible, it is sufficient to check if there exists a diagonal element $D \in \mathrm{Sp}_{10}$ such that $D \Omega_+= \Omega_-$ or $D \Omega_+ = -\Omega_-$. This is a system of equations for the five independent matrix elements of $D$. One may show that no solution exists, so~that the two Airy structures are non-isomorphic. In particular $\Sigma_s$ has two connected components. This is striking, because there surely exists such isomorphism if we give up linearity over $\mathbb C$. In our standard bases it is given by the complex conjugation.

Finally, let's compute stabillizers of $\Omega_{\pm}$. Once again, since $J$ is regular, it is sufficient to consider diagonal elements of $\mathrm{Sp}_{10}$. It turns out that the equation $D \Omega_{\pm} = \Omega_{\pm}$ has three solutions, so $\mathrm{Stab}(\Omega_{\pm}) \cong \mathbb Z_3$. Therefore we have
\begin{equation}
\Sigma_s \cong \frac{\mathrm{Sp}_{10}}{\mathbb Z_3} \sqcup \frac{\mathrm{Sp}_{10}}{\mathbb Z_3}.
\end{equation}

\subsection{Summary}\label{ss:summary}

\begin{table}[ht]
\centering
\begin{tabular}{l|l|l|l|l}
$\mathfrak g$ & $W$  & $J$ & $\mathrm{Stab}(\Omega)$ & $\Sigma_s$   \\ \hline
$\mathfrak{sl}_2$ & $\mathrm{Sym}^5F$  & $\frac{1}{3} H$  &  $\mathbb Z_3$ & $\mathrm{SL}_2 / \mathbb Z_3$   \\
$\mathfrak{sl}_2$ & $\mathrm{Sym}^3F \oplus F$ & $H$ & $0$  & $\mathrm{SL}_2 \sqcup \mathrm{SL}_2$  \\
$\mathfrak{sp}_4$ & $F \otimes \widetilde F$  & $3H_1 + H_2$  & $0$  & $\mathrm{Sp}_4 \sqcup \mathrm{Sp}_4$  \\
$\mathfrak{sp}_4$ & $F \otimes \widetilde F$  & $3H_1 + H_2$  & $0$  & $\mathrm{Sp}_4 \sqcup \mathrm{Sp}_4$ \\
$\mathfrak{sp}_{10}$ & $\Lambda_0^3 F$ & $\frac{1}{3} \left( 9 H_1 + 7 H_2 + 5 H_3 + 3 H_4 + H_5 \right)$ & $\mathbb Z_3$ & $\frac{\mathrm{Sp}_{10}}{\mathbb Z_3} \sqcup \frac{\mathrm{Sp}_{10}}{\mathbb Z_3}$ \\
$\mathfrak{sp}_{10}$ & $\Lambda_0^3 F$ & $\frac{1}{3} \left( 9 H_1 + 7 H_2 + 5 H_3 + 3 H_4 + H_5 \right)$ & $\mathbb Z_3$ & $\frac{\mathrm{Sp}_{10}}{\mathbb Z_3} \sqcup \frac{\mathrm{Sp}_{10}}{\mathbb Z_3}$
\end{tabular}
\caption{List of isomorphism classes of Airy structures for simple Lie algebras. }
\label{tab:simple_Airy}
\end{table}

We have found that there exist six isomorphism classes of Airy data $(\mathfrak g, W , \Omega)$ with simple $\mathfrak g$. This is summarized in Table \ref{tab:simple_Airy}. It is worth noticing that $\mathfrak{sp}_4$ admits two non-isomorphic Airy data for which all invariant characteristics, such as the conjugacy class of $J$ or of the subgroup $\mathrm{Stab}(\Omega) \subseteq G$, coincide. This raises the question if there are other invariants which can be used to distinguish the two Airy data. Unfortunately, we have not found any. Similar statement applies to $\mathfrak{sp}_{10}$, but~in~this case lack of desired invariants is explained by the fact that the two Airy data are related by complex conjugation.

\section{Semisimple Lie algebras - examples} \label{sec:ss_examples}

\begin{proposition}
The number of isomorphism classes of indecomposable Airy data $(\mathfrak g, W, \Omega)$ with $\mathfrak g$~semisimple is countably infinite.
\end{proposition}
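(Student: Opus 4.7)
The upper bound on the cardinality is the easy direction. Up to isomorphism there are only countably many finite-dimensional semisimple Lie algebras over $\mathbb C$, since each is a finite unordered tuple of simple summands and simple Lie algebras form a countable set. For any fixed semisimple $\mathfrak g$, the earlier proposition stating that there are finitely many isomorphism classes of Airy data of the form $(\mathfrak g, W, \Omega)$ applies, so summing over $\mathfrak g$ yields at most countably many isomorphism classes of indecomposable Airy data with semisimple Lie algebra.

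For the lower bound I would exhibit an infinite family $\{(\mathfrak g_n, W_n, \Omega_n)\}_{n \geq 1}$ of pairwise non-isomorphic indecomposable Airy data with $\mathfrak g_n$ semisimple. Since $\dim \mathfrak g$ is~an~obvious isomorphism invariant, it suffices to produce one indecomposable example in infinitely many distinct dimensions. A natural choice is $\mathfrak g_n = \mathfrak{sl}_2^{\times n}$, for which $\dim \mathfrak g_n = 3n$ grows without bound; the representation $W_n$ must then be symplectic of dimension $6n$, and by the criterion proved in Section \ref{sec:semisimple_general}, indecomposability of the resulting Airy datum amounts to connectedness of the associated graph.

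A concrete prescription is to build $W_n$ as a "chain", incorporating tensor product summands such as $F_i \otimes F_{i+1}$ (where $F_i$ denotes the defining representation of the~$i$-th copy of $\mathfrak{sl}_2$) for $i=1,\ldots,n-1$, supplemented with appropriate single-factor summands of~the~form $\mathrm{Sym}^{k}F_i$ so that the total dimension matches $6n$. The chain structure ensures that no proper subfamily of simple factors can be detached, so the associated graph is a~connected path and the proposed Airy datum, if admissible, is indecomposable. Distinct values of $n$ then yield non-isomorphic Airy data because their Lie algebras have different dimensions.

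The essential obstacle is admissibility: one must supply $\Omega_n \in W_n$ such that $\{T\Omega_n\}_{T \in \mathfrak g_n}$ is Lagrangian, as in Proposition \ref{prop:Omega_criterion}. I would construct $\Omega_n$ link by link, with its $F_i \otimes F_{i+1}$ component modelled on an admissible vector for a small Airy datum over $\mathfrak{sl}_2 \times \mathfrak{sl}_2$ and~its single-factor components modelled on the known $\mathfrak{sl}_2$ examples from~Subsection \ref{sec:sl2_clas}. The Lagrangian condition reduces to a finite system of polynomial equations in the coefficients of $\Omega_n$ that can be verified stepwise along the chain, using the vanishing of pairings between eigenvectors of $J$ whose eigenvalues do~not sum to zero. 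Explicit families of indecomposable Airy data of exactly this flavour are constructed in Section \ref{sec:ss_examples}, and invoking one such family provides the required infinite list.
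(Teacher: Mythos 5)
Your upper bound is exactly the paper's: countably many semisimple Lie algebras, each admitting only finitely many Airy data by the finiteness proposition of Section \ref{sec:semisimple_general}. That half is fine.

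The lower bound is where the real content lies, and there your proposal has a genuine gap. First, the concrete prescription fails at the outset: for $\mathfrak{sl}_2 \times \mathfrak{sl}_2$ the module $F_i \otimes F_{i+1}$ carries the invariant bilinear form $\omega_i \otimes \omega_{i+1}$, which is \emph{symmetric}, so this irreducible is of real (orthogonal) type and can only occur in a symplectic module with even multiplicity. A single copy per link, as you propose, cannot sit inside a symplectic $W_n$; and doubling each link to $(F_i\otimes F_{i+1})^{\oplus 2}$ costs $8$ dimensions per link, which already exceeds the budget $2\dim\mathfrak g_n = 6n$ once $n\geq 5$. The paper avoids this by taking the links themselves to be symplectic: it uses the cyclic chain $W=\bigoplus_{i=1}^N \mathrm{Sym}^2 F^{(i)}\otimes F^{(i+1)}$, where each summand is of quaternionic type and of dimension exactly $6$, so the total dimension is $6N$ with no supplementary summands needed. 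Second, even setting aside the choice of $W_n$, you correctly identify admissibility as the essential obstacle but then do not resolve it: "verified stepwise along the chain" is not an argument, and "invoking one such family" from Section \ref{sec:ss_examples} is circular in a blind proof, since the existence of that family \emph{is} the claim to be established. The paper discharges this by exhibiting an explicit $J=\sum_i H^{(i)}$ and an explicit $\Omega$ (with coefficients $\alpha_i$ satisfying $\alpha_i=-\alpha_{i-1}^2$ and $\alpha_1^{2^N-1}+1=0$, the case $\alpha_i=0$ being worked out in full), and checking directly that $\{T\Omega\}_{T\in\mathfrak g}$ is Lagrangian. Without some such explicit admissible $\Omega_n$, your argument establishes only that the count is at most countable, not that it is infinite.
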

\begin{proof}
The upper bound follows from the fact that (up to isomorphism) there are countably many semisimple Lie algebras, each of which admits finitely many Airy data. We show that this bound is saturated by explicitly constructing an infinite family of mutually non-isomorphic indecomposable Airy data in the Subsection \ref{sec:infinite_family}.
\end{proof}

The only semisimple Lie algebra of rank $2$ which is not simple is $\mathfrak{sl}_2 \times \mathfrak{sl}_2$. Airy data for this algebra are classified in the Subsection \ref{sec:sl2sl2}. Already in this case the number of non-isomorphic Airy data is quite large. Therefore we don't carry out analogous computations for Lie algebras of rank $3$. Instead we will decide which algebras admit at least one Airy datum. This is facilitated by the following criterion.

Let $(\mathfrak g, W, \Omega)$ be an Airy datum with $\mathfrak g = \mathfrak g' \times \mathfrak g''$ semisimple. Let $G'$ be a~simply-connected Lie group with Lie algebra $\mathfrak g'$. Put $n' = \dim (\mathfrak g')$. $H^0(\mathfrak g', W)$ is a~symplectic submodule of $W$, so $W = W' \oplus H^0(\mathfrak g', W)$, where $W'$ is the orthogonal complement of $H^0(\mathfrak g', W)$. By construction $H^0(\mathfrak g', W')=0$. Let $\ell'$ be the moment map for the $\mathfrak g'$-action on $W'$ and define $\Sigma_s' = \{ q \in \ell'^{-1}(0) | \left. d \ell' \right|_{q} \text{ has rank } n'  \}$. We~have an~important inclusion
\begin{equation}
\Sigma_s \subseteq \Sigma_s' \times H^0(\mathfrak g', W').
\end{equation}
$\Sigma_s'$ is a coisotropic submanifold of $W'$ of dimension $2 \dim(W') - n' \geq n'$. This statement is particularly useful when this inequality is saturated. In this situation $\Sigma_s'$ is Lagrangian with a locally transitive $G'$-action. In particular there exists $J' \in \mathfrak g'$ such that $J' \Omega$ is the projection of $\Omega$ onto $W'$. More importantly, existence of an~Airy datum of the form $(\mathfrak g', W', \Omega')$ is necessary for $\Sigma_s \neq \emptyset$.

\begin{proposition}
Lie algebra $\mathfrak g_2 \times \mathfrak{sl}_2$ admits no Airy data.
\end{proposition}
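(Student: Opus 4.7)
The plan is to apply the criterion immediately preceding this proposition with $\mathfrak g' = \mathfrak g_2$ and $\mathfrak g'' = \mathfrak{sl}_2$, and to force its saturated case, where the criterion provides a sharp obstruction. Suppose, for contradiction, that $(\mathfrak g_2 \times \mathfrak{sl}_2, W, \Omega)$ is an Airy datum; then $\dim W = 34$. Let $W_1$ be the orthogonal complement of $H^0(\mathfrak g_2, W)$ in $W$. Since $\mathfrak g_2$ acts trivially on $H^0(\mathfrak g_2, W)$, we have $\mathfrak g_2 \Omega \subseteq W_1$.

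First I would bound $\dim W_1$ from below. The subspace $\mathfrak g \Omega \subseteq W$ is Lagrangian of dimension $17$; combined with the trivial bounds $\dim \mathfrak g_2 \Omega \leq 14$ and $\dim \mathfrak{sl}_2 \Omega \leq 3$, the decomposition $\mathfrak g \Omega = \mathfrak g_2 \Omega + \mathfrak{sl}_2 \Omega$ forces $\dim \mathfrak g_2 \Omega = 14$. Being contained in the Lagrangian $\mathfrak g \Omega$, the space $\mathfrak g_2 \Omega$ is isotropic in $W$, hence also isotropic in the symplectic subspace $W_1$. Therefore $\dim W_1 \geq 2 \cdot 14 = 28$.

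Next I would enumerate the possible $\mathfrak g_2$-module structures on $W_1$. All nontrivial irreducible representations of $\mathfrak g_2$ are of real type and have dimensions $7, 14, 27, \ldots$, so each must occur with even multiplicity in any symplectic $\mathfrak g_2$-module. Together with $28 \leq \dim W_1 \leq 34$ this leaves only $\dim W_1 = 28$, with $W_1 \cong F^{\oplus 4}$ or $W_1 \cong V_{14}^{\oplus 2} \cong \mathfrak g_2 \oplus \mathfrak g_2$. This is precisely the saturated case of the criterion, so there must exist an Airy datum of the form $(\mathfrak g_2, W_1, \Omega')$. However $\mathfrak g_2 \oplus \mathfrak g_2$ is ruled out by part 2 of Proposition \ref{prop:triv_adj_admissibility}, and $F^{\oplus 4}$ is ruled out by Proposition \ref{prop:trace_formula} (this is exactly the index computation invoked in Section \ref{sec:simple_class} to exclude $\mathfrak g_2$ from the list of simple Lie algebras admitting Airy structures). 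Either way we reach a contradiction.

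The main obstacle is the lower bound $\dim W_1 \geq 28$; once it is in hand, the rest is a short combinatorial enumeration followed by an immediate reduction to already-settled non-admissibility results for $\mathfrak g_2$.
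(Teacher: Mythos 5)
Your proof is correct and follows essentially the same route as the paper: reduce to the $\mathfrak g_2$-module $W'=W_1$, observe that only $F^{\oplus 4}$ and $\mathfrak g_2^{\oplus 2}$ are possible and that the bound $\dim W' \geq 2\dim\mathfrak g_2$ is saturated, then invoke the non-admissibility of these modules for $\mathfrak g_2$. The only (harmless) difference is that you rederive the lower bound $\dim W_1 \geq 28$ directly from the isotropic subspace $\mathfrak g_2\Omega$, whereas the paper obtains it from the coisotropicity of $\Sigma_s'$ in the general discussion preceding the proposition.
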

\begin{proof}
Put $\mathfrak g' = \mathfrak g_2$. Inspection of the list of irreducible representations of $\mathfrak g_2$ shows that the only possible forms of $W'$ are $F^{\oplus 4}$ and $\mathfrak g_2^{\oplus 2}$, where $F$ is the unique irreducible representation of dimension $7$. Bound $\dim(W') \geq 2 \dim (\mathfrak g')$ is saturated in both cases, so $\Sigma_s' = \emptyset$ follows from the fact that $\mathfrak g_2$ admits no Airy data.
\end{proof}

It happens to be true that also $\mathfrak{sl}_3 \times \mathfrak{sl}_2$ doesn't admit any Airy data, but in this case more complicated argument, presented in the Subsection \ref{sec:sl3sl2}, is required. Lie~algebra $\mathfrak{sp}_4 \times \mathfrak{sl}_2$ admits three decomposable Airy data and at least one indecomposable --- see~Subsection~\ref{sec:sp4_sl2}. This exhausts the list of semisimple Lie~algebras with two simple factors and rank $3$. The only remaining Lie algebra of~rank~$3$ is~$\prod_{i=1}^3 \mathfrak{sl}_2$. This one does admit an indecomposable Airy datum\footnote{We don't claim that there exist no other indecomposable Airy data for this Lie algebra.} which is~a~special case of the construction presented in the Subsection \ref{sec:infinite_family}.

\subsection{Lie algebra $\mathfrak{sl}_2 \times \mathfrak{sl}_2$} \label{sec:sl2sl2}

We will now present the list of indecomposable Airy data for $\mathfrak g =\mathfrak{sl}_2 \times \mathfrak{sl}_2$, up~to isomorphism. We omit details of calculations, which are analogous to previous sections. Canonical generators of the first (resp. second) copy of $\mathfrak{sl}_2$ will be denoted by $H, X, Y$ (resp. $\widetilde H, \widetilde X, \widetilde Y$). Similarly, their fundamental modules are denoted by $F$ and $\widetilde F$. They are generated by $e_1, e_2$ and $\widetilde e_1, \widetilde e_2$, respectively.

\begin{enumerate}
\item $W = (\mathrm{Sym}^4 F\otimes \widetilde F) \oplus \widetilde F$. In this case $\Sigma_s$ has two connected components, which turn out to correspond to isomorphic Airy data. One may take $J = \frac{1}{2} H + \widetilde H$. Vectors $\Omega$ corresponding to the two connected components take the form
\begin{equation}
\Omega_{\pm} = \left( \pm \frac{i}{2} e_{1122} \otimes \widetilde e_1 + e_{1111} \otimes \widetilde e_2, \widetilde e_1 \right).
\end{equation}
Stabilizers of $\Omega_{\pm}$ are isomorphic to $\mathbb Z_4$ and we have
\begin{equation}
\Sigma_s \cong \left( \frac{\mathrm{PSL}_2}{\mathbb Z_2} \times \mathrm{SL}_2 \right) \sqcup \left( \frac{\mathrm{PSL}_2}{\mathbb Z_2} \times \mathrm{SL}_2 \right).
\end{equation}
We remark that this is the first example considered in this paper in which $\mathrm{Stab}(\Omega)$ is not contained in the one-parameter subgroup of $G$ generated by $J$. Another novel feature is that the element $J$ doesn't belong to any $\mathfrak{sl}_2$ triple embedded in $\mathfrak g$. Regarding the automorphism group, we have
\begin{equation}
\mathrm{Aut}(\mathfrak g, W, \Omega) = \mathrm{Inn}(\mathfrak g ,W , \Omega) \cong \mathbb Z_2.
\end{equation}
\item $W = (\mathrm{Sym}^4 F \otimes \widetilde F) \oplus F$. In this case $\Sigma_s \cong \mathrm{SL}_2 \times \mathrm{SL}_2$. One may take $J = H + 3 \widetilde H$,
\begin{equation}
\Omega = \left( \frac{1}{48} e_{1222} \otimes \widetilde e_1 + e_{1111} \otimes \widetilde e_2, e_1 \right).
\end{equation}
This Airy datum admits no nontrivial automorphisms.
\item $W = \mathrm{Sym}^3 F \otimes \mathrm{Sym}^2 \widetilde F$. One may take $J = \frac{1}{3} H + \frac{2}{3} \widetilde H$, $\Omega = e_{111} \otimes \widetilde e_{12} + e_{122} \otimes \widetilde e_{11}$. Stabilizer of $\Omega$ turns is generated by the commuting elements $e^{2\pi i J}$ and $e^{ i \pi \widetilde H }$, of order $3$ and $2$, respectively. Thus $\Sigma_s \cong \frac{\mathrm{SL}_2 \times \mathrm{PSL}_2}{\mathbb Z_3}$. Furthermore we have
\begin{equation}
\mathrm{Aut}(\mathfrak g, W, \Omega) = \mathrm{Inn}(\mathfrak g ,W , \Omega) \cong \mathbb Z_3
\end{equation}
\item $W = (F \otimes \widetilde F)^{\oplus 2} \oplus \mathrm{Sym}^3 F$ with $J = H$ and $\Omega = (e_1 \otimes \widetilde e_1,2 e_1 \otimes \widetilde e_2, e_{112})$. This is the only example in this paper in which $J$ is not a regular element. Stabilizer of $\Omega$ is trivial, so $\Sigma_s \cong \mathrm{SL}_2 \times \mathrm{SL}_2$. Nevertheless, the group of automorphisms of this Airy datum is nontrivial, $\mathrm{Aut}(\mathfrak g , W, \Omega) = \mathrm{AInn}(\mathfrak g, W, \Omega) \cong \mathbb Z_2$. More explicitly, let $\tau$ be the automorphisms of $W$ defined by
\begin{equation}
\tau(x,y,z) = (-y,x,z) \quad \quad \mathrm{for} \  x,y \in F \otimes \widetilde F, \ z \in \mathrm{Sym}^3 F.
\end{equation}
We have $\mathrm{Aut}(\mathfrak g, W, \Omega) = \{ \mathrm{id}, (\mathrm{Ad}_g, \tau g) \}$, where $g=e^{\pi \widetilde Y - \frac{\pi}{4} \widetilde X}$.
\item $W = (\mathrm{Sym}^2 F \otimes \widetilde F) \oplus (F \otimes \mathrm{Sym}^2 \widetilde F)$. One may take $J = H + \widetilde H$. Set $\Sigma_s$ has two connected components. Corresponding vectors $\Omega$ are of the form
\begin{subequations}
\begin{align}
\Omega_1 &= \left( e_{11} \otimes \widetilde e_2 , e_2 \otimes \widetilde e_{11} \right), \\
\Omega_2 &= \left( e_{11} \otimes \widetilde e_2 -4 e_{12} \otimes \widetilde e_1 , e_2 \otimes \widetilde e_{11} - 4 e_1 \otimes \widetilde e_{12} \right).
\end{align}
\end{subequations}
We have $\mathrm{Stab}(\Omega_1) \cong \mathbb Z_3$ and $\mathrm{Stab}(\Omega_2) =0$, so $\Sigma_s \cong \left( \frac{\mathrm{SL}_2 \times \mathrm{SL}_2}{\mathbb Z_3} \right) \sqcup \left( \mathrm{SL}_2 \times \mathrm{SL}_2 \right)$ and the
Airy data $A_1=(\mathfrak g , W , \Omega_1)$ and $A_2=(\mathfrak g, W, \Omega_2)$ are not isomorphic. Another difference between $A_1$ and $A_2$ is in the automorphism groups:
\begin{equation}
\mathrm{Aut}(A_1) \cong \mathbb Z_2 \ltimes \mathbb Z_3, \quad \quad \mathrm{Aut}(A_2) \cong \mathbb Z_2.
\end{equation}
More explicitly, let $z = (\mathrm{Ad}_g,g)$ with $g = \exp \left( \frac{2 \pi i}{3} (H-\widetilde H) \right)$ and $t = (\tau_{\mathfrak g}, \tau_W)$, where $\tau_{\mathfrak g}$ is the outer automorphism of $\mathfrak g$ which swaps the two simple factors and $\tau_W$ is the analogous automorphism of $W$. Elements $z,t$ generate $\mathrm{Aut}(A_1)$ and satisfy relations $z^3=t^2 = id$, $tz=z^{-1} t$. Group $\mathrm{Aut}(A_2)$ is generated~by~$t$. Generalization of this example is discussed in the Subsection \ref{sec:infinite_family}.
\end{enumerate}

Elements $\Omega$ in points $2-5$ have real coefficient, so they admit real structures $\sigma$ with $\mathfrak g^{\sigma} = \mathfrak{sl}_2(\mathbb R) \times \mathfrak{sl}_2(\mathbb R)$.

\subsection{Lie algebra $\mathfrak{sl}_3 \times \mathfrak{sl}_2$} \label{sec:sl3sl2}

In this subsection we consider the Lie algebra $\mathfrak{sl}_3 \times \mathfrak{sl}_2$. Let $F$, $\widetilde F$ be the defining representations of $\mathfrak{sl}_3$ and $\mathfrak{sl}_2$, respectively. We denote the standard basis of $\mathfrak{sl}_2$ by $\widetilde H, \widetilde X, \widetilde Y$. Since $\mathfrak{sl}_3$ admits no Airy data, we must have $\dim(H^0(\mathfrak{sl}_3,W)) < 6$. It~follows that the only possible forms of $W$ are
\begin{subequations} \label{eq:sl3_sl2_reps}
\begin{align*}
W_1 &= \left( (F \oplus F^*) \otimes \widetilde F \right) \oplus \mathfrak{sl}_3 \oplus \widetilde F,  \\
W_2 &= \left( (F \oplus F^*) \otimes \widetilde F \right) \oplus (F \oplus F^*) \oplus \widetilde R, \\
W_3 &= \left( (F \oplus F^*) \otimes \mathfrak{sl}_2 \right) \oplus \widetilde R, \tag{\ref{eq:sl3_sl2_reps}} \\
W_4 & = (\mathfrak{sl}_3 \otimes \widetilde F ) \oplus (F \oplus F^*),
\end{align*}
\end{subequations}
where $\widetilde R = \widetilde F^{\oplus 2}$ or $\widetilde R = \mathrm{Sym}^3\widetilde F$.


Consider the possibility that $W=W_1$. Since the projection of $\Omega$ onto $\widetilde F$ must be nonzero, $J$ is necessarily of the form $J = J' + \widetilde H$ with some $J' \in \mathfrak{sl}_3$. The~index of~$W$ considered as a representation of $\mathfrak{sl}_3$ is equal to $\frac{5}{3}$, so $\mathrm{tr}_W(J^2) = 14 + \frac{5}{3} \ \mathrm{tr}(\mathrm{ad}_{J'}^2)$. On the other hand $\mathrm{tr}_W(J^2) = 38 + 2\ \mathrm{tr}(\mathrm{ad}_{J'}^2)$, by Proposition \ref{prop:J_spectrum}. Comparing the two results we get $\mathrm{tr}(\mathrm{ad}_{J'}^2)=-72$, which is impossible. Hence $W=W_1$ is ruled out. Representation $W_2$ may also be excluded by similar reasoning. In this case we have $J=J' + \lambda \widetilde H$ with $\lambda \in \{ 1 , \frac{1}{3} \}$. Repetition of the calculation presented above gives $\mathrm{tr}(\mathrm{ad}_{J'}^2)<0$ in both cases. For $W_3$ value $\lambda = \frac{1}{3}$ is excluded for similar reasons, but~possibility of $\lambda=1$ remains. In this situation we have $\mathrm{tr}(\mathrm{ad}_{J'})^2 = 10 + \mathrm{tr}_{\widetilde R}(\widetilde H^2)$.

Using techniques described in the Appendix \ref{app:invariant_poly} we derive relations
\begin{subequations}
\begin{align}
\mathrm{tr}(\mathrm{ad}_T^4) &= \frac{1}{4} \mathrm{tr}(\mathrm{ad}_T^2)^2, \\
\mathrm{tr}_{W_3}(T^6) &= - \frac{1}{9} \mathrm{tr} \left( \mathrm{ad}_T^6 \right) + \frac{5}{324} \mathrm{tr}(\mathrm{ad}_T^2)^3.
\end{align}
\end{subequations}
for $T \in \mathfrak{sl}_3$. Combining this with the Proposition \ref{prop:J_spectrum} we get
\begin{equation}
\mathrm{tr} \left( \mathrm{ad}_{J'}^6 \right) =
\begin{cases}
- \frac{296872}{247} & \mathrm{for} \ \widetilde R = \widetilde F^{\oplus 2}, \\
- \frac{739800}{247} & \mathrm{for} \ \widetilde R = \mathrm{Sym}^3 \widetilde F.
\end{cases}
\end{equation}
This contradicts rationality of $J'$, so representation $W_3$ is excluded.

In the case of representation $W_4$ similar calculations give
\begin{equation}
\mathrm{tr}_{W_4}(T^6) = \frac{53}{27} \mathrm{tr}(\mathrm{ad}_T^6) + \frac{5}{972} \mathrm{tr}(\mathrm{ad}_T^2)^3
\end{equation}
for $T \in \mathfrak{sl}_3$. Plugging in $J = J' + \lambda \widetilde H$ we obtain
\begin{subequations}
\begin{align}
\mathrm{tr}(\mathrm{ad}_{J'}^2) &= 66, \\
896214+38880 \lambda^2 + \mathrm{tr}(\mathrm{ad}_{J'}^6) &=0.
\end{align}
\end{subequations}
Second relation is clearly inconsistent, so also $W_4$ is ruled out. Therefore $\mathfrak{sl}_3 \times \mathfrak{sl}_2$ admits no Airy structures.


\subsection{Lie algebra $\mathfrak{sp}_4 \times \mathfrak{sl}_2$} \label{sec:sp4_sl2}

In this subsection we shall confine ourselves to presenting a single example of an indecomposable Airy datum for the Lie algebra\footnote{The number of symplectic, $\mathfrak g$-modules of dimension $26$ is not large, so it is of course possible,
even if a bit tedious, to completely classify Airy data also in this case.} ${\mathfrak g} = \mathfrak{sp}_4 \times \mathfrak{sl}_2$.
We use the same notation for generators and bases relevant for the $\mathfrak{sp}_4$ algebra as in subsection \ref{sec:sp4_Airy} while, for the $\mathfrak{sl}_2$ algebra,
the notation used parallels the one in subsection \ref{sec:sl3sl2}.

The pertinent Airy datum exists for $J = H_1 + H_2 + \widetilde H$ and the module
\[
W = (\Lambda_0^2F\otimes \widetilde F)\oplus (F\otimes {\rm Sym}^2\widetilde F) \oplus {\rm Sym}^3 \widetilde F,
\]
where $F$ is the defining representations of $\mathfrak{sp}_4,$ $\widetilde F$ denotes  the defining representations of $\mathfrak{sl}_2$
(and {\bf not}, as in subsection \ref{sec:sp4_Airy}, the definig representation of $\mathfrak{so}_5$)
and $\Lambda^2_0 F$ denotes the codimension 1 subspace of these elements of $\Lambda^2 F$ whose any contraction with the symplectic form of $F$ vanishes.
Since
\begin{subequations}
\begin{align}
\mathrm{spec}_{\mathfrak g} ( \mathrm{ad}_J ) & = \{0,0,0,0,0,\pm 2,\pm 2,\pm 2,\pm 2\},
\\
\mathrm{spec}_{\rm W}(J) &= \{\pm 1,\pm 1,\pm 1,\pm 1,\pm 1,\pm 1,\pm 1,\pm 1,\pm 1,\pm 3,\pm 3,\pm 3,\pm 3\},
\end{align}
\end{subequations}
the spectral test is met. Denoting by $(c_1,c_2,\tilde c) = c_1 L_1 + c_2 L_2 + \widetilde c \widetilde L \in \mathfrak{h}^*$, where $L_i(H_j) = \delta_{i,j}$ and $\widetilde L (\widetilde H) = 1$,
we read of that  $\Xi = \{ \mu_i \}_{i=1}^9$ with
\begin{align*}
\mu_1 &= (1,1,-1), \quad \quad
\mu_2 = (1,-1,1), \quad \quad
\mu_3 = (-1,1,1),
\\
\mu_4 &=  (0,0,1), \quad \quad
\mu_5 = (1,0,0), \quad \quad
\mu_6 = (0,1,0),
\\
\mu_7 &= (-1,0,2), \quad \quad
\mu_8 = (0,-1, 2), \quad \quad
\mu_9 = (0,0,1).
\end{align*}
This gives
\begin{eqnarray}
\nonumber
\Omega
& = &
\left(
\alpha e_{12}\otimes {\widetilde e}_2 +\left(\beta_{+} e_{14} + \beta_{-} e_{23} + \beta_0 \eta\right)\otimes {\widetilde e}_1, 0,0
\right)
\\
& + &
\left(0, \left(a_1 e_1 +a_2 e_2\right)\otimes {\widetilde e}_{12} + \left(a_3 e_3 + a_4 e_4\right)\otimes {\widetilde e}_{11}, 0 \right)
\\
\nonumber
& + &
\left(
0,0,u {\widetilde e}_{112}
\right)
\end{eqnarray}
with complex parameters $\alpha, \beta_0,\beta_\pm, a_i,\ i = 1,\ldots 4$ and $u.$ Requiring that $\{T\Omega\}_{T\in {\mathfrak g}}$ is~an~isotropic subspace of $W$ we get the following set of equations:
\begin{subequations}
\begin{gather}
4\alpha \beta_0 = a_1 a_2, \hskip 1cm 4\alpha\beta_+ = -a_1^2, \hskip 1cm 4\alpha\beta_- = a_2^2, \label{lconst:set1}  \\
a_2(a_1-a_2) = 0  \hskip 1cm \beta_0^2 + \beta_+\beta_- - a_1a_3 - a_2a_4 = u^2. \label{lconst:set2}
\end{gather}
\end{subequations}
Using the group action generated by the five elements of ${\mathfrak g}$ commuting with $J,$
namely $H_1,H_2,X_{12},X_{21}$, and $\widetilde H$ and demanding that ${\dim }\{T\Omega\}_{T\in {\mathfrak g}} = 13$
(so that $\{T\Omega\}_{T\in {\mathfrak g}}$ is a Lagrangian subspace of $W$),
we obtain a single isomorphism class of Airy data. It can be represented by the vector
\begin{equation}
\Omega
=
\left(
\frac12 e_{12}\otimes {\widetilde e}_2 -\frac{1}{2}e_{14} \otimes {\widetilde e}_1,
e_1\otimes {\widetilde e}_{12} -e_3\otimes {\widetilde e}_{11} + e_4\otimes {\widetilde e}_{11},
{\widetilde e}_{112}
\right).
\end{equation}
\subsection{Lie algebra $\prod_{i=1}^N \mathfrak{sl}_2$} \label{sec:infinite_family}

In this subsection we shall consider  the Lie algebra ${\mathfrak g} = \prod_{i=1}^N {\mathfrak sl}_2^{(i)},$
where each ${\mathfrak sl}_2^{(i)}$ is an independent copy of the ${\mathfrak sl}_2$ algebra with defining representation
denoted by~$F^{(i)}.$ Standard generators of ${\mathfrak sl}_2^{(i)}$ will be denoted by $H^{(i)},  X^{(i)}$ and $Y^{(i)}$.
Analogous notation will be used for representations of ${\mathfrak sl}_2^{(i)}.$

There exists a family of Airy structures for ${\mathfrak g}$ where
\[
W = \bigoplus\limits_{i=1}^N \left( {\rm Sym}^2 F^{(i)}\otimes F^{(i+1)}\right),
\]
(the sum here is cyclic i.e.\ $N+1 \equiv 1$),
\(
J = \sum\limits_{i=1}^N H^{(i)},
\)
and
\[
\Omega = \frac{1}{\sqrt N}\sum\limits_{i=1}^N \left(e_{11}^{(i)} \otimes e^{(i+1)}_2 + 4\alpha_i e_{12}^{(i)}\otimes e_{1}^{(i+1)} \right).
\]
Vectors $H^{(i)}\Omega,  X^{(i)}\Omega$ and $Y^{(i)}\Omega$ turn out to be linearly independent and orthogonal with respect to the ${\mathfrak g}$
invariant symplectic form on $W$ if  $\alpha_i = -\alpha_{i-1}^2,\; i = 2,\ldots, N$ and $\alpha_1^{2^N-1} + 1 = 0.$

We shall discuss in detail the simplest case where all $\alpha_i = 0,$
i.e.
\[
\Omega = \frac{1}{\sqrt N}\sum\limits_{i=1}^N e_{11}^{(i)} \otimes e^{(i+1)}_2.
\]
Let $\hbox{\boldmath $q$} = {\rm e}^{\frac{2\pi \sqrt{-1}}{N}}$ and let us assume that $N$ is odd\footnote{The construction for even $N$ is completely analogous.}, $N = 2n+1.$
Let us also define:
\begin{equation}
Z_j = \sum\limits_{i=-n}^n \hbox{\boldmath $q$}^{ji}Z^{(i)}, \hskip 1cm j = -n, -n+1,\ldots,n,
\end{equation}
with $Z^{(i)} = H^{(i)}, X^{(i)}$ or $Y^{(i)}.$ These operators obey the algebra
\begin{equation}
[H_i,X_j] = 2X_{i\dotplus j}, \hskip 1cm [H_i,Y_j] = - 2Y_{i\dotplus j} \hskip .5cm {\rm and} \hskip .55cm [X_i,Y_j] =  H_{i\dotplus j}
\end{equation}
where $i \dotplus j = i + j\ {\rm mod}\ N.$

A Lagrangian complement of $T_\Omega \Sigma = {\rm lin}\{H_j\Omega, X_j\Omega,Y_j\Omega\}$ can be constructed as
\[
V = {\rm lin}\{H_j\overline{\Omega}, X_j\overline{\Omega},Y_j\overline{\Omega}\}
\]
where
\begin{equation}
\overline{\Omega} = \frac{1}{\sqrt N}\sum\limits_{i=-n}^n e_{22}^{(i)} \otimes e^{(i+1)}_1.
\end{equation}
It is the immediate to check that vectors
\begin{eqnarray*}
e^0_j & = & \frac{1}{2-\hbox{\boldmath $q$}^j}H_j\Omega \; = \; \frac{1}{\sqrt {N}} \sum\limits_{i=-n}^n \hbox{\boldmath $q$}^{ji}e_{11}^{(i)}\otimes e_2^{(i+1)},
\\
e^+_j & = & \hbox{\boldmath $q$}^{-j}X_j\Omega \; = \; \frac{1}{\sqrt{N}} \sum\limits_{i=-n}^n \hbox{\boldmath $q$}^{ji}e_{11}^{(i)}\otimes e_1^{(i+1)},
\\
e^-_j & = & \frac{1}{2} Y_j\Omega \; = \; \frac{1}{\sqrt{N}} \sum\limits_{i=-n}^n \hbox{\boldmath $q$}^{ji}e_{12}^{(i)}\otimes e_2^{(i+1)},
\end{eqnarray*}
and
\begin{eqnarray*}
f^0_{j} & = & \frac{1}{2} \frac{1}{2-\hbox{\boldmath $q$}^{-j}} H_{-j}\overline{\Omega} \; = \; -\frac{1}{2\sqrt{N}} \sum\limits_{i=-n}^n \hbox{\boldmath $q$}^{-ji} e_{22}^{(i)} \otimes e^{(i+1)}_1,
\\
f^+_{j} & = & \frac{1}{2} X_{-j}\overline{\Omega} \; = \; \frac{1}{\sqrt{N}}\sum\limits_{i=-n}^n \hbox{\boldmath $q$}^{-ji} e_{12}^{(i)} \otimes e^{(i+1)}_1,
\\
f^-_{j} & = & \frac{\hbox{\boldmath $q$}^j}{2} Y_{-j}\overline{\Omega} \; = \;\frac{1}{2\sqrt{N}}\sum\limits_{i=-n}^n \hbox{\boldmath $q$}^{-ji} e_{22}^{(i)} \otimes e^{(i+1)}_2,
\end{eqnarray*}
with $j =\{ -n , -n +1 , ..., n \}$ satisfy
\begin{equation}
\omega(e^a_j,f^b_k) =  \delta_{a+b}\delta_{j,k}.
\end{equation}
Decomposing
\[
W \ni w = \sum\limits_{j=-n}^n\sum\limits_{a=0,\pm}\left(\alpha^a_{-j}e^a_j + \beta^a_{j} f^a_j\right)
\]
we get
\begin{eqnarray}
\nonumber
{\mathcal X}_k
& \equiv &
\omega(X_k w, \Omega) + \frac12\omega(X_kw,w)
\\ \nonumber
& = &
\hbox{\boldmath $q$}^k\,\beta^-_{k}
+
\sum\limits_{l=-n}^n
\left(
\hbox{\boldmath $q$}^k\alpha_{-l}^0\beta_{l\dotplus k}^-+ \alpha_{-l}^-\beta_{l\dotplus k}^0 + \beta_l^+\beta_{-l\dotplus k}^--\frac12 \hbox{\boldmath $q$}^k \alpha_{-l}^-\alpha_{l\dotplus k}^-
\right),
\\
{\mathcal H}_k
& \equiv &
\omega(H_k w, \Omega) + \frac12\omega(H_kw,w)
\\ \nonumber
& = &
(2-\hbox{\boldmath $q$}^k)\,\beta^0_{k}
+
\sum\limits_{l=-n}^n
\left(
\left(2-\hbox{\boldmath $q$}^k\right)\alpha_{-l}^0\beta_{l\dotplus k}^0
+
\left(2+\hbox{\boldmath $q$}^k\right)\alpha_{-l}^+\beta_{l\dotplus k}^-
-
\hbox{\boldmath $q$}^k\alpha_{-l}^-\beta_{l\dotplus k}^+
\right),
\\
\nonumber
{\mathcal Y}_k
& \equiv &
\omega(Y_k w, \Omega) + \frac12\omega(Y_kw,w)
\\ \nonumber
& = &
2\,\beta^+_{k}
+
\sum\limits_{l = -n}^n
\left(
2\alpha_{-l}^0\beta_{l\dotplus k}^+
+
\hbox{\boldmath $q$}^k\alpha_{-l}^+\beta_{l\dotplus k}^0
+
\frac12\hbox{\boldmath $q$}^k \beta_l^+\beta_{-l\dotplus k}^+
-
2\alpha_{-l}^+\alpha_{l\dotplus k}^-
\right).
\end{eqnarray}
To express the hamiltonians above in the Airy form, we denote
\[
\hbox{\boldmath $q$}^k\,\beta^-_{k} = y_k^+,
\hskip 5mm
(2-\hbox{\boldmath $q$}^k)\,\beta^0_{k} = y_k^0,
\hskip 5mm
2\,\beta^+_{k} = y_k^-
\]
and introduce a set of variables $x^a_k$ ``conjugate'' to $y^a_k$
\[
\alpha_k^+ = - \hbox{\boldmath $q$}^k\,x_k^-,
\hskip 5mm
\alpha_k^0 = - (2-\hbox{\boldmath $q$}^k)\,x_k^0,
\hskip 5mm
\alpha_k^- = -2\, x_k^+
\]
such that the Poison bracket  on the space ${\rm lin}\{\alpha^a_k,\beta^a_k\} = {\rm lin}\{x^a_k,y^a_k\}$ is preserved:
\begin{equation}
\left\{y_k^a,x_l^b\right\} = \left\{\alpha^a_k,\beta^b_l\right\} = \delta_{a+b,0}\delta_{k+l,0}.
\end{equation}
This gives
\begin{eqnarray}
\label{sl2n:final:hamiltonians}
\nonumber
{\mathcal X}_k
& = &
y_k^+
-
\sum\limits_{l=-n}^n
\left(
\left(2\hbox{\boldmath $q$}^{-l}-1\right)x_{-l}^0y_{l\dotplus k}^+
+
\frac{2}{2-\hbox{\boldmath $q$}^{k+l}} x_{-l}^+y_{l\dotplus k}^0
+
2\hbox{\boldmath $q$}^k x_{-l}^+x_{l\dotplus k}^+
-
\frac{1}{2}\hbox{\boldmath $q$}^{l-k} y_l^-y_{-l\dotplus k}^+
\right),
\\
{\mathcal H}_k
& = &
y_k^0
-
\sum\limits_{l=-n}^n
\left(
\frac{\left(2-\hbox{\boldmath $q$}^k\right)\left(2-\hbox{\boldmath $q$}^l\right)}{2-\hbox{\boldmath $q$}^{k+l}} x_{-l}^0y_{l\dotplus k}^0
+
\left(2\hbox{\boldmath $q$}^{-k}+1\right)x_{-l}^-y_{l\dotplus k}^+
-
\hbox{\boldmath $q$}^kx_{-l}^+y_{l\dotplus k}^-
\right),
\\
\nonumber
{\mathcal Y}_k
& = &
y_k^-
-
\sum\limits_{l = -n}^n
\left(
\left(2-\hbox{\boldmath $q$}^{l}\right)x_{-l}^0 y_{l\dotplus k}^-
+
\frac{\hbox{\boldmath $q$}^{l+k}}{2-\hbox{\boldmath $q$}^{l+k}} x_{-l}^-y_{l\dotplus k}^0
+
4\hbox{\boldmath $q$}^l x_{-l}^- x_{l\dotplus k}^+
-
\frac18 \hbox{\boldmath $q$}^ky_l^-y_{-l\dotplus k}^-
\right).
\end{eqnarray}
It is not difficult to check explicitly that hamiltonians (\ref{sl2n:final:hamiltonians}) do satisfy the algebra
\begin{equation}
\left\{{\mathcal H}_k,{\mathcal X}_m\right\} = 2{\mathcal X}_{k\dotplus m},
\hskip 10mm
\left\{{\mathcal H}_k,{\mathcal Y}_m\right\} = -2{\mathcal Y}_{k\dotplus m},
\hskip 10mm
\left\{{\mathcal X}_k,{\mathcal Y}_m\right\} = {\mathcal H}_{k\dotplus m}.
\end{equation}

\section{Outlook} \label{sec:conclusions}

Many new Airy structures were found in this work. It is left for future studies to~find their partition functions, or at least discuss their properties. It would be particularly interesting to find a relation between them and fields in which topological recursion has found applications, or with some quantum systems studied in physics. Perhaps that could shed some light on the striking fact that Airy structures for semisimple Lie algebras are so much constrained. We believe that derivation of integral representations of partition functions could be~particularly illuminating.

Having classified Airy structures for simple Lie algebras, it is natural to ask for an~extension to semisimple Lie algebras. As shown by presented examples, in~this~case the number of~distinct Airy structures is infinite. However it is finite for any given semisimple Lie algebra, so it could be that this problem is~manageable. Some difficulties do arise, though. Firstly, there are many Lie~algebras and representations to consider. It is not clear to us how to generate a~complete list. Secondly, for a given representation of a Lie algebra of high rank the number of cases one has to consider in order to find the possible forms of $J$ is~large. We avoided this step in the derivation of the $\mathfrak{sp}_{10}$ Airy structures by~deriving the only consistent form of $J$ directly from its spectral properties and relations between invariant polynomials. This method typically breaks down for Lie algebras with more than one simple factor. Indeed, each simple factor contributes its own set of invariant polynomials, so we get more unknowns than equations to solve. Some new restrictions on $J$ would have to be derived in order to make this method viable.

Last but not least, it would be interesting to partially extend our results to~more general classes of Airy structures. Besides allowing more general Lie~algebras, one~could also consider Lie superalgebras with semisimple even part. If it is possible to generalize some of our findings to~infinite-dimensional Airy structures, that could have direct consequences for~classical topological recursion. Kac-Moody algebras generalize simple Lie~algebras in a natural way and have direct connections with conformal field theory and integrable systems, so they would be interesting to study in this context.


\section*{Acknowledgments}
We would like to thank G. Borot for discussion.
Calculations presented in this paper were facilitated by the use of Mathematica package LieART \cite{LieART}.
The work of BR was supported by the Faculty of Physics, Astronomy and Applied Computer Science grant MSN 2019 (N17/MNS/000040) for young scientists and PhD students.
The work of LH was supported by the TEAM programme of the Foundation for Polish Science co-financed by the European Union under the European Regional Development Fund (POIR.04.04.00-00-5C55/17-00).


\appendix

\section{Lie algebra cohomology} \label{app:cohom}

We present an \textit{ad hoc} definition of the first two Lie algebra cohomology groups, sufficient for our purposes. For a more conceptual treatment of the subject, see \cite{HiltonStammbach}.

Let $\mathfrak g$ be a Lie algebra and $M$ - a representation of $\mathfrak g$. The zeroth cohomology group $H^0(\mathfrak g,M)$ of $\mathfrak g$ valued in $M$ is defined as the space of all element of $M$ annihilated by $\mathfrak g$. To define the first cohomology group, we let $Z^1(\mathfrak g, M)$ be the vector space of linear maps (called cocycles) $\gamma : \mathfrak g \to M$ such that $\gamma([T,S])=T \gamma(S) - S \gamma (T)$, and $B^1(\mathfrak g, M)$ the space of linear maps (called coboundaries) $\mathfrak g \to M$ of the form $\gamma(T) = T m$ for some $m \in M$. Every coboundary is a cocycle, so it makes sense to put $H^1(\mathfrak g,M) = \frac{Z^1(\mathfrak g,M)}{B^1(\mathfrak g, M)}$. The following fact is used in this work:

\begin{proposition}[Whitehead]
Let $\mathfrak g$ be a finite-dimensional semisimple Lie algebra and $M$ - a finite-dimensional $\mathfrak g$-module. Then $H^1(\mathfrak g, M)=0$.
\end{proposition}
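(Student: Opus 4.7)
The plan is to prove Whitehead's lemma by the standard Casimir operator argument, reducing first to irreducible $M$ and then exploiting the nonzero scalar action of the Casimir on nontrivial irreducibles.

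First I would reduce to the case of irreducible $M$. By Weyl's theorem on complete reducibility, any finite-dimensional $\mathfrak g$-module decomposes as a direct sum of irreducibles. Since $Z^1(\mathfrak g, \cdot)$ and $B^1(\mathfrak g, \cdot)$ are additive in the module argument, so is $H^1(\mathfrak g, \cdot)$, and it suffices to handle each irreducible summand. If $M$ is the trivial $1$-dimensional module $\mathbb C$, the cocycle identity collapses to $\gamma([T,S]) = 0$; since $\mathfrak g = [\mathfrak g, \mathfrak g]$ for semisimple $\mathfrak g$, $\gamma$ vanishes identically, and even $Z^1(\mathfrak g, \mathbb C) = 0$.

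For a nontrivial irreducible $M$, I would fix a basis $\{e_i\}$ of $\mathfrak g$ with dual basis $\{e^i\}$ relative to the Killing form (nondegenerate by semisimplicity) and form the Casimir $C = \sum_i e_i e^i \in U(\mathfrak g)$. This element is central, so by Schur's lemma it acts on $M$ as a scalar $c$; a standard highest-weight computation confirms $c \neq 0$ whenever $M$ is nontrivial. Given a cocycle $\gamma$, I would set
\[ m = c^{-1} \sum_i e^i \gamma(e_i) \in M \]
and verify directly that $\gamma(T) = T m$ for every $T \in \mathfrak g$, exhibiting $\gamma$ as a coboundary. The computation is routine: commute $T$ past $e^i$, rewrite $T \gamma(e_i)$ using the cocycle identity, and observe that the two resulting commutator sums cancel by invariance of the Killing form (expanding $[T, e_i] = a_{ij} e_j$ and $[T, e^i] = b_{ij} e^j$, invariance forces $a_{ij} + b_{ji} = 0$); what survives is $c^{-1} C \gamma(T) = \gamma(T)$.

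The main obstacle lies in the foundational inputs. Nonvanishing of the Casimir scalar on nontrivial irreducibles is elementary modulo weight-theoretic basics. Weyl's complete reducibility, on the other hand, is itself traditionally proved by essentially the same Casimir argument applied to an extension $0 \to M \to E \to \mathbb{C} \to 0$, so one must be cautious to avoid circularity: either present the extension argument first and bootstrap to full complete reducibility, or else invoke an independent route such as the unitarian trick --- average $\gamma$ over a compact real form of $\mathfrak g$, as alluded to in the paper's own footnote --- which bypasses both complete reducibility and the Casimir estimate at the cost of heavier structural input.
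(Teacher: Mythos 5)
Your proof is correct --- it is the classical Casimir-operator argument, and the key cancellation (the coefficients $a_{ij}+b_{ji}=0$ forced by invariance of the Killing form) goes through exactly as you describe. Note, though, that the paper does not actually prove this lemma: it is stated in Appendix \ref{app:cohom} with a pointer to the literature, and the only argument the authors sketch is the one in the footnote to the proof of Proposition \ref{prop:Airy_homog}, namely averaging over a maximal compact subgroup of $G$ (the unitarian trick). So your route is genuinely different from the one the paper gestures at. The comparison is the expected one: the Casimir proof is purely algebraic and works over any field of characteristic zero, but --- as you rightly flag --- it risks circularity with Weyl's complete reducibility; your two remedies (run the extension argument for $0 \to M \to E \to \mathbb C \to 0$ first and bootstrap, or fall back on the compact real form) are both standard and sound, and the averaging proof avoids the reduction to irreducibles altogether at the cost of heavier structural input. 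One small refinement worth recording: for semisimple but non-simple $\mathfrak g$ the nonvanishing of the Casimir scalar on a nontrivial irreducible still holds (the eigenvalue is $(\lambda,\lambda+2\rho)>0$ for dominant $\lambda\neq 0$), but the slicker route replaces the Killing form by the trace form of the representation on the quotient of $\mathfrak g$ by the kernel of the action, so that $\mathrm{tr}_M(C)$ equals the dimension of that quotient and is nonzero for free.
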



\section{Semisimple and regular elements} \label{app:ss_regular}

Let $W$ be a finite-dimensional vector space and let $T \in \mathrm{End}(W)$. We say that $T$ is~semisimple if for every $T$-invariant subspace $V \subseteq W$ there exists a~complementary $T$-invariant subspace $V'$, so that $W = V \oplus V'$. Since we restrict attention to vector spaces over $\mathbb C$, operator $T$ is semisimple if and only if it is diagonalizable.

\begin{proposition}[Jordan-Chevalley decomposition]
Let $T$ be a linear operator on a~finite-dimensional vector space $W$. There exist unique linear operators $T_{ss}, T_n$ on~$W$ such that $T_{ss}$ is semisimple, $T_{n}$ is nilpotent, $T_{ss}T_n = T_n T_{ss}$ and $T=T_{ss}+T_n$. Furthermore there exist polynomials $p,q \in \mathbb C[t]$ such that $T_{ss}=p(T)$, $T_n=q(T)$. In~particular every $T$-invariant subspace of $W$ is $T_{ss}$- and $T_n$-invariant.
\end{proposition}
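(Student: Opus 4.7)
The plan is to construct the decomposition concretely via generalized eigenspaces, use the Chinese Remainder Theorem to exhibit $T_{ss}$ as a polynomial in $T$, and finally prove uniqueness by exploiting the fact that a semisimple-and-nilpotent operator is zero.

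First I would invoke primary decomposition: if $\lambda_1,\dots,\lambda_r$ are the distinct eigenvalues of $T$ and $(t-\lambda_i)^{n_i}$ are the corresponding factors in the characteristic polynomial, then
\begin{equation*}
W = \bigoplus_{i=1}^r W_i, \qquad W_i := \ker \bigl( (T-\lambda_i)^{n_i} \bigr),
\end{equation*}
with each $W_i$ being $T$-invariant. I define $T_{ss}$ to act on $W_i$ as multiplication by $\lambda_i$; it is manifestly semisimple since it is diagonalizable on a basis adapted to the decomposition. Setting $T_n = T - T_{ss}$, the restriction $T_n|_{W_i}$ coincides with $(T-\lambda_i)|_{W_i}$, which is nilpotent of order at most $n_i$. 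Clearly $T_{ss}$ and $T_n$ both preserve each $W_i$ and commute on each summand, hence $T_{ss} T_n = T_n T_{ss}$.

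Next I would realize $T_{ss}$ as a polynomial in $T$. The ideals $((t-\lambda_i)^{n_i})$ in $\mathbb C[t]$ are pairwise coprime, so by the Chinese Remainder Theorem there exists $p \in \mathbb C[t]$ satisfying
\begin{equation*}
p(t) \equiv \lambda_i \pmod{(t-\lambda_i)^{n_i}}, \qquad i=1,\dots,r.
\end{equation*}
For any $w \in W_i$ we then have $p(T) w - \lambda_i w = (T-\lambda_i)^{n_i} \cdot r_i(T) w = 0$ for some $r_i \in \mathbb C[t]$, so $p(T) = T_{ss}$. Setting $q(t) = t - p(t)$ yields $q(T) = T_n$. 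In particular any $T$-invariant subspace is automatically $T_{ss}$- and $T_n$-invariant, which covers the last assertion of the proposition.

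The most delicate step is uniqueness. Suppose $T = S + N$ is another decomposition with $S$ semisimple, $N$ nilpotent, and $SN = NS$. Then both $S$ and $N$ commute with their sum $T$, and hence with any polynomial in $T$; in particular they commute with $T_{ss}$ and $T_n$. Rewrite the equality of the two decompositions as
\begin{equation*}
T_{ss} - S = N - T_n.
\end{equation*}
The left-hand side is the difference of two commuting semisimple operators, which is again semisimple (simultaneous diagonalization), while the right-hand side is the difference of two commuting nilpotent operators, which is nilpotent (binomial expansion). An operator that is at once semisimple and nilpotent has only the eigenvalue $0$ and is diagonalizable, hence vanishes. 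Therefore $S = T_{ss}$ and $N = T_n$, completing the proof. The main obstacle is really just this last step, and its resolution hinges on the elementary but crucial observation that the hypothesis $SN = NS$ suffices to propagate commutativity between the two decompositions.
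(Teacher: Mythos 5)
Your proof is correct and complete: the primary decomposition construction, the Chinese Remainder Theorem argument exhibiting $T_{ss}$ and $T_n$ as polynomials in $T$, and the uniqueness step via the fact that a commuting semisimple-plus-nilpotent difference must vanish are all carried out properly. The paper itself states this proposition without proof, as standard background recalled in an appendix, so there is nothing to compare against; yours is the textbook argument and it is sound.
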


\begin{proposition}
Let $\mathfrak g$ be a finite-dimensional, semisimple Lie algebra. For any $T \in \mathfrak g$ there exist unique $T_{ss}, T_n \in \mathfrak g$ such that $(\mathrm{ad}_T)_{ss} = \mathrm{ad}_{T_{ss}}$, $(\mathrm{ad}_T)_{n} = \mathrm{ad}_{T_{n}}$. Moreover $T_{ss}$ (resp. $T_n$) acts as a~semisimple (resp. nilpotent) operator in every finite-dimensional $\mathfrak g$-module.
\end{proposition}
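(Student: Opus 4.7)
Since $\mathfrak g$ is semisimple, its center is trivial, so the map $\mathrm{ad} : \mathfrak g \to \mathrm{End}(\mathfrak g)$ is injective. Consequently, if $T_{ss}, T_n$ exist, the conditions $(\mathrm{ad}_T)_{ss} = \mathrm{ad}_{T_{ss}}$ and $(\mathrm{ad}_T)_n = \mathrm{ad}_{T_n}$ pin them down uniquely.

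\textbf{Existence of $T_{ss}$ and $T_n$ in $\mathfrak g$.} Apply the ordinary Jordan–Chevalley decomposition to $\mathrm{ad}_T \in \mathrm{End}(\mathfrak g)$, obtaining commuting operators $A_s$ semisimple, $A_n$ nilpotent, each a polynomial in $\mathrm{ad}_T$ without constant term, with $\mathrm{ad}_T = A_s + A_n$. The first goal is to verify that $A_s$ and $A_n$ lie in the image of $\mathrm{ad}$. I would show they are derivations of $\mathfrak g$: decompose $\mathfrak g = \bigoplus_{\lambda} \mathfrak g_\lambda$ into eigenspaces of $A_s$, which coincide with the generalized eigenspaces of $\mathrm{ad}_T$; using that $\mathrm{ad}_T$ itself is a derivation, a direct check on generalized eigenvectors yields $[\mathfrak g_\lambda, \mathfrak g_\mu] \subseteq \mathfrak g_{\lambda+\mu}$. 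Writing an arbitrary element as a sum of eigenvectors of $A_s$ then gives the Leibniz rule for $A_s$, whence also for $A_n = \mathrm{ad}_T - A_s$. Finally, I would invoke the fact that $\mathrm{Der}(\mathfrak g) = \mathrm{ad}(\mathfrak g)$ for semisimple $\mathfrak g$, which follows from Whitehead's lemma $H^1(\mathfrak g, \mathfrak g) = 0$ recalled in Appendix \ref{app:cohom} (the short cocycle–coboundary computation: any derivation $D$ defines a cocycle $\gamma(T) = DT$, and a coboundary is exactly an inner derivation). This produces unique $T_{ss}, T_n \in \mathfrak g$ with the required adjoint actions.

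\textbf{Preservation under an arbitrary representation.} Let $\rho : \mathfrak g \to \mathrm{End}(V)$ be any finite-dimensional representation; I want to conclude that $\rho(T_{ss})$ is semisimple and $\rho(T_n)$ is nilpotent. Since $\ker(\rho)$ is an ideal in the semisimple algebra $\mathfrak g$, the image $\rho(\mathfrak g)$ is itself semisimple, so it suffices to treat the case in which $\mathfrak g$ is already a semisimple subalgebra of $\mathfrak{gl}(V)$. The crucial input is then the classical statement that any such subalgebra contains the Jordan components (taken in $\mathfrak{gl}(V)$) of each of its elements. Granting this, for $T \in \mathfrak g \subseteq \mathfrak{gl}(V)$ the usual decomposition $T = T_s + T_n$ in $\mathrm{End}(V)$ lives already in $\mathfrak g$; applying $\mathrm{ad}$ (inside $\mathfrak g$) produces commuting semisimple and nilpotent operators summing to $\mathrm{ad}_T$, which by uniqueness of $\mathrm{ad}_T$'s Jordan decomposition must equal $\mathrm{ad}_{T_{ss}}$ and $\mathrm{ad}_{T_n}$, giving $\rho(T_{ss}) = T_s$ and $\rho(T_n) = T_n$.

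\textbf{Main obstacle.} The hard step is the preservation claim: verifying that Jordan parts in $\mathfrak{gl}(V)$ of an element of a semisimple $\mathfrak g \subseteq \mathfrak{gl}(V)$ stay in $\mathfrak g$. This is not formal—it rests on Weyl's complete reducibility theorem (itself equivalent to vanishing of $H^1$ with values in $\mathrm{Hom}(U,U')$, which is available here via the Whitehead statement in Appendix \ref{app:cohom}) together with an analysis of the normalizer of $\mathfrak g$ in $\mathfrak{gl}(V)$ and of invariant subspaces stable under polynomials in $T$. I would either import this standard result directly or, to keep the appendix self-contained, sketch Humphreys' proof: reduce via complete reducibility to showing that both $T_s$ and $T_n$ preserve every $\mathfrak g$-submodule and normalize $\mathfrak g$ inside $\mathfrak{gl}(V)$, then exhibit a subalgebra sandwiched between $\mathfrak g$ and its normalizer on which the trace form detects elements outside $\mathfrak g$, concluding by semisimplicity of the Killing form.
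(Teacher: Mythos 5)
Your argument is correct, and there is nothing in the paper to compare it against: this proposition appears in Appendix~\ref{app:ss_regular} as recalled background and is stated without proof. What you give is the standard textbook derivation (uniqueness from triviality of the center, existence from the Jordan--Chevalley decomposition of $\mathrm{ad}_T$ together with $\mathrm{Der}(\mathfrak g)=\mathrm{ad}(\mathfrak g)$, which as you note is exactly $H^1(\mathfrak g,\mathfrak g)=0$ in the language of Appendix~\ref{app:cohom}, and compatibility with representations from the theorem that a semisimple subalgebra of $\mathfrak{gl}(V)$ contains the $\mathfrak{gl}(V)$-Jordan components of its elements). The only step you leave implicit in the last part is that $\mathrm{ad}_{\mathfrak g}(T_s)$ is semisimple and $\mathrm{ad}_{\mathfrak g}(T_n)$ is nilpotent as operators on $\mathfrak g$; this follows because $\mathrm{ad}_{\mathfrak{gl}(V)}$ of a semisimple (resp.\ nilpotent) element is semisimple (resp.\ nilpotent) and $\mathfrak g$ is an invariant subspace, and you correctly isolate the genuinely nontrivial input as the containment theorem resting on Weyl complete reducibility and the normalizer analysis.
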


Let $\mathfrak g$ be a finite-dimensional, semisimple Lie algebra. Element $T \in \mathfrak g$ is said to be semisimple (resp. nilpotent) if $T=T_{ss}$ (resp. $T=T_n$). Set of semisimple elements of $\mathfrak g$ is nonempty and Zariski open, hence dense. It coincides with the union of all Cartan subalgebras of $\mathfrak g$.

Rank of $\mathfrak g$ is defined as the greatest integer $r$ such that the characteristic polynomial of $\mathrm{ad}_T$ vanishes at zero with multiplicity at least $r$ for every $T \in \mathfrak g$. Element $T \in \mathfrak g$ is said to be regular if its characteristic polynomial vanishes at zero with multiplicity exactly $r$. By construction, the set of regular elements of $\mathfrak g$ is~nonempty and Zariski open. One can show that it is contained in the set of~semisimple elements. If $T \in \mathfrak g$ is a regular element, then the commutant $\{ T' \in \mathfrak g | [T,T']=0 \}$ is the unique Cartan subalgebra of $\mathfrak g$ which contains $T$. Now~suppose that some Cartan subalgebra $\mathfrak h \subseteq \mathfrak g$ is chosen. Element $T \in \mathfrak h$ is~regular in $\mathfrak g$ if and only if $\alpha(T) \neq 0$ for every root $\alpha$.

We remark that some authors define $T$ to be regular if the dimension of its commutant is equal to $r$. Elements with this property are not necessarily semisimple. However, the two notions do coincide for semisimple elements.



\section{Invariant polynomials on $\mathfrak{sp}_{10}$} \label{app:invariant_poly}

Let $G$ be a complex semisimple Lie group with Lie algebra $\mathfrak g$. Choose a Cartan subalgebra $\mathfrak h \subset \mathfrak g$ and let $\mathcal W$ be the corresponding Weyl group. Denote the algebra of $G$-invariant polynomial functions on $\mathfrak g$ by $\mathbb C[\mathfrak g]^G$ and the algebra of $\mathcal W$-invariant polynomial functions on $\mathfrak h$ by $\mathbb C[\mathfrak h]^{\mathcal W}$. If $\phi \in \mathbb C[\mathfrak g]^G$, then the restriction $\left. \phi \right|_{\mathfrak h}$ belongs to $\mathbb C[\mathfrak h]^{\mathcal W}$. In other words, we have a homomorphism
\begin{equation}
\mathrm{res} : \mathbb C[\mathfrak g]^G \ni \phi \mapsto \left. \phi \right|_{\mathfrak h} \mathbb C[\mathfrak h]^{\mathcal W}.
\end{equation}
We claim that $\mathrm{res}$ is injective. Indeed, suppose that $\left. \phi \right|_{\mathfrak h}=0$. If $T \in \mathfrak g$ is~semisimple, then the $G$-orbit of $T$ intersects $\mathfrak h$ nontrivially, so $\phi(T)=0$. Since the set of~semisimple elements is dense and $\phi$ is continuous, we must have $\phi=0$.

\begin{proposition}[Chevalley]
Homomorphism $\phi$ is also surjective.
\end{proposition}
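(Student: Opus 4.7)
The plan is to construct an explicit right inverse to $\mathrm{res}$ via the Jordan--Chevalley decomposition. Given $p \in \mathbb C[\mathfrak h]^{\mathcal W}$, I define $\tilde p : \mathfrak g \to \mathbb C$ by setting $\tilde p(T) := p(H)$, where $H \in \mathfrak h$ is any element lying in the $G$-orbit of the semisimple part $T_{ss}$. Such $H$ exists because every semisimple element of $\mathfrak g$ lies in some Cartan subalgebra and all Cartan subalgebras are $G$-conjugate to $\mathfrak h$; the residual ambiguity is precisely the action of $\mathcal W$, which is absorbed by the assumed $\mathcal W$-invariance of $p$. It is then immediate that $\tilde p$ is $G$-invariant (since the Jordan decomposition commutes with $\mathrm{Ad}_g$) and $\tilde p|_{\mathfrak h} = p$, so $\mathrm{res}(\tilde p) = p$ provided one can establish that $\tilde p$ is a polynomial.

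The nontrivial task is polynomiality. I would first verify it on the Zariski-open dense subset $\mathfrak g^{\mathrm{rs}} \subseteq \mathfrak g$ of regular semisimple elements. The conjugation morphism $G \times \mathfrak h^{\mathrm{reg}} \to \mathfrak g^{\mathrm{rs}}$, $(g,H) \mapsto \mathrm{Ad}_g H$, is smooth and surjective, and modulo the action of $N_G(\mathfrak h)$ on the source it becomes a finite étale cover with Galois group $\mathcal W$. The pullback of $\tilde p$ is the polynomial $(g,H) \mapsto p(H)$, which is $\mathcal W$-invariant, so by descent $\tilde p$ is a regular algebraic function on $\mathfrak g^{\mathrm{rs}}$. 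To extend to all of $\mathfrak g$ I would rely on two facts: first, without loss of generality $p$ is homogeneous of some degree $d$, and then $\tilde p$ inherits the same homogeneity on its natural domain; second, for every $T \in \mathfrak g$ the semisimple part $T_{ss}$ lies in the closure of the $G$-orbit of $T$ --- this is seen by embedding $T_n$ into an $\mathfrak{sl}_2$-triple commuting with $T_{ss}$ and using the associated one-parameter subgroup to rescale the nilpotent part to zero --- so $\tilde p$ is locally bounded on $\mathfrak g$. A Riemann-type extension theorem across the analytic hypersurface $\mathfrak g \setminus \mathfrak g^{\mathrm{rs}}$ then produces a holomorphic extension of $\tilde p$ to $\mathfrak g$, which by homogeneity must be polynomial.

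The main obstacle will be this extension step. The complement $\mathfrak g \setminus \mathfrak g^{\mathrm{rs}}$ is the zero locus of the discriminant of $\mathrm{ad}$, which is a hypersurface, so one cannot invoke Hartogs' theorem directly. Genuine use must be made of the orbit-closure argument to secure local boundedness, after which the standard removable-singularity theorem for holomorphic functions across an analytic subset applies and concludes the proof.
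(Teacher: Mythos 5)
The paper does not actually prove this statement: it only cites Humphreys, whose argument is representation-theoretic --- one shows that $\mathbb C[\mathfrak h]^{\mathcal W}$ is spanned by symmetrized powers $\sum_{w\in\mathcal W}(w\lambda)^k$ of weights and realizes these, by induction on the dominance order, as restrictions of the trace polynomials $T\mapsto \mathrm{tr}_{V_\lambda}(T^k)$, i.e.\ precisely the functions $Q_{2k}^V$ that this appendix goes on to use. Your route is a genuinely different, geometric one: extend $p$ through the Jordan decomposition, prove regularity on the regular semisimple locus by descent along the degree-$|\mathcal W|$ cover of $\mathfrak g^{\mathrm{rs}}$, and extend across the discriminant hypersurface by a Riemann removable-singularity argument. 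This is a standard viable alternative (essentially the proof in Chriss--Ginzburg); it avoids any input from the classification of finite-dimensional representations at the cost of some analytic machinery.

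One step as written does not hold up: local boundedness of $\tilde p$ does not follow from the fact that $T_{ss}\in\overline{G\cdot T}$. That fact concerns a single orbit, whereas local boundedness is a statement about how the conjugacy class of $T_{ss}$ varies as $T$ moves; in fact the orbit-closure property is not needed anywhere in your argument, since once the bounded holomorphic extension exists, its $G$-invariance and the identity $\tilde p|_{\mathfrak h}=p$ already follow from density of $\mathfrak g^{\mathrm{rs}}$ and of $\mathfrak h^{\mathrm{reg}}\subseteq\mathfrak h$ together with continuity. The correct justification of local boundedness is: the nonzero eigenvalues of $\mathrm{ad}_T$ are $\{\alpha(H_T)\}_{\alpha\in\Delta}$ for any $H_T\in\mathfrak h$ conjugate to $T_{ss}$; they are roots of the characteristic polynomial of $\mathrm{ad}_T$, whose coefficients depend polynomially on $T$, so they remain bounded as $T$ ranges over a bounded set; since the simple roots form a basis of $\mathfrak h^*$, the elements $H_T$ then stay in a bounded subset of $\mathfrak h$, and hence $p(H_T)$ is bounded. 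With this substitution (and the standard fact, which you assert but should record, that $G$-conjugate elements of $\mathfrak h$ are already $\mathcal W$-conjugate, which is what makes $\tilde p$ well defined), the proof closes. A further small slip: the degree-$|\mathcal W|$ \'etale cover of $\mathfrak g^{\mathrm{rs}}$ is $G\times\mathfrak h^{\mathrm{reg}}$ modulo $Z_G(\mathfrak h)$, not modulo $N_G(\mathfrak h)$; quotienting by the full normalizer yields an isomorphism onto $\mathfrak g^{\mathrm{rs}}$ rather than a cover, although either version suffices for the descent of $\mathcal W$-invariant functions.
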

\begin{proof}
See \cite[p.~126-128]{Humphreys}.
\end{proof}

Now let's specialize to $\mathfrak g = \mathfrak{sp}_{10}$. We use the standard choice of $\mathfrak h$, basis in $\mathfrak g$ and basis is $\mathfrak h^*$ described in \cite{FultonHarris}. Weyl group takes the form
\begin{equation}
\mathcal W = S_5 \ltimes \mathbb Z_2^5,
\end{equation}
with $S_5$ acting on $\{ L_i \}_{i=1}^5$ by permutations and $\mathbb Z_2^5$ generated by the five reflections $L_i \mapsto - L_i$. It follows that elements of $\mathbb C[\mathfrak h]^{\mathcal W}$ are symmetric polynomials in $\{ L_i^2 \}_{i=1}^5$. Therefore by the fundamental theorem of symmetric polynomials \cite{Macdonald}, functions
\begin{equation}
E_{2k} : \ \mathfrak h \ni \sum_{i=1}^5 T^i H_i \mapsto \sum_{1 \leq i_1 < ... < i_k \leq 5} (T^{i_1})^2 ... (T^{i_k})^2 \in \mathbb C
\end{equation}
with $k \in \{ 1, ..., 5 \}$ are algebraically independent generators of $\mathbb C[\mathfrak h]^{\mathcal W}$. Hence they extend uniquely to invariant polynomials on $\mathfrak g$ and we have
\begin{equation}
\mathbb C[\mathfrak g]^G = \mathbb C[E_2, E_4,E_6,E_8,E_{10}].
\end{equation}
In particular the dimension of the space of invariant polynomials on $\mathfrak g$ of degree $2k$ is equal to the number of partitions of $k$.

Products of $\{ E_{2k} \}_{k=1}^5$ furnish a basis in $\mathbb C[\mathfrak g]^G$. It will be useful to construct several other bases. Let $V$ be a representation of $\mathfrak g$. Define
\begin{equation}
Q_{2k}^V : \ \mathfrak g \ni T \mapsto \mathrm{tr}_V \left( T^{2k} \right) \in \mathbb C
\end{equation}
for $k \in \mathbb N$. We have $Q_{2k}^V \in \mathbb C [\mathfrak g]^G$. Consider first the adjoint representation, $V= \mathfrak g$. By the preceding discussion there exist coefficients $\{ \alpha_i \}_{i=1}^{18}$ such that
\begin{subequations} \label{eq:Q_to_E}
\begin{align*}
Q_2^{\mathfrak g} &= \alpha_1 E_2, \\
Q_4^{\mathfrak g} &= \alpha_2 E_4 + \alpha_3 E_2^2, \\
Q_6^{\mathfrak g} &= \alpha_4 E_6 + \alpha_5 E_4 E_2 + \alpha_6 E_2^3, \tag{\ref{eq:Q_to_E}}  \\
Q_8^{\mathfrak g} &= \alpha_7 E_8 + \alpha_8 E_6 E_2 + \alpha_9 E_4^2 + \alpha_{10} E_4 E_2^2 + \alpha_{11} E_2^4, \\
Q_{10}^{\mathfrak g} &= \alpha_{12} E_{10} + \alpha_{13} E_8 E_2 + \alpha_{14} E_6 E_4 + \alpha_{15} E_6 E_2^2 + \alpha_{16} E_4^2 E_2 + \alpha_{17}E_4 E_2^3 + \alpha_{18} E_2^5.
\end{align*}
\end{subequations}
Values of coefficients $\alpha_i$ may be found by evaluating this equation on any sufficiently large set of elements of $\mathfrak h$ and solving a system of linear equations. Their exact values will be of no use for us, but by computing them allows to check that $\{ Q_{2k}^{\mathfrak g} \}_{k=1}^5$ generate the algebra $\mathbb C[\mathfrak g]^G$. By dimensionality reasons they have to be algebraically independent, so we have
\begin{equation}
\mathbb C[\mathfrak g]^G = \mathbb C[Q_2^{\mathfrak g}, Q_4^{\mathfrak g},Q_6^{\mathfrak g},Q_8^{\mathfrak g},Q_{10}^{\mathfrak g}].
\label{eq:adjoint_traces_generate}
\end{equation}

Knowing that (\ref{eq:adjoint_traces_generate}) holds, we are guaranteed that the polynomials $E_{2k}$ and~$Q_{2k}^V$ may be expressed as polynomials in $\{ Q_{2k} \}_{k=1}^5$. Coefficients of these expansion are~useful and may be derived as in the previous paragraph. Firstly,
\begin{subequations} \label{eq:sp10_E_to_adj}
\begin{align*}
E_2 = &\frac{1}{24} Q_2^{\mathfrak g}, \\
E_4 =& - \frac{1}{72} Q_4^{\mathfrak g} + \frac{1}{864} (Q_2^{\mathfrak g})^2, \\
E_6 =& \frac{1}{252} Q_6^{\mathfrak g} - \frac{31}{36288} Q_4^{\mathfrak g} Q^{\mathfrak g}_2 + \frac{13}{435456} ( Q^{\mathfrak g}_2 )^3, \tag{\ref{eq:sp10_E_to_adj}} \\
E_8 = &- \frac{1}{1104} Q_8^{\mathfrak g} + \frac{5}{23184} Q^{\mathfrak g}_6 Q^{\mathfrak g}_2 + \frac{139}{715392} (Q^{\mathfrak g}_4)^2 \\
&- \frac{2111}{60092928} Q^{\mathfrak g}_4 (Q^{\mathfrak g}_2)^2 + \frac{1115}{1442230272} (Q^{\mathfrak g}_2)^4, \\
E_{10} =& \frac{1}{5220} Q^{\mathfrak g}_{10} - \frac{11}{256128} Q^{\mathfrak g}_8 Q^{\mathfrak g}_2 - \frac{19}{175392} Q^{\mathfrak g}_6 Q^{\mathfrak g}_4 + \frac{1}{123648} Q^{\mathfrak g}_6 (Q^{\mathfrak g}_2)^2, \\
&+\frac{18799}{1161796608} (Q^{\mathfrak g}_4)^2 Q^{\mathfrak g}_2 - \frac{1931}{1549062144} Q^{\mathfrak g}_4 (Q^{\mathfrak g}_2)^3 + \frac{33449}{1672987115520} (Q^{\mathfrak g}_2)^5
\end{align*}
\end{subequations}
Now let $W$ be the set of all elements of $\Lambda^3 F$ whose contraction with the symplectic form vanishes. $W$ is an irreducible representation of $\mathfrak g$ of dimension $110$. It will be important to have an expression for $\{ Q_{2k}^{W} \}_{k=1}^5$ in terms of $\{ Q_{2k}^{\mathfrak g} \}_{k=1}^5$:
\begin{subequations} \label{eq:sp10_W_to_adj}
\begin{align*}
Q_2^{W} = & \frac{9}{4} Q_2^{\mathfrak g}, \\
Q_4^W =& - \frac{1}{2} Q_4^{\mathfrak g} + \frac{13}{96} (Q_2^{\mathfrak g})^2, \\
Q_6^W =& -\frac{11}{14} Q_6^{\mathfrak g} + \frac{55}{1008} Q_4^{\mathfrak g} Q^{\mathfrak g}_2 + \frac{365}{48384} ( Q^{\mathfrak g}_2 )^3, \tag{\ref{eq:sp10_W_to_adj}} \\
Q_8^W = & \frac{317}{46} Q_8^{\mathfrak g} - \frac{1421}{828} Q^{\mathfrak g}_6 Q^{\mathfrak g}_2 - \frac{20755}{14904} (Q^{\mathfrak g}_4)^2 + \frac{92365}{357696} Q^{\mathfrak g}_4 (Q^{\mathfrak g}_2)^2 - \frac{163975}{34338816} (Q^{\mathfrak g}_2)^4, \\
Q_{10}^W =& \frac{1623}{58} Q^{\mathfrak g}_{10} - \frac{49512}{10672} Q^{\mathfrak g}_8 Q^{\mathfrak g}_2 - \frac{5605}{348} Q^{\mathfrak g}_6 Q^{\mathfrak g}_4 + \frac{8785}{13248} Q^{\mathfrak g}_6 (Q^{\mathfrak g}_2)^2,  \\
&+\frac{2256725}{1152576} (Q^{\mathfrak g}_4)^2 Q^{\mathfrak g}_2 - \frac{2885975}{27661824} Q^{\mathfrak g}_4 (Q^{\mathfrak g}_2)^3 + \frac{3515225}{2655535104} (Q^{\mathfrak g}_2)^5
\end{align*}
\end{subequations}

\end{document}